\documentclass{article}
\pdfoutput=1
\usepackage{arxiv}
\usepackage[utf8]{inputenc} % allow utf-8 input
\usepackage[T1]{fontenc}    % use 8-bit T1 fonts
\usepackage{amsmath,amssymb,amsthm}
\usepackage{mathtools}
\usepackage{enumitem}
\usepackage{booktabs}
\usepackage{microtype}      % microtypography
\usepackage{doi}
\usepackage{array}
\usepackage{hyperref}
\usepackage{cleveref}
\usepackage{algorithm}
\usepackage{algpseudocode}
\usepackage[most]{tcolorbox}
\usepackage{tikz}
\usetikzlibrary{arrows.meta, positioning, shapes.geometric, calc, backgrounds, fit,decorations.pathreplacing}
\usepackage{mathtools}

\definecolor{boxcolor1}{RGB}{255,254,230}	% yellow 	#FFFEE6
\definecolor{boxcolor2}{RGB}{252,227,215}	% red		#fce3d7
\definecolor{boxcolor3}{RGB}{225,240,227}	% green 	#E1F0E3
\definecolor{boxcolor4}{RGB}{227,235,246}	% blue		#E3EBF6

\newtcolorbox{definitionbox}[1][]{
	breakable,
	before skip=1.5\topskip,
	after skip=1.5\topskip,
	left skip=0pt,
	right skip=0pt,
	left=4pt,
	right=4pt,
	top=2pt,
	bottom=2pt,
	lefttitle=4pt,
	righttitle=4pt,
	toptitle=2pt,
	bottomtitle=2pt,
	sharp corners,
	boxrule=0pt,
	titlerule=.4pt,
	colback=boxcolor1,
	colbacktitle=boxcolor1,
	coltitle=black,
	colframe=darkgray,
	coltext=black,
	fonttitle=\bfseries,
	title=Definition~\thetcbcounter:,
	#1
}

\newtcolorbox{theorembox}[1][]{
	breakable,
	before skip=1.5\topskip,
	after skip=1.5\topskip,
	left skip=0pt,
	right skip=0pt,
	left=4pt,
	right=4pt,
	top=2pt,
	bottom=2pt,
	lefttitle=4pt,
	righttitle=4pt,
	toptitle=2pt,
	bottomtitle=2pt,
	sharp corners,
	boxrule=0pt,
	titlerule=.4pt,
	colback=boxcolor2,
	colbacktitle=boxcolor2,
	coltitle=black,
	colframe=darkgray,
	coltext=black,
	fonttitle=\bfseries,
	title=Theorem~\thetcbcounter:,
	#1
}

\newtcolorbox{lemmabox}[1][]{
	breakable,
	before skip=1.5\topskip,
	after skip=1.5\topskip,
	left skip=0pt,
	right skip=0pt,
	left=4pt,
	right=4pt,
	top=2pt,
	bottom=2pt,
	lefttitle=4pt,
	righttitle=4pt,
	toptitle=2pt,
	bottomtitle=2pt,
	sharp corners,
	boxrule=0pt,
	titlerule=.4pt,
	colback=boxcolor3,
	colbacktitle=boxcolor3,
	coltitle=black,
	colframe=darkgray,
	coltext=black,
	fonttitle=\bfseries,
	title=Lemma~\thetcbcounter:,
	#1
}

\newtcolorbox{propositionbox}[1][]{
	breakable,
	before skip=1.5\topskip,
	after skip=1.5\topskip,
	left skip=0pt,
	right skip=0pt,
	left=4pt,
	right=4pt,
	top=2pt,
	bottom=2pt,
	lefttitle=4pt,
	righttitle=4pt,
	toptitle=2pt,
	bottomtitle=2pt,
	sharp corners,
	boxrule=0pt,
	titlerule=.4pt,
	colback=boxcolor3,
	colbacktitle=boxcolor3,
	coltitle=black,
	colframe=darkgray,
	coltext=black,
	fonttitle=\bfseries,
	title=Proposition~\thetcbcounter:,
	#1
}

\newtcolorbox{corollarybox}[1][]{
	breakable,
	before skip=1.5\topskip,
	after skip=1.5\topskip,
	left skip=0pt,
	right skip=0pt,
	left=4pt,
	right=4pt,
	top=2pt,
	bottom=2pt,
	lefttitle=4pt,
	righttitle=4pt,
	toptitle=2pt,
	bottomtitle=2pt,
	sharp corners,
	boxrule=0pt,
	titlerule=.4pt,
	colback=boxcolor4,
	colbacktitle=boxcolor4,
	coltitle=black,
	colframe=darkgray,
	coltext=black,
	fonttitle=\bfseries,
	title=Corollary~\thetcbcounter:,
	#1
}

\newtheoremstyle{mystyle}% name
{3pt}% Space above
{3pt}% Space below
{\itshape}% Body font
{}% Indent amount
{\bfseries}% Theorem head font
{.}% Punctuation after theorem head
{.5em}% Space after theorem head
{}% Theorem head spec (can be left empty, meaning ‘normal’)

\theoremstyle{mystyle}

\newtheorem{example}{Example}[section]

\newtheorem{definition}{Definition}[section] % Ensure it is part of the section enumeration

\renewenvironment{definition}{%
	\refstepcounter{definition}% Increment the definition counter
	\begin{definitionbox}[title=Definition~\thedefinition]% Use the current counter value in the title
	}{%
	\end{definitionbox}
}

\newtheorem{theorem}{Theorem}[section] % Ensure it is part of the section enumeration

\renewenvironment{theorem}{%
	\refstepcounter{theorem}% Increment the definition counter
	\begin{theorembox}[title=Theorem~\thetheorem]% Use the current counter value in the title
	}{%
	\end{theorembox}
}

\newtheorem{lemma}{Lemma}[section] % Ensure it is part of the section enumeration

\renewenvironment{lemma}{%
	\refstepcounter{lemma}% Increment the definition counter
	\begin{lemmabox}[title=Lemma~\thelemma]% Use the current counter value in the title
	}{%
	\end{lemmabox}
}

\newtheorem{proposition}{Proposition}[section] % Ensure it is part of the section enumeration

\renewenvironment{proposition}{%
	\refstepcounter{proposition}% Increment the definition counter
	\begin{propositionbox}[title=Proposition~\theproposition]% Use the current counter value in the title
	}{%
	\end{propositionbox}
}

\newtheorem{corollary}{Corollary}[section] % Ensure it is part of the section enumeration

\renewenvironment{corollary}{%
	\refstepcounter{corollary}% Increment the definition counter
	\begin{corollarybox}[title=Corollary~\thecorollary]% Use the current counter value in the title
	}{%
	\end{corollarybox}
}

\newtheorem*{remark}{Remark}
\renewenvironment{proof}{{\noindent\bfseries Proof:}}{\qed}

\newtheoremstyle{claimstyle} % Define a new theorem style
{3pt} % Space above
{3pt} % Space below
{} % Body font (default upright, non-italic)
{} % Indent amount
{\bfseries} % Head font (bold)
{.} % Punctuation after head
{0.5em} % Space after head
{} % Head spec

\theoremstyle{claimstyle}
\newtheoremstyle{postulatestyle} % Define a new theorem style
{3pt} % Space above
{3pt} % Space below
{} % Body font (default upright, non-italic)
{} % Indent amount
{\bfseries} % Head font (bold)
{.} % Punctuation after head
{0.5em} % Space after head
{} % Head spec

\theoremstyle{postulatestyle}
\newtheoremstyle{questionstyle} % Define a new theorem style
{3pt} % Space above
{3pt} % Space below
{} % Body font (default upright, non-italic)
{} % Indent amount
{\bfseries} % Head font (bold)
{.} % Punctuation after head
{0.5em} % Space after head
{} % Head spec

\theoremstyle{questionstyle}
\newtheorem{question}{Question} % Unnumbered claim environment

\newtheoremstyle{assumptionstyle} % Define a new theorem style
{3pt} % Space above
{3pt} % Space below
{} % Body font (default upright, non-italic)
{} % Indent amount
{\bfseries} % Head font (bold)
{.} % Punctuation after head
{0.5em} % Space after head
{} % Head spec

\theoremstyle{assumptionstyle}
\newtheorem{assumption}{Assumption} % Unnumbered claim environment

\newtheoremstyle{conjecturestyle} % Define a new theorem style
{3pt} % Space above
{3pt} % Space below
{} % Body font (default upright, non-italic)
{} % Indent amount
{\bfseries} % Head font (bold)
{.} % Punctuation after head
{0.5em} % Space after head
{} % Head spec

\theoremstyle{conjecturestyle}
\newtheorem{conjecture}{Conjecture} % Unnumbered claim environment

\newtheoremstyle{critiquestyle} % Define a new theorem style
{3pt} % Space above
{3pt} % Space below
{\itshape} % Body font (default upright, non-italic)
{} % Indent amount
{\bfseries} % Head font (bold)
{.} % Punctuation after head
{0.5em} % Space after head
{} % Head spec

\theoremstyle{critiquestyle}
\newtheoremstyle{responsestyle} % Define a new theorem style
{3pt} % Space above
{3pt} % Space below
{\itshape} % Body font (default upright, non-italic)
{} % Indent amount
{\bfseries} % Head font (bold)
{.} % Punctuation after head
{0.5em} % Space after head
{} % Head spec

\theoremstyle{responsestyle}
\newcommand{\Actions}{\mathcal{A}}

\newcommand{\Features}{\mathcal{F}}
\newcommand{\Params}{\Theta}
\newcommand{\owners}{O}
\newcommand{\status}{\sigma}
\newcommand{\locked}{L}
\newcommand{\primturn}{T}
\newcommand{\valmat}{\mathbf{V}}
\newcommand{\appear}{\mathbf{a}}

\DeclareMathOperator{\Range}{Range}

\DeclareMathOperator{\clip}{clip}

\newcommand{\PI}{\texttt{PI}}
\newcommand{\SC}{\texttt{SC}}
\newcommand{\AD}{\texttt{AD}}
\newcommand{\BS}{\texttt{BS}}

\title{Formal specification and behavioral simulation of the holiday gift exchange game}

\author{ \href{https://orcid.org/0009-0004-7957-1806}{\includegraphics[scale=0.06]{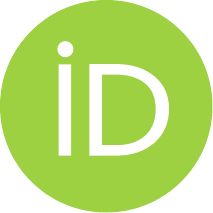}\hspace{1mm}Daniel Quigley} \\
	Center for Possible Minds\\
	Indiana University Bloomington\\
	Bloomington, IN 47408 \\
	\texttt{dgquigle@iu.edu} \\
}

\renewcommand{\headeright}{Gift exchange game}
\renewcommand{\undertitle}{Preprint}
\renewcommand{\shorttitle}{Gift exchange game}
\colorlet{linecol}{cyan!90!blue!90!black}
\colorlet{fillcol}{cyan!60!blue!80!black!40}
\hypersetup{
pdftitle={Formal specification and behavioral simulation of the holiday gift exchange game},
pdfsubject={gift exchange game, formalization, simulation},
pdfauthor={Daniel Quigley},
pdfkeywords={gift exchange game, formalization, simulation},
}

\begin{document}
\maketitle

\begin{abstract}
	The holiday gift exchange game is a familiar social institution with nontrivial strategic structure. We provide a formal treatment of the game's mechanics, defining the state space, action sets, and the recursive structure of stealing chains; we prove termination and derive an algorithm for counting distinct game trajectories, which grow far faster than the space of possible final allocations. Beyond the base mechanics, we introduce a decorated model incorporating partial information, social costs, and adaptive strategies grounded in discrete choice theory and the frustration-aggression literature. A full factorial simulation of 240,000 games yields three findings of note: implicit social costs are the dominant regulator of aggression, reducing stealing by 27--48\% and outweighing both uncertainty and strategic sophistication; partial information, contrary to expectation, slightly increases stealing through asymmetric uncertainty; correlated valuations amplify every behavioral effect, so that consensus about gift quality, rather than the features themselves, is what intensifies competition. The first-player advantage is robust across all conditions.

\end{abstract}

\keywords{gift exchange game \and formalization \and simulation \and behavior \and strategy}

\section{Introduction}

The gift exchange game is played at gatherings worldwide under various names: White Elephant; Yankee Swap; Dirty Santa. Despite its ubiquity, the game has received little formal treatment. The rules are simple enough to explain in minutes, yet generate strategic complexity, social friction, and, when things go well, entertainment.

The basic structure: $n$ players each contribute a wrapped gift. Players act in sequence. On one's turn, one may either open a wrapped gift or steal an already-opened gift from another player. Stealing displaces the victim, who must then act (open or steal, but not steal back the just-taken gift). This creates chains of thefts terminating when someone opens. After all primary turns complete, the first player receives a final swap opportunity, compensating for the initial absence of steal targets.

This paper pursues the following objectives. First, we provide a complete formal specification of the game mechanics, appropriately so  to admit proofs (e.g., termination, end-state bijectivity) and flexible enough to parameterize common rule variants (per-round and lifetime steal limits). Second, we construct a behavioral model incorporating features absent from purely strategic analysis: uncertainty about gift values; social costs of stealing; emotional dynamics (frustration, satisfaction); appearance-based selection biases. Third, we simulate the decorated game systematically to isolate feature effects and their interactions.

This paper is in direct conversation with \cite{apagodu2017,applegate2009giftexchangeproblem}, and we answer the same question: determining the number of ways that a game can be played out for given values of number of players and number of steals. Where we differ is in our treatment of the formalization of the base game rules, and in our accounting of variations to the base game; in the former, we provide an exhaustive formal description of the game and problem space, and in the latter, we account explicitly for trajectories of played out games sensitive to differences in final states. Furthermore, we assemble computational proxies of human actors' cognitive aspects in competition games. The crescendo of our work is an analysis of simulations across multiple game trajectories sensitive to player motiviations in addition to calculating possible number of games.

Our behavioral model draws on discrete choice theory \cite{manski2001daniel,train2009,mcfadden2001economic}, social preference models \cite{fehr2014social, charness2002understanding}, and the literature on emotions in economic decisions \cite{loewenstein2000emotions,lerner2015emotion,pfister2008multiplicity}. The frustration dynamics we imnplement here adapt the frustration-aggression hypothesis from social psychology \cite{berkowitz1989frustration,kruglanski2023frustration}.

Gift exchanges appear occasionally in recreational mathematics and puzzle columns, but we are unaware of any prior formal treatment as given here. The game is mentioned in popular accounts of holiday traditions and occasionally surfaces in discussions of ``fun'' mechanism design, but lacking in mathematical content.

This paper is arranged into three general parts. The first material contains Sections~\ref{sec:basegame} and \ref{sec:decoratedgame} is the formalization proper, and defines the base game and the decorated game: players, gifts, actions, state transitions, stealing chains, proves termination, and introduces parameterized stealing limits in the former, while the latter provides extentions to the base game. Section~\ref{sec:experimentframework} walks through the experiment design for the simulation of the gift exchange game, implementing the materials from the formalization, while Section~\ref{sec:experimenrresults} analyzes the results of the simulation experiment. Section~\ref{sec:trajectory} works through the combinatorics of how many games are possible given our suite of rules on stealing. Section~\ref{sec:future} discusses future directions, including coalition formation and mechanism design questions. Section~\ref{sec:conclusion} concludes. The appendices provide additional material: notation reference, reports of complexities, and tabulated results.

\subsection{Disclaimer: a note on scope}

Let us state forthwith our specific study: the variant of the exchange game in which stealing displaces the victim (who then takes a turn) rather than triggering a swap. Other variants exist and could be analyzed using the same framework.

We do not claim that players consciously solve optimization problems, or that the behavioral parameters we estimate are universal or truly modeling human behavior. The model is a tool for  structured reasoning about the game's dynamics, and we invite a larger scale psychological theory in future work. Our goal is to illuminate the strategic and social structure of a familiar institution.

\subsection{Overview of gift exchange game}

All players\footnote{Players may be composite (several people as an ensemble player together).} bring a wrapped gift\footnote{Gifts may be composite (several items as an ensemble together). Additional limiters on what kinds of gifts are used may be imposed (themed gifts, gag gifts, all value under some amount, etc.).} and places in a central location. Players determine turn order in some appropriate fashion. When it is a player's turn, that player may either:

\begin{itemize}
	\item open a new gift;
	\item steal an already-opened gift from another player.
\end{itemize}

If a player chooses to open a new gift, then that gift is shown to all players, and that player owns the gift for now. Alternatively, if that player chooses to steal an already-opened gift from another player, then that other player must hand over their gift. The person from whom the first player stole the gift now gets to go again immediately. They can either:

\begin{itemize}
	\item open a new gift;
	\item steal an already-opened gift from another player, but may not steal back the gift just stolen.
\end{itemize}

This can set off a chain reaction: if player A steals from player B, player B might steal from player C, and so on; the chain ends when someone decides to open a new gift instead of stealing.

When a gift gets stolen, it is locked for the rest of that chain. This prevents endless back-and-forth: if player A steals player B's gift, then player B cannot immediately steal it back from player A, and must instead steal from someone else or open a new gift. Once the chain ends (someone opens a gift), all gifts become unlocked, and are considered fair game again in future turns.

After everyone has taken their turn, the game is not quite over: the player who went first at the beginning of the game gets one last chance to swap; since they went first (when there was nothing to steal), they get to survey all the opened gifts and swap with anyone if they want to. This is an optional action: they can instead choose to keep what they have. This final swap is a strict trade, and so does not trigger a chain, and the game ends.

Some rules of the base game may be amended for a different play experience:

\begin{itemize}
	\item in the base game, a gift can only be stolen once per round (per turn); an alternative version is to allow gifts to be stolen multiple times in the same round;
	\item in the base game, there is no limit to the number of times a gift can be stolen across the whole game; alternatively, after a gift has been stolen some number of times total (across all rounds), it is locked and cannot be stolen again at all, and the current holder keeps it safe;
\end{itemize}

The gift exchange game is supposed to be fun, a little chaotic, and maybe slightly frustrating in a good-natured way; please do no take it too seriously and be \emph{that} guy at the gaming table.

\subsection{Some history}

Much of this material follows from \cite{brown2014whiteelephantyankee,brown2014whiteelephantidoms}; please refer to them for greater detail and additional sources.

The tradition of exchanging gifts in a game-like format has multiple regional names across the United States, including ``White Elephant'', ``Yankee Swap'', and ``Dirty Santa'', along with more colorful variations such as ``Thieving Elves'', and ``Cutthroat Christmas''. While the modern practice centers on exchanging either practical items or deliberately impractical gifts in an entertaining format, its origins can be traced to early American swap parties’ documented as far back as 1901 in \textit{The Hartford Herald}.

The term ``Yankee Swap'', appearing in Walt Whitman’s 1855 preface to \textit{Leaves of Grass}, deliberately calls out the reputation of Yankees as shrewd traders. The ``White Elephant'' nomenclature emerged later, first documented in 1907, deriving from a supposed Southeast Asian practice of kings gifting rare white elephants to courtiers, though historians note this origin story lacks historical verification, as white elephants were highly valued in Thai culture rather than viewed as burdens.

The earliest description of these gift exchanges appears in 1901, describing a swap party: participants brought four or five little neatly wrapped and tied bundles, with the directive that ‘the more misleading in shape as to contents the better’. The practice quickly gained popularity, with detailed instructions appearing in various publications including Clara E. Laughlin’s \textit{The Complete Hostess} (1906) and Madame Merri’s \textit{The Art of Entertaining for All Occasions} (1913).

\section{Base game structure}\label{sec:basegame}

Throughout, we adopt the following notational conventions:

\begin{itemize}
	\item $[n] := \{1, 2, \ldots, n\}$ index set for $n$ elements;
	\item $\bot$ distinguished ``null'' or ``empty'' element (\emph{not} to be confused with the empty set $\emptyset$);
	\item $\clip_{[a,b]}(x) := \max(a, \min(b, x))$ clips $x$ to the interval $[a,b]$;
	\item $\mathcal{P}(S)$ the power set of $S$;
	\item $|S|$ the cardinality of set $S$.
\end{itemize}

Let us walk through the components of the base game, and provision a technical foramalization throughout.

\subsection{Players and gifts}

\begin{definition}[Player set]
	Let $P = \{P_1, P_2, \ldots, P_n\}$ be the set of $n$ players, indexed by their position in the turn order; player $P_1$ acts first in round 1, $P_2$ in round 2, and so forth.
\end{definition}

The player $P_n$ is a label for an actor in the game. Though singular, the ``player'' may well be more than one individual; the constraints remain the same, in that $P_n$ provisions a gift index labeled $n$.

\begin{definition}[Gift set]
	Let $G = \{G_1, G_2, \ldots, G_n\}$ be the set of $n$ gifts; assume $|P| = |G| = n$, such that each player receives exactly one gift by game's end.
\end{definition}

Note that what constitutes a ``gift'' could be a singular object, or plural for that same index. We define the composite of such gifts that share the index $n$.

\begin{definition}[Composite gift]\label{def:composite}
	A \emph{composite gift} $G_j$ consists of $k \geq 1$ subcomponents:
	\[
	G_j = (g_1, g_2, \ldots, g_k)
	\]
	where each $g_i$ is an individual item. 
\end{definition}

In this way, composite gifts are atomic for game purposes: they cannot be divided; all components are revealed simultaneously upon opening. When $k = 1$, we simply have a standard gift $G_j = (g_1)$.

\begin{example}[Composite gift]
	Consider a composite gift $G_7 = (g_1, g_2, g_3)$ containing:
	\begin{itemize}
		\item $g_1$: a gardening book;
		\item $g_2$: a recorder;
		\item $g_3$: a watercolor painting kit.
	\end{itemize}
	This gift is treated as indivisible; when stolen, $G_7$ transfers as a unit, and when opened, all three items are revealed together.
\end{example}

\subsection{Gift status}

Gifts are in one of two states: open and closed, which we cast here as functions.

\begin{definition}[Status function]
	The \emph{status function} $\status: G \rightarrow \{\textsc{wrapped}, \textsc{opened}\}$ tracks whether each gift has been opened:
	\[
	\status(G_j) = \begin{cases}
		\textsc{wrapped} & \text{if } G_j \text{ has not been opened} \\
		\textsc{opened} & \text{if } G_j \text{ has been opened}
	\end{cases}
	\]
\end{definition}

Initially, $\status(G_j) = \textsc{wrapped}$ for all $G_j \in G$. Note that ``open'' and ``closed'' refer to states of the gift for the purpose of the game; whether the gift is literally closed or not is not necessary.

\begin{definition}[Wrapped and opened gift sets]
	At any point in the game:
	\begin{align}
		W &:= \{G_j \in G : \status(G_j) = \textsc{wrapped}\} \\
		U &:= \{G_j \in G : \status(G_j) = \textsc{opened}\}
	\end{align}
\end{definition}

Note that  gift sets behave as: $W \cup U = G$ and $W \cap U = \emptyset$.

\subsection{Ownership}

Let us now keep track of the possession of game objects.

\begin{definition}[Ownership function]\label{def:ownership}
	The \emph{ownership function} $\owners: P \rightarrow G \cup \{\bot\}$ assigns each player to either a gift or to the null element $\bot$, indicating no current ownership:
	\[
	\owners(P_i) = \begin{cases}
		G_j & \text{if } P_i \text{ currently owns } G_j \\
		\bot & \text{if } P_i \text{ owns no gift}
	\end{cases}
	\]
	
\end{definition}

Initially, since non player owns a gift, we have that $\owners(P_i) = \bot$ for all $P_i \in P$.

\begin{remark}
	A note on our choice of notation. We use $\bot$ rather than $\emptyset$ to avoid confusion: $\bot$ is a distinguished element in the codomain; $\emptyset$ the empty set. We need this distinction so we may define injectivity and surjectivity of $\owners$.
\end{remark}

For a gift $G_j \in G$, we define:
\[
\owners^{-1}(G_j) := \{P_i \in P : \owners(P_i) = G_j\}
\]

At any valid game state, $|\owners^{-1}(G_j)| \leq 1$ for all $G_j$.

\begin{definition}[Currently owned gifts]
	The set of currently owned gifts is:
	\[
	\Range(\owners) := \{G_j \in G : \exists P_i \in P \text{ such that } \owners(P_i) = G_j\}
	\]
\end{definition}

\subsection{Game state}

The game proceeds sequentially in discrete steps. At each step, we may check the state of gifts, players, and ownership.

\begin{definition}[Game state]\label{def:gamestate}
	The \emph{game state} at round $k$ is the tuple:
	\[
	S_k = \langle \owners, \status, \locked_k, k, \primturn \rangle
	\]
	where:
	\begin{itemize}
		\item $\owners: P \rightarrow G \cup \{\bot\}$ current ownership function;
		\item $\status: G \rightarrow \{\textsc{wrapped}, \textsc{opened}\}$ current status function;
		\item $\locked_k \subseteq G$ set of \emph{locked} gifts (cannot be stolen in the current round);
		\item $k \in [n]$ current round number;
		\item $\primturn \subseteq P$ set of players who have taken their \emph{primary turn} in the current round.
	\end{itemize}
\end{definition}

\begin{definition}[Initial state]
	The initial game state is the special state in which
	\[
	S_0 = \langle \owners_0, \status_0, \emptyset, 1, \emptyset \rangle
	\]
	where $\owners_0(P_i) = \bot$ for all $P_i \in P$ and $\status_0(G_j) = \textsc{wrapped}$ for all $G_j \in G$.
\end{definition}

\subsection{Actions}

We now inventory the actions a player may take: opening a gift and stealing a gift. The result is essentially a discrete representation in the recursive branching shown in Figure~\ref{fig:gametree} that occurs during a single player's turn. 

When player $P_k$'s turn begins, they face a binary choice represented by the 
two branches descending from the root node: the left branch leads to ``Open'', in which the player unwraps a new gift, and their turn ends immediately, and the chain 
length is $\ell = 0$, which guarentees that guarantees every chain eventually terminates.

The right branch leads to ``Steal,'' which triggers the chain mechanism highlighted by the dashed yellow boundary. When $P_k$ steals from victim $P_m$, $P_m$ becomes ``displaced'' and must now make their own decision. We show this with an arrow labeled ``displaced'' connecting the Steal action to the victim's decision node.

The displaced player $P_m$ faces the same binary choice, subject to \emph{chain-locking}: the gift that $P_k$ just stole is temporarily frozen, insofar as $P_m$ cannot steal it back; this prevents infinite cycles and ensures the chain progresses toward termination. If $P_m$ opens, the chain ends with length $\ell = 1$; if $P_m$ steals from a third player $P_j$, then that player becomes displaced, and the process continues. Each additional steal increments $\ell$ by one. 

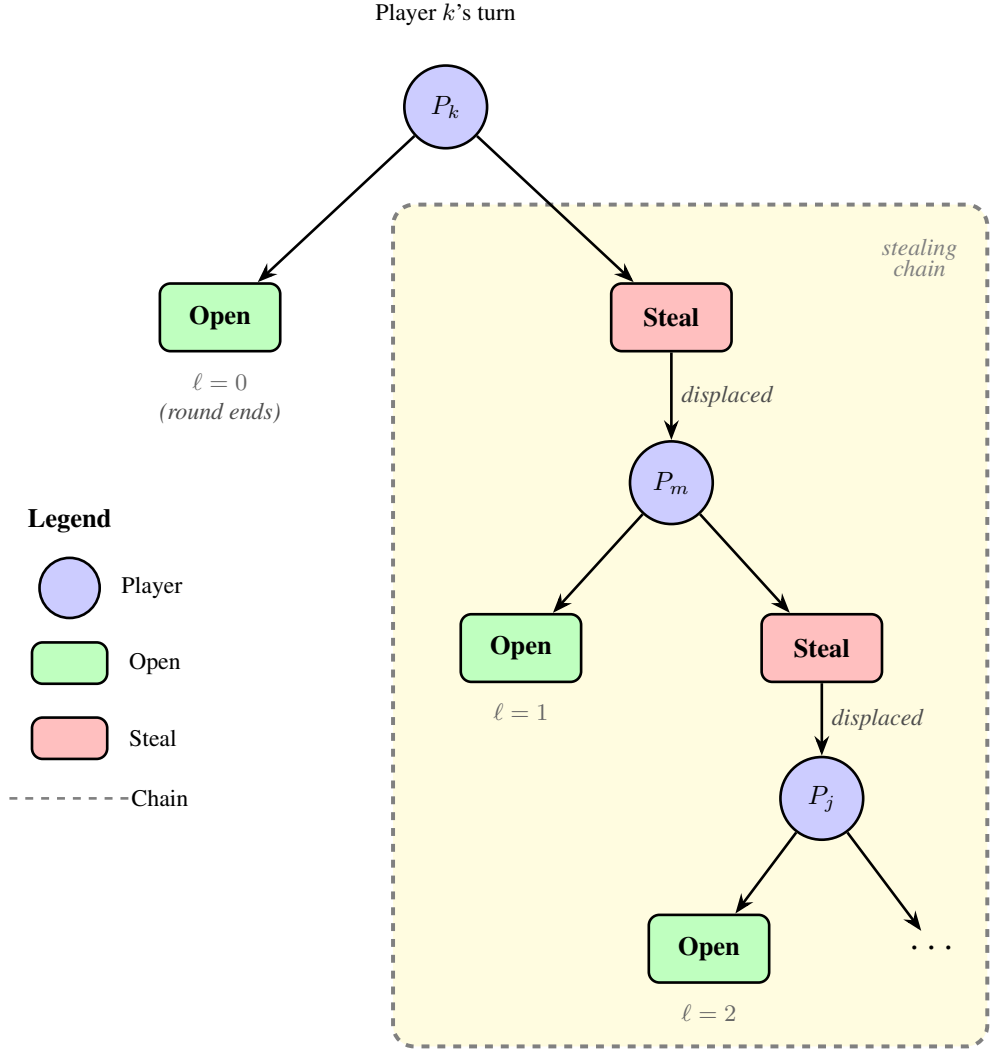
\begin{figure}[H]
	\centering
	\begin{tikzpicture}[
		% Node styles
		player/.style={
			circle,
			draw=black,
			line width=1pt,
			fill=blue!20,
			minimum size=1.1cm,
			font=\normalsize
		},
		open/.style={
			rectangle,
			draw=black,
			line width=1pt,
			fill=green!25,
			rounded corners=4pt,
			minimum width=1.6cm,
			minimum height=0.9cm,
			font=\normalsize\bfseries
		},
		steal/.style={
			rectangle,
			draw=black,
			line width=1pt,
			fill=red!25,
			rounded corners=4pt,
			minimum width=1.6cm,
			minimum height=0.9cm,
			font=\normalsize\bfseries
		},
		% Arrow style
		arr/.style={
			->,
			>=Stealth,
			line width=1pt
		},
		% Label style
		lbl/.style={
			font=\small\itshape,
			text=black!70
		},
		chainlen/.style={
			font=\small\itshape,
			text=black!60
		}
		]

		% CHAIN BOUNDARY (drawn first, behind everything)
		
		\begin{scope}[on background layer]
			\fill[yellow!15, rounded corners=8pt] 
			(2.8, -1.3) rectangle (10.7, -12.5);
			\draw[gray, dashed, line width=1.5pt, rounded corners=8pt] 
			(2.8, -1.3) rectangle (10.7, -12.5);
		\end{scope}
		
		% Chain label
		\node[font=\small\itshape, text=gray, align=center] at (9.8, -2.0) {stealing\\[-2pt]chain};

		% LEVEL 0: Primary player
		
		\node[player] (Pk) at (3.5, 0) {$P_k$};
		\node[font=\small, above=0.4cm of Pk] {Player $k$'s turn};

		% LEVEL 1: Open or Steal choice
		
		\node[open] (open1) at (0.5, -2.8) {Open};
		\node[steal] (steal1) at (6.5, -2.8) {Steal};
		
		% Arrows from Pk
		\draw[arr] (Pk) -- (open1);
		\draw[arr] (Pk) -- (steal1);
		
		% Chain length annotation for immediate open
		\node[chainlen, below=0.15cm of open1] {$\ell = 0$};
		\node[lbl, below=0.55cm of open1] {(round ends)};

		% LEVEL 2: First victim
		
		\node[player] (Pm) at (6.5, -5.0) {$P_m$};
		\draw[arr] (steal1) -- node[right, lbl] {displaced} (Pm);

		% LEVEL 3: Victim's choice
		
		\node[open] (open2) at (4.5, -7.2) {Open};
		\node[steal] (steal2) at (8.5, -7.2) {Steal};
		
		\draw[arr] (Pm) -- (open2);
		\draw[arr] (Pm) -- (steal2);
		
		% Chain length annotation
		\node[chainlen, below=0.15cm of open2] {$\ell = 1$};

		% LEVEL 4: Second victim
		
		\node[player] (Pj) at (8.5, -9.2) {$P_j$};
		\draw[arr] (steal2) -- node[right, lbl] {displaced} (Pj);

		% LEVEL 5: Second victim's choice
		
		\node[open] (open3) at (7, -11.2) {Open};
		\node[font=\Large\bfseries] (dots) at (10, -11.2) {$\cdots$};
		
		\draw[arr] (Pj) -- (open3);
		\draw[arr] (Pj) -- (dots);
		
		% Chain length annotation
		\node[chainlen, below=0.15cm of open3] {$\ell = 2$};

		% LEGEND
		
		\node[font=\normalsize\bfseries] at (-1.5, -5.5) {Legend};
		
		\node[player, minimum size=0.8cm] (leg1) at (-1.5, -6.4) {};
		\node[right=0.15cm of leg1, font=\small] {Player};
		
		\node[open, minimum width=1cm, minimum height=0.55cm] (leg2) at (-1.5, -7.4) {};
		\node[right=0.15cm of leg2, font=\small] {Open};
		
		\node[steal, minimum width=1cm, minimum height=0.55cm] (leg3) at (-1.5, -8.4) {};
		\node[right=0.15cm of leg3, font=\small] {Steal};
		
		\draw[gray, dashed, line width=1pt] (-2.3, -9.2) -- (-0.7, -9.2);
		\node[right=-0.1cm, font=\small] at (-0.7, -9.2) {Chain};
		
	\end{tikzpicture}
	\caption{Branching representation of evolving game state. Every player decides: open (escape the chain) or steal (extend it). The dashed yellow region demarcates the ``stealing chain'', everything that unfolds after an initial theft. Every decision node has exactly two children, which obfuscates the combinatorial explosion that results from recursion. When $P_k$ steals from $P_m$, the ``displaced'' label on the connecting arrow shows that $P_m$ inherits the decision, and the  victim becomes the new actor, facing the same binary choice. The chain length $\ell$ labels at each ``Open'' terminal show the depth reached before escape.}
	\label{fig:gametree}
\end{figure}

\begin{definition}[Action set]
	The \emph{action set} available to player $P_i$ at state $S_k$ is:
	\[
	\Actions(P_i, S_k) = \{\textsc{Open}(G_j) : G_j \in W\} \cup \{\textsc{Steal}(P_m) : P_m \in \mathcal{T}(P_i, S_k)\}
	\]
	where $W$ is the set of wrapped gifts and $\mathcal{T}(P_i, S_k)$ is the set of valid steal targets:
	\[
	\mathcal{T}(P_i, S_k) := \{P_m \in P : P_m \neq P_i,\, \owners(P_m) \neq \bot,\, \owners(P_m) \notin \locked_k\}
	\]
\end{definition}

\begin{definition}[Open action]\label{def:open}
	The action $\textsc{Open}(G_j)$ by player $P_i$, where $\status(G_j) = \textsc{wrapped}$, updates the state as follows:
	\begin{enumerate}
		\item $\owners(P_i) \leftarrow G_j$
		\item $\status(G_j) \leftarrow \textsc{opened}$
	\end{enumerate}
\end{definition}

Importantly, opening terminates the current action sequence, and the player $P_{n+1}$ takes their turn.

\begin{definition}[Steal action]\label{def:steal}
	The action $\textsc{Steal}(P_m)$ by player $P_i$, where $P_m \in \mathcal{T}(P_i, S_k)$, updates the state as follows; let $G_j = \owners(P_m)$:
	\begin{enumerate}
		\item $\owners(P_i) \leftarrow G_j$;
		\item $\owners(P_m) \leftarrow \bot$;
		\item $\locked_k \leftarrow \locked_k \cup \{G_j\}$;
		\item $P_m$ becomes \emph{displaced}, and must immediately take an action.
	\end{enumerate}
\end{definition}

\subsection{Turns and rounds}

Openings and stealings all constitute available actions in a turn.

\begin{definition}[Primary turn]
	A \emph{primary turn} is the initial action taken by a player during their designated round. In round $R_k$, player $P_k$ takes their primary turn.
\end{definition}

Upon completion, $\primturn \leftarrow \primturn \cup \{P_k\}$.

\begin{definition}[Turn]
	A \emph{turn} (now note without qualifier) refers to any action taken by a player, including:
	\begin{enumerate}
		\item a primary turn;
		\item an action taken after being displaced during a stealing chain.
	\end{enumerate}
\end{definition}

\begin{definition}[Round]
	\emph{Round} $R_k$ consists of all actions initiated by player $P_k$'s primary turn, including any subsequent stealing chain; a round $R_k$ concludes when the final displaced player opens a gift (or no legal steal exists).
\end{definition}
At the conclusion of round $R_k$:
\begin{enumerate}
	\item $\locked_k \leftarrow \emptyset$ (all gifts unlocked);
	\item $\primturn \leftarrow \emptyset$ (reset primary turn tracking);
	\item $k \leftarrow k + 1$.
\end{enumerate}

\subsection{Stealing chains}

When a player steals, the displaced victim must respond, potentially triggering a cascade of steals. We formalize this as a \emph{stealing chain}, and specify the constraints that ensure termination. We must, therefore, spend some time walking through the elements of the stealing action and subsequent chain. Let us first track the status of stealing.

The game includes two configurable limit parameters:
\begin{itemize}
	\item $\ell_{\text{round}} \in \mathbb{Z}_{\geq 0}$ is the maximum number of times a gift can be 
	stolen per round (0 denotes unlimited);
	\item $\ell_{\text{total}} \in \mathbb{Z}_{\geq 0}$ is the maximum number of times a gift can be 
	stolen across the entire game (0 denotes unlimited)
\end{itemize}

The standard gift exchange rules correspond to $(\ell_{\text{round}}, \ell_{\text{total}}) = (1, 0)$; this is entirely a changeable parameter, and has consequences for the number of possible games. We assume that the base game has as stated above: $(\ell_{\text{round}}, \ell_{\text{total}}) = (1, 0)$.

The game state, then, is extended with three tracking components:
\begin{itemize}
	\item $C_k \subseteq G$ are \emph{chain-locked gifts}: gifts stolen in the current stealing chain (cleared when chain terminates);
	\item $R_k: G \to \mathbb{Z}_{\geq 0}$: round steal counts of the form $R_k(G_j)$, 
	the number of times $G_j$ has been stolen in round $k$;
	\item $T: G \to \mathbb{Z}_{\geq 0}$ are total steal counts of the form $T(G_j)$, 
	the number of times $G_j$ has been stolen across all rounds
\end{itemize}

Initially, the steal tracking state is $C_k = \emptyset$, $R_k(G_j) = 0$, and $T(G_j) = 0$ for all $G_j \in G$.

\begin{definition}[Stealability predicate]
	Gift $G_j$ is \emph{stealable}, written $\textsc{Stealable}(G_j)$, if and only if  all of the following hold:
	\begin{enumerate}
		\item $G_j \notin C_k$ \hfill  chain constraint (mandatory)
		\item $\ell_{\text{round}} = 0$ or $R_k(G_j) < \ell_{\text{round}}$ \hfill Round constraint
		\item $\ell_{\text{total}} = 0$ or $T(G_j) < \ell_{\text{total}}$ \hfill total constraint
	\end{enumerate}
\end{definition}

\begin{remark}[Chain locking is mandatory]
	Chain locking ensures each gift changes  hands at most once per stealing chain; as such, the chain constraint is \emph{not} configurable, and is required to prevent infinite loops. Without it, player $P_A$ could steal from $P_B$, then $P_B$ could immediately steal back from $P_A$, and so on indefinitely! 
\end{remark}

\begin{definition}[Valid steal targets]\label{def:valid-targets}
	The set of valid steal targets for player $P_i$ at state $S_k$ is:
	\[
	\mathcal{T}(P_i, S_k) := \{P_m \in P : P_m \neq P_i,\, O(P_m) \neq \bot,\, \textsc{Stealable}(O(P_m))\}
	\]
\end{definition}

Now with our states set, let us walk through stealing chains proper.

\begin{definition}[Stealing chain]\label{def:chain}
	A \emph{stealing chain} is a sequence $\mathcal{C} = (P_{i_1}, P_{i_2}, \ldots, P_{i_m})$ 
	of players where:
	\begin{enumerate}
		\item $P_{i_1}$ initiates the chain with $\textsc{Steal}(P_{i_2})$;
		\item for each $j \in \{2, \ldots, m-1\}$, player $P_{i_j}$ was displaced by 
		$P_{i_{j-1}}$ and chose $\textsc{Steal}(P_{i_{j+1}})$;
		\item $P_{i_m}$ terminates the chain by executing $\textsc{Open}(G)$ for some $G \in W$.
	\end{enumerate}
	
\end{definition}

The \emph{length} of chain $\mathcal{C}$ is $|\mathcal{C}| - 1 = m - 1$, counting the number of steals. Let us now update our stealing definition from \ref{def:steal} to below in \ref{def:steal-action} to account for a chain.

\begin{definition}[Steal action (with chain)]\label{def:steal-action}
	The action $\textsc{Steal}(P_m)$ by player $P_i$, where $P_m \in \mathcal{T}(P_i, S_k)$, updates the state as follows; let $G_j = O(P_m)$:
	\begin{enumerate}
		\item $O(P_i) \leftarrow G_j$ \hfill transfer ownership
		\item $O(P_m) \leftarrow \bot$ \hfill victim loses gift
		\item $C_k \leftarrow C_k \cup \{G_j\}$ \hfill chain-lock the gift
		\item $R_k(G_j) \leftarrow R_k(G_j) + 1$ \hfill increment round count
		\item $T(G_j) \leftarrow T(G_j) + 1$ \hfill increment total count
		\item $P_m$ becomes \emph{displaced} and must immediately take an action.
	\end{enumerate}
\end{definition}

A stealing chain terminates when a displaced player executes $\textsc{Open}(G)$; upon termination:
\[
C_k \leftarrow \emptyset
\]

The round counts $R_k$ and total counts $T$ are \emph{not} reset at chain termination.

\begin{lemma}[Chain termination bound]\label{lem:termination}
	Every stealing chain terminates in at most $n - 1$ steals, regardless of the values of $\ell_{\text{round}}$ and $\ell_{\text{total}}$.
\end{lemma}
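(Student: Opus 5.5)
The argument will rest only on the mandatory chain constraint in the stealability predicate, together with the fact that steals never open new gifts. The count limits $\ell_{\text{round}}$ and $\ell_{\text{total}}$ only shrink the target set $\mathcal{T}(P_i,S_k)$ of Definition~\ref{def:valid-targets}, so any bound proved with them set to $0$ (unlimited) holds for every other setting. This is what gives the ``regardless'' clause.

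Fix a stealing chain $\mathcal{C}=(P_{i_1},\ldots,P_{i_m})$ in round $k$. Let $G_{j_1},\ldots,G_{j_{m-1}}$ be the gifts taken in its successive steals. By Definition~\ref{def:steal-action}, each stolen gift is added to $C_k$, and $C_k$ is cleared only when the chain terminates. The predicate requires $G\notin C_k$ for a gift to be stealable. Hence the gifts $G_{j_1},\ldots,G_{j_{m-1}}$ are pairwise distinct. Each of them is an opened gift, since only owned gifts can be stolen and owned gifts are opened. So $m-1\le |U|$, where $U$ is the set of opened gifts, and $U$ does not change during the chain because no \textsc{Open} occurs before the terminal step.

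The key step is to sharpen $|U|\le n$ to $m-1\le n-1$. The chain is initiated by a primary turn, or more generally by a player $P_{i_1}$ who holds no gift. Let $u=|U|$. I claim that at most $u$ of these gifts can be stolen, and that $u\le n-1$ whenever a chain is in progress. Here is why $u\le n-1$. Throughout the chain, exactly one player is gift-less at each moment (first the initiator, then each displaced victim), and every opened gift is owned. Since each player owns at most one gift and the $u$ opened gifts are held by the $n-1$ players other than the gift-less one, we get $u\le n-1$. Distinctness then gives $m-1\le u\le n-1$.

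I should also say why the chain actually ends rather than getting stuck. When the steals are exhausted, the displaced player has $\mathcal{T}=\emptyset$. There is then a wrapped gift available, because the $u\le n-1$ opened gifts leave at least one of the $n$ gifts wrapped. So that player must open, and the chain terminates.

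The main obstacle I expect is making the invariant ``exactly one player is gift-less and all opened gifts are owned during a chain'' rigorous from the definitions. In particular this must handle the final-swap phase, which is a trade rather than a chain and so is excluded by Definition~\ref{def:chain}. I would prove the invariant by induction on actions within a round: the initial state satisfies it, an \textsc{Open} preserves it, and a \textsc{Steal} moves the gift-less status from the stealer to the victim.
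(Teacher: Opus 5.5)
Your argument is essentially the paper's: chain-locking makes the gifts taken within one chain pairwise distinct, they come from a fixed pool of at most $n-1$ owned gifts, and $\ell_{\text{round}},\ell_{\text{total}}$ can only shrink $\mathcal{T}(P_i,S_k)$. You also do two things the paper does not. You justify the pool bound, whereas the paper only asserts that at most $n-1$ gifts are owned ``at any point during the game,'' which is literally false after round $n$; your restriction to moments inside a chain is the right version. And you check that a wrapped gift is actually available when targets run out.

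The invariant you propose for that step is false as stated, though. During the chain of round $k$ there are $n-k+1$ gift-less players: the current actor together with $P_{k+1},\dots,P_n$. In the initial state all $n$ players are gift-less. So your induction fails at its base case, and an \textsc{Open} does not preserve ``exactly one'' gift-less player; it lowers the count by one.

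What your counting actually needs is weaker and true. First, the player currently acting holds no gift. Before round $k$, $P_k$ could only act by being displaced, which requires already holding a gift, and gifts are acquired only by acting; so $P_k$ starts round $k$ empty-handed. Each displaced victim has just lost their gift. Second, every opened gift is owned. Since only gift-less players act, neither \textsc{Open} nor \textsc{Steal} overwrites and orphans a held gift.

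With that invariant, or simply with $u=k-1$ throughout round $k$, your bound $m-1\le u\le n-1$ and the existence of a wrapped gift to open both go through unchanged.
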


\begin{proof}
	Each steal adds exactly one gift to $C_k$ (the chain-locked set). Since chain locking is mandatory and cannot be disabled, the set $C_k$ grows monotonically during a chain. At most $n - 1$ gifts can be owned at any point during the game (since at least one gift is always wrapped until the final round). 
	
	After at most $n - 1$ steals, either:
	\begin{enumerate}
		\item all owned gifts are in $C_k$, leaving $\mathcal{T}(P_i, S_k) = \emptyset$ and forcing the displaced player to open;
		\item the displaced player voluntarily chooses to open before this point.
	\end{enumerate}
	
	The parameters $\ell_{\text{round}}$ and $\ell_{\text{total}}$ can only \emph{reduce} the set of valid targets (by failing the round or total constraint), never increase it. 
	
	Thus the $n - 1$ bound holds for all parameter settings.
\end{proof}

At the conclusion of round $R_k$, we have that:
\begin{enumerate}
	\item $C_k \leftarrow \emptyset$ \hfill should already be empty from chain termination
	\item $R_k(G_j) \leftarrow 0$ for all $G_j \in G$ \hfill reset round counts
	\item $k \leftarrow k + 1$
\end{enumerate}
Note that total counts $T$ persist across rounds and are never reset.

\begin{example}[Stealing chain resolution]
	Consider round $R_7$ with $n = 10$ players with standard parameters $(\ell_{\text{round}}, \ell_{\text{total}}) = (1, 0)$, and the following state:
	\begin{itemize}
		\item $O(P_4) = G_3$, $O(P_2) = G_5$ (two players hold opened gifts);
		\item $G_6 \in W$ (at least one gift remains wrapped);
		\item $C_7 = \emptyset$, $R_7(G_j) = 0$ for all $j$ (start of round).
	\end{itemize}
	
	The chain unfolds as follows:
	\begin{enumerate}
		\item $P_7$ executes $\textsc{Steal}(P_4)$:
		\begin{itemize}
			\item $O(P_7) \leftarrow G_3$, $O(P_4) \leftarrow \bot$;
			\item $C_7 = \{G_3\}$, $R_7(G_3) = 1$, $T(G_3) \leftarrow T(G_3) + 1$;
			\item $P_4$ is displaced.
		\end{itemize}
		
		\item $P_4$ (displaced) executes $\textsc{Steal}(P_2)$:
		\begin{itemize}
			\item $O(P_4) \leftarrow G_5$, $O(P_2) \leftarrow \bot$;
			\item $C_7 = \{G_3, G_5\}$, $R_7(G_5) = 1$, $T(G_5) \leftarrow T(G_5) + 1$;
			\item $P_2$ is displaced.
		\end{itemize}
		
		Note: $P_4$ cannot steal $G_3$ back because $G_3 \in C_7$.
		
		\item $P_2$ (displaced) executes $\textsc{Open}(G_6)$:
		\begin{itemize}
			\item $O(P_2) \leftarrow G_6$, $\sigma(G_6) \leftarrow \textsc{opened}$.
		\end{itemize}
		Chain terminates: $C_7 \leftarrow \emptyset$
	\end{enumerate}
	
	The stealing chain is $\mathcal{C} = (P_7, P_4, P_2)$ with length 2. At round's end, $R_7$ resets but $T(G_3)$ and $T(G_5)$ retain their incremented values.
\end{example}

\begin{remark}[Parameter settings]
	Several real-world exchange variants correspond to specific parameter choices:
	\begin{center}
		\begin{tabular}{lccl}
			\toprule
			\bfseries Variant & $\ell_{\text{round}}$ & $\ell_{\text{total}}$ & \bfseries Effect \\
			\midrule
			\bfseries Standard & 1 & 0 & each gift stolen at most once per round \\
			\bfseries Three-steal-out & 1 & 3 & gifts ``retire'' after 3 total steals \\
			\bfseries Unlimited & 0 & 0 & no limits beyond mandatory chain-locking \\
			\bfseries Two-per-round & 2 & 0 & gifts can be stolen twice per round \\
			\bottomrule
		\end{tabular}
	\end{center}
	The ``three-steal-out'' variant guarantees that highly desirable gifts eventually become safe, reducing late-game chaos and ensuring the final holder gets to keep the gift.
\end{remark}

\begin{proposition}[Monotonicity of total limit]
	For fixed $\ell_{\text{round}}$, the expected number of steals per game is weakly decreasing in $\ell_{\text{total}}$:
	\[
	\ell_{\text{total}}' < \ell_{\text{total}} \implies 
	\mathbb{E}[Y^{\text{steals}} \mid \ell_{\text{total}}'] \leq 
	\mathbb{E}[Y^{\text{steals}} \mid \ell_{\text{total}}]
	\]
	with strict inequality when some gifts reach the lower limit during typical play.
\end{proposition}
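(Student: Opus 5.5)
The plan is to first fix the reading of the statement and then prove it by a coupling argument. Two conventions need to be made explicit. First, the value $\ell_{\text{total}}=0$ means ``unlimited'', so for the comparison I would identify $0$ with $+\infty$ and order $\mathbb{Z}_{\geq 1}\cup\{\infty\}$ in the usual way. Second, the displayed implication says that a \emph{tighter} limit $\ell_{\text{total}}'<\ell_{\text{total}}$ yields \emph{fewer} expected steals. That is, $\mathbb{E}[Y^{\text{steals}}]$ is weakly increasing in $\ell_{\text{total}}$ (weakly decreasing in the tightness of the cap), and this is what I would prove. Finally, the expectation only makes sense relative to a fixed stochastic policy for the players, so I would hold $\ell_{\text{round}}$, the valuations and the policy fixed across both games.

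The main step is a coupling of the game $\mathcal{G}'$ run at $\ell_{\text{total}}'$ with the game $\mathcal{G}$ run at $\ell_{\text{total}}$, driven by common random numbers. Each decision node draws a uniform variate, and the policy maps the pair (variate, available action set) to an action. By Definition~\ref{def:valid-targets}, the only difference between the two rule sets is the total-constraint clause of $\textsc{Stealable}$, and that clause can only delete targets. So at any state the action set available under $\ell_{\text{total}}'$ is contained in the one available under $\ell_{\text{total}}$. I would then impose a \emph{monotone policy} hypothesis: shrinking the set of steal targets never turns an $\textsc{Open}$ into a $\textsc{Steal}$ for the same variate. Discrete-choice (logit) rules of the type the paper later uses have this property, because removing alternatives only removes steal mass and moves it onto the open option or the remaining options.

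With this hypothesis I would prove the result gift by gift. Let $N_j$ and $N_j'$ be the numbers of times $G_j$ is stolen in $\mathcal{G}$ and in $\mathcal{G}'$, so that $Y^{\text{steals}}=\sum_j N_j$. Trivially $N_j'\le \ell_{\text{total}}'$, and Lemma~\ref{lem:termination} bounds every chain by $n-1$ steals, so all counts are finite. The goal is to show $\mathbb{E}\sum_j N_j'\le \mathbb{E}\sum_j N_j$. The natural route is induction over rounds on a potential of the form ``number of still-stealable opened gifts''. Under the coupling, the tight game has at every step no more stealable gifts than the loose game, so each round's chain in $\mathcal{G}'$ is stochastically no longer than the corresponding chain in $\mathcal{G}$. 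Summing over the $n$ rounds (plus the final swap, which is identical in law) then gives the weak inequality. For strict inequality, I would take an event of positive probability on which some gift reaches $T(G_j)=\ell_{\text{total}}'$ while a steal of it still occurs in $\mathcal{G}$. On that event the coupled steal count in $\mathcal{G}'$ is strictly smaller, and the contribution survives in expectation.

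The hard step is the inductive comparison. The two coupled trajectories diverge once a retired gift blocks a steal: ownership differs, and in principle the tighter game could later reach states with \emph{more} profitable steals. Pathwise domination can therefore fail, and I do not expect a purely rule-based proof. What I would try first is a monotone (stochastic-order) coupling on the pair (chain length so far, set of stealable gifts), checking that each transition preserves the order under the monotone-policy hypothesis. If that breaks down, I would weaken the claim to a sufficient condition, such as stationary policies that depend only on the identity of the best available gift, under which the domination can be verified directly.
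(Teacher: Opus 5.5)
\textbf{There is a genuine gap in your inductive comparison.} Your monotone-policy hypothesis only compares the two games' decisions at a \emph{common} state. After the first time a retired gift blocks a steal, the coupled trajectories have different ownership. From then on, neither the hypothesis nor the number of still-stealable gifts controls chain length. The step is not merely unproven: it fails, and with it the statement itself, even under your hypothesis.

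Take $n=4$, gifts $A,B,C,D$, $\ell_{\text{round}}=1$, and a deterministic threshold rule with $\theta=\tfrac12$. The actor always holds $\bot$ when deciding, so the rule is: steal the valid target you value most if its value exceeds $\theta$, otherwise open your favourite wrapped gift. This rule satisfies your monotonicity hypothesis. Let $V_1(A)>V_1(B)>V_1(C)>V_1(D)$, $V_2(A)=0.9$, $V_2(B)=0.1$, $V_2(C)=0.3$, $V_2(D)=0.4$, $V_3(A)=0.9$, $V_3(B)=0.7$, $V_4(C)=0.9$, and $V_4(A)=V_4(B)=V_4(D)=0.1$.
\begin{itemize}
\item In both games, $P_1$ opens $A$, $P_2$ steals it, and $P_1$ opens $B$.
\item With $\ell_{\text{total}}=2$ (or unlimited): $P_3$ steals $A$ from $P_2$. $P_2$ values $B$ at $0.1$ and opens $D$. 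Then $P_4$ values every target at $0.1$ and opens $C$. That is two steals.
\item With $\ell_{\text{total}}'=1$: $A$ is retired, so $P_3$ steals $B$ from $P_1$. $P_1$ has no valid target and opens $C$. Then $P_4$ steals $C$ from $P_1$, who opens $D$. That is three steals.
\end{itemize}
In round 4 the tight game has one stealable gift against two (or three), yet it has the longer chain. So your potential is the wrong invariant.

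Your strictness argument has the same flaw. Blocking a steal of $G_j$ lowers $N_j'$, but the thief redirects: round 3 above has one steal in both games, so the total need not drop.

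The paper's proof is only a heuristic sketch that makes exactly this leap: ``would-be thieves must either target less desirable gifts or open new ones. Both alternatives reduce aggregate stealing.'' Your observation that pathwise domination can fail is sharper than the paper's argument. Still, neither route proves the proposition, and no coupling can, because it is false already for the simple threshold rule above.

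A correct version needs a restriction on the policy. For instance, suppose every player uses \texttt{always\_steal}. Then each round's chain steals every non-retired opened gift exactly once, so the total is $\sum_{r=1}^{n}\min(n-r,\ell_{\text{total}})$. This is weakly increasing in $\ell_{\text{total}}$, and strictly larger at $\ell_{\text{total}}$ than at $\ell_{\text{total}}'<\ell_{\text{total}}$ whenever $\ell_{\text{total}}'<n-1$.

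You were right to follow the display rather than the words ``weakly decreasing'', and to identify $\ell_{\text{total}}=0$ with $\infty$.
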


\begin{proof}[Proof sketch]
	A lower total limit causes desirable gifts to become unstealable earlier in the game. Once a gift $G_j$ reaches $T(G_j) = \ell_{\text{total}}$, it exits the pool of attractive steal targets. Players holding ``retired'' gifts are no longer victims, and would-be thieves must either target less desirable gifts or open new ones. Both alternatives reduce aggregate stealing relative to the higher-limit case.
\end{proof}

\subsection{End game}

After all $n$ rounds are complete, player $P_1$ has a final optional action: $P_1$ may swap their gift with any other player. If $P_1$ chooses to swap with $P_m$:
\begin{align*}
	\owners(P_1) &\leftarrow \owners(P_m) \\
	\owners(P_m) &\leftarrow G_{P_1}
\end{align*}
where $G_{P_1}$ was $P_1$'s gift before the swap. This action is not subject to locking contraints, does not initiate a stealing chain, and immediately concludes the game.

\begin{theorem}[End-game bijection]\label{thm:bijection}
	Upon game completion (after all $n$ rounds and $P_1$'s optional final swap), the ownership function $\owners: P \rightarrow G$ is a bijection.
\end{theorem}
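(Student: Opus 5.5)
The plan is to prove, by induction on rounds, an invariant describing ownership, and then check that the final swap preserves it. The invariant is:

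\begin{quote}
(I$_k$) At the end of round $R_k$, exactly $k$ gifts are opened, and $\owners$ restricted to $\{P_1,\dots,P_k\}$ is a bijection onto $U$. The players $P_{k+1},\dots,P_n$ hold $\bot$. No gift has two owners, that is $|\owners^{-1}(G_j)|\le 1$.
\end{quote}

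The base case is the initial state $S_0$, where all owners are $\bot$ and $U=\emptyset$; this is (I$_0$).

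For the inductive step, assume (I$_{k-1}$) and analyse round $R_k$ as a stealing chain $(P_{i_1},\dots,P_{i_m})$ with $P_{i_1}=P_k$, following Definition~\ref{def:chain}. Lemma~\ref{lem:termination} guarantees that this chain is finite. I will carry a stronger within-round invariant: at every moment of the round, exactly one player is ``active'', namely $P_k$ or the currently displaced player. That player holds $\bot$, and every other player in $\{P_1,\dots,P_k\}$ holds a distinct opened gift.

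This invariant holds at the start of the round, since $P_k$ holds $\bot$ by (I$_{k-1}$). Each \textsc{Steal} (Definition~\ref{def:steal-action}) moves a gift from $P_m$ to the active player and sets $\owners(P_m)=\bot$. Thus $P_m$ becomes the new holder of $\bot$, no gift is duplicated, and $U$ is unchanged. Victims are always among $P_1,\dots,P_{k-1}$, because valid targets must own a gift. The final \textsc{Open} (Definition~\ref{def:open}) assigns a fresh wrapped gift to the active player and grows $U$ by one. This yields (I$_k$).

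The main obstacle is that the round must in fact end with an \textsc{Open}. Two points need checking. First, the active player must always have a legal action. Second, in particular, $W\neq\emptyset$ whenever someone must open. The count handles the second point: during round $k$ we have $|U|=k-1\le n-1$, so $|W|\ge 1$. Hence \textsc{Open} is always available, and the round cannot stall with a displaced player holding $\bot$. The informal clause in the definition of a round (``or no legal steal exists'') does not create a problematic terminal state, because opening is always legal. I will also note that termination of the chain comes from Lemma~\ref{lem:termination}, and that a terminated chain ends with an \textsc{Open} by Definition~\ref{def:chain}(3).

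After round $R_n$, (I$_n$) says that $\owners:P\to G$ maps all $n$ players injectively into $U=G$, with no $\bot$ values. It is therefore a bijection, since an injection between finite sets of equal size is surjective. Finally, the optional swap composes $\owners$ with the transposition of $P_1$ and $P_m$ on the domain. A bijection composed with a permutation is again a bijection. If $P_1$ declines the swap, nothing changes, and the theorem follows.
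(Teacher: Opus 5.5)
Your proof is correct and is essentially the paper's argument: induction over rounds in which each round adds exactly one opened, owned gift while steals only relocate gifts, followed by the observation that the final swap preserves bijectivity. You differ by folding the paper's separate surjectivity and injectivity inductions into a single invariant (the active player is the unique holder of $\bot$, and $W\neq\emptyset$ throughout round $k$), which usefully makes explicit two things the paper leaves implicit: no stealer ever holds a gift that would be orphaned by the reassignment, and every round genuinely ends with an \textsc{Open}.

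One small slip: $P_k$ is never a victim because its gift lies in $C_k$ (chain-locking), not merely because targets must own a gift. This does not break anything, since your invariant only needs victims to lie in $\{P_1,\dots,P_k\}$.
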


\begin{proof}
	
	After round $R_k$, exactly $k$ gifts have status $\textsc{opened}$ and are owned.
	
	Consider first the base case $k = 1$. In round $R_1$, player $P_1$ cannot steal (since $\owners(P_i) = \bot$ for all $i$), so $P_1$ must open a gift. Thus, exactly one gift becomes opened and owned.
	
	Assume after $R_k$, exactly $k$ gifts are opened and owned; then, in round $R_{k+1}$:
	\begin{itemize}
		\item if $P_{k+1}$ opens, then one new gift becomes opened and owned;
		\item if $P_{k+1}$ steals, then a chain follows, and by Definition~\ref{def:chain}, the chain terminates when some player opens a gift, adding exactly one new opened gift.
	\end{itemize}
	In either case, after $R_{k+1}$, exactly $k+1$ gifts are opened and owned. By induction, after $R_n$, all $n$ gifts are opened and owned; hence $\Range(\owners) = G$, and $\owners$ is surjective.
	
	Now, we show that no two players own the same gift at any point.
	
	Consider first the base case $k = 1$. After $R_1$, only $P_1$ owns a gift: trivially injective.
	
	Assume now $\owners$ is injective after $R_k$. During $R_{k+1}$:
	\begin{itemize}
		\item if $P_{k+1}$ opens, then they acquire a previously unowned gift, with no collision;
		\item if $P_i$ steals from $P_m$, then $\owners(P_m) \leftarrow \bot$ before $\owners(P_i)$ is set, with no collision.
	\end{itemize}
	
	Injectivity is preserved through all actions in $R_{k+1}$. By induction, $\owners$ remains injective after $R_n$.
	
	If $P_1$ swaps with $P_m$ in the final game action, then the exchange $\owners(P_1) \leftrightarrow \owners(P_m)$ preserves both injectivity and surjectivity.
	
	Therefore, $\owners: P \rightarrow G$ is a bijection upon game completion.
\end{proof}

\begin{corollary}
	Every player receives exactly one gift, and every gift is received by exactly one player.
\end{corollary}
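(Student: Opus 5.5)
The corollary restates Theorem~\ref{thm:bijection} in plain language, so the plan is to unpack what a bijection $\owners: P \rightarrow G$ means and read off both halves. First we note that by Theorem~\ref{thm:bijection}, at game completion $\owners$ takes values in $G$ rather than $G \cup \{\bot\}$. This holds because surjectivity onto $G$ together with $|P| = |G| = n$ forces every player to have a non-null value: if some $P_i$ had $\owners(P_i) = \bot$, then at most $n-1$ players would cover all $n$ gifts, which is impossible for a function. Hence every player owns some gift, and since $\owners$ is a function, each player owns exactly one gift, which gives the first clause.

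For the second clause, we use injectivity and surjectivity of $\owners$ on $G$. Surjectivity gives, for each $G_j \in G$, some $P_i$ with $\owners(P_i) = G_j$, so $|\owners^{-1}(G_j)| \geq 1$. Injectivity gives $|\owners^{-1}(G_j)| \leq 1$; this bound was also recorded earlier as an invariant of valid game states. Together these give $|\owners^{-1}(G_j)| = 1$, so every gift is received by exactly one player.

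The only step needing care is the first one: ruling out $\bot$ in the codomain so that ``exactly one'' is meaningful for players. The cardinality argument above handles it, and alternatively the inductive count in the proof of Theorem~\ref{thm:bijection} (after $R_n$ all $n$ gifts are opened and owned, injectively, by $n$ players) gives it directly. Otherwise the argument is immediate.
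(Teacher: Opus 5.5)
Your proposal is correct and takes the same approach as the paper, which gives no separate proof and treats the corollary as a direct reading of Theorem~\ref{thm:bijection}. Your cardinality step ruling out $\bot$ (surjectivity onto $G$ together with $|P| = |G| = n$) makes explicit something the paper leaves implicit when it passes from ``all $n$ gifts are opened and owned'' to $\owners: P \rightarrow G$ being a bijection.
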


\section{Decorated game structure}\label{sec:decoratedgame}

Section~\ref{sec:basegame} describes the base and neseccary components of the game. Upon the base game, we may isntantiate further structure and dynamics, which build into the decorated game. We walk through that here.

\subsection{Valuation; information; belief}

The base game formalism describes mechanics of the game; we now consider player preferences. We introduce a suite of valuation structures that capture (or, at least, approximate) heterogeneous, subjective gift values.

\begin{definition}[Valuation function]
	Each player $P_i$ has a \emph{valuation function} $V_i: G \rightarrow [0, 1]$, where $V_i(G_j)$ represents $P_i$'s subjective value for gift $G_j$; higher values indicate stronger preference.
\end{definition}

\begin{definition}[Valuation matrix]
	The \emph{valuation matrix} $\valmat \in [0,1]^{n \times n}$ collects all player valuations:
	\[
	\valmat_{ij} := V_i(G_j)
	\]
	Row $i$ represents player $P_i$'s preferences over all gifts; column $j$ represents all players' valuations of gift $G_j$.
\end{definition}

We define three models for generating $\valmat$, capturing different assumptions about preference correlation.

\begin{definition}[Independent model $\mathcal{M}_{\text{ind}}$]\label{def:model-ind}
	Under the \emph{independent model}, valuations are i.i.d.\ random variables:
	\[
	\valmat_{ij} \overset{\text{iid}}{\sim} \text{Uniform}(0, 1)
	\]
	Players' preferences are uncorrelated; there is no objective notion of gift quality.
\end{definition}

\begin{definition}[Correlated model $\mathcal{M}_{\text{cor}}(\rho)$]\label{def:model-cor}
	Under the \emph{correlated model} with correlation parameter $\rho \in [0, 1]$:
	\begin{enumerate}
		\item draw \emph{objective quality} $q_j \overset{\text{iid}}{\sim} \text{Uniform}(0, 1)$ for each gift $G_j$;
		\item draw idiosyncratic noise $\epsilon_{ij} \overset{\text{iid}}{\sim} \text{Uniform}(0, 1)$ for each $(i, j)$ pair;
		\item compute:
		\[
		\valmat_{ij} = \clip_{[0,1]}\left(\rho \cdot q_j + \sqrt{1 - \rho^2} \cdot \epsilon_{ij}\right)
		\]
	\end{enumerate}
	When $\rho = 1$, all players agree perfectly on quality; when $\rho = 0$, this reduces to the independent model.
\end{definition}

\begin{definition}[Negatively correlated model $\mathcal{M}_{\text{neg}}(\sigma)$]\label{def:model-neg}
	Under the \emph{negatively correlated model} with noise parameter $\sigma > 0$, we:
	\begin{enumerate}
		\item draw base quality $q_j \overset{\text{iid}}{\sim} \text{Uniform}(0, 1)$ for each gift $G_j$;
		\item partition players into two camps based on parity: $\mathcal{C}_0 = \{P_i : i \equiv 0 \pmod{2}\}$ and $\mathcal{C}_1 = \{P_i : i \equiv 1 \pmod{2}\}$;
		\item draw noise $\epsilon_{ij} \overset{\text{iid}}{\sim} \mathcal{N}(0, \sigma^2)$;
		\item compute:
		\[
		\valmat_{ij} = \clip_{[0,1]}\begin{cases}
			q_j + \epsilon_{ij} & \text{if } P_i \in \mathcal{C}_0 \\
			(1 - q_j) + \epsilon_{ij} & \text{if } P_i \in \mathcal{C}_1
		\end{cases}
		\]
	\end{enumerate}
	Players in opposite camps have anti-correlated preferences.
\end{definition}

\begin{remark}
	The correlated model is intended to capture such scenarios as where some gifts are universally desirable (e.g., electronics vs.\ gag gifts), and so captures polarized preferences (e.g., wine enthusiasts vs.\ non-drinkers).
\end{remark}

\begin{definition}[Objective quality]
	For models $\mathcal{M}_{\text{cor}}$ and $\mathcal{M}_{\text{neg}}$, the \emph{objective quality} of gift $G_j$ is $q_j$; for model $\mathcal{M}_{\text{ind}}$, we simply define objective quality as the mean valuation:
	\[
	q_j := \frac{1}{n} \sum_{i=1}^{n} \valmat_{ij}
	\]
\end{definition}

In the base game, players observe all gift values perfectly. We now introduce  partial information, where wrapped gifts have uncertain value. We approximate and partially derive player beliefs from Bayesian first principles; this is not without scrutiny, nor without contention.

\begin{definition}[Quality; signals]
	Each gift $G_j$ has:
	\begin{enumerate}
		\item \emph{objective quality} $q_j \in [0,1]$, drawn according to the valuation model;
		\item \emph{appearance signal} $a_j$, a noisy observation of quality.
	\end{enumerate}
	We model the signal as:
	\[
	a_j = q_j + \eta_j, \quad \eta_j \overset{\text{iid}}{\sim} \mathcal{N}(0, \sigma_a^2)
	\]
	where $\sigma_a > 0$ is the signal noise. 
\end{definition}

The signal is clipped to $[0,1]$ for interpretability, though the Bayesian analysis treats the underlying Gaussian. The appearance proxies observable features correlated with quality: wrapping effort, package size, weight, brand visibility through packaging, etc..

\begin{assumption}[Prior beliefs]
	Players hold a common prior over gift quality:
	\[
	q_j \sim \mathcal{N}(\mu_0, \sigma_0^2)
	\]
	where $\mu_0$ is the prior mean (we take $0.5$ for a symmetric quality distribution) and $\sigma_0^2$ is the prior variance capturing initial uncertainty.
\end{assumption}

\begin{proposition}[Posterior beliefs]\label{prop:posterior}
	Given appearance signal $a_j$, the posterior distribution of quality is:
	\[
	q_j \mid a_j \sim \mathcal{N}(\mu_{\text{post}}, \sigma_{\text{post}}^2)
	\]
	where:
	\begin{align}
		\mu_{\text{post}} &= \frac{\tau_0}{\tau_0 + \tau_a} \mu_0 + \frac{\tau_a}{\tau_0 + \tau_a} a_j \label{eq:posterior-mean} \\
		\sigma_{\text{post}}^2 &= \frac{1}{\tau_0 + \tau_a} = \frac{\sigma_0^2 \sigma_a^2}{\sigma_0^2 + \sigma_a^2} \label{eq:posterior-var}
	\end{align}
	with precisions $\tau_0 = \sigma_0^{-2}$ and $\tau_a = \sigma_a^{-2}$.
\end{proposition}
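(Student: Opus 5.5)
We prove this by the standard Gaussian conjugacy computation: write the joint density of $(q_j, a_j)$ as a function of $q_j$, complete the square in the exponent, and read off the posterior mean and variance. The model is $q_j \sim \mathcal{N}(\mu_0, \sigma_0^2)$ (Assumption on prior beliefs) and $a_j \mid q_j \sim \mathcal{N}(q_j, \sigma_a^2)$, which follows from $a_j = q_j + \eta_j$ with $\eta_j$ independent of $q_j$. As the paper notes, the clipping of $a_j$ to $[0,1]$ is ignored, and we work with the underlying unclipped Gaussian signal throughout.

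First, by Bayes' rule, $p(q_j \mid a_j) \propto p(a_j \mid q_j)\, p(q_j)$. This is proportional to
\[
\exp\!\left(-\tfrac{\tau_a}{2}(a_j - q_j)^2 - \tfrac{\tau_0}{2}(q_j - \mu_0)^2\right),
\]
where $\tau_0 = \sigma_0^{-2}$ and $\tau_a = \sigma_a^{-2}$, and all factors not depending on $q_j$ are absorbed into the proportionality constant.

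Second, we expand the exponent as a quadratic in $q_j$. It equals
\[
-\tfrac12\left[(\tau_0+\tau_a)q_j^2 - 2(\tau_0\mu_0 + \tau_a a_j)q_j\right] + \text{const}.
\]
Completing the square gives
\[
-\tfrac{\tau_0+\tau_a}{2}\left(q_j - \tfrac{\tau_0\mu_0+\tau_a a_j}{\tau_0+\tau_a}\right)^2 + \text{const}.
\]

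Third, a density proportional to $\exp(-\tfrac{\tau}{2}(q-m)^2)$ is, after normalization, the $\mathcal{N}(m, \tau^{-1})$ density. Hence the posterior is Gaussian with
\[
\mu_{\text{post}} = \frac{\tau_0\mu_0+\tau_a a_j}{\tau_0+\tau_a}, \qquad \sigma^2_{\text{post}} = (\tau_0+\tau_a)^{-1}.
\]
Splitting the fraction for $\mu_{\text{post}}$ gives exactly the convex-combination form \eqref{eq:posterior-mean}. For the variance, substituting the precisions gives
\[
\frac{1}{\sigma_0^{-2}+\sigma_a^{-2}} = \frac{\sigma_0^2\sigma_a^2}{\sigma_0^2+\sigma_a^2},
\]
which is \eqref{eq:posterior-var}.

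There is no substantive obstacle; the computation is routine. The one point needing care is the modelling caveat. The prior is placed on $\mathbb{R}$ even though $q_j$ is drawn from $\text{Uniform}(0,1)$ in the valuation models, and the signal is clipped. So the proposition must be read as a statement about the idealized Gaussian prior–Gaussian likelihood model, with the independence of $\eta_j$ and $q_j$ made explicit. That is the interpretation we adopt.
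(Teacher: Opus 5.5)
Your proof is correct and takes the same route as the paper. The paper's proof only invokes the standard conjugate Gaussian result: posterior precision is $\tau_0+\tau_a$, and the posterior mean is the precision-weighted average of $\mu_0$ and $a_j$. Your completing-the-square derivation fills in the steps behind that, and your reading of the statement as being about the idealized unclipped Gaussian model with $\eta_j$ independent of $q_j$ matches the paper's own caveat.
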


\begin{proof}
	Standard result from Bayesian inference with conjugate Gaussian prior and likelihood. 
	
	The posterior precision equals the sum of prior and signal precisions:  $\tau_{\text{post}} = \tau_0 + \tau_a$. The posterior mean is a precision-weighted average of prior mean and signal.
\end{proof}

\begin{remark}[Interpretation of weights]
	Define the \emph{signal weight}:
	\[
	\omega := \frac{\tau_a}{\tau_0 + \tau_a} = \frac{\sigma_0^2}{\sigma_0^2 + \sigma_a^2}
	\]
	Then $\mu_{\text{post}} = (1 - \omega) \mu_0 + \omega \cdot a_j$. When signal noise $\sigma_a^2$ is small relative to prior uncertainty $\sigma_0^2$, players weight the signal heavily ($\omega \to 1$). When signals are noisy, players rely on priors ($\omega \to 0$).
\end{remark}

\begin{remark}
	We are intending for a sufficient simulation of player interaction. We defer discussion and ``better'' modeling to future work. We are not interested in simulating humans entirely; rather, approximate to draw conclusions about the exchange game generally.
\end{remark}

Rational players facing uncertain outcomes may exhibit risk aversion \cite{becherer2003rational,levy1994absolute}. We derive the perceived value as a certainty equivalent.

\begin{assumption}[CARA preferences]
	Players have constant absolute risk aversion (CARA) utility over gift values:
	\[
	u(v) = -\exp(-\rho v)
	\]
	where $\rho > 0$ is the coefficient of absolute risk aversion.
\end{assumption}

\begin{proposition}[Certainty equivalent]\label{prop:certainty-equiv}
	For a player with CARA utility facing outcome $V \sim \mathcal{N}(\mu, \sigma^2)$, the certainty equivalent is:
	\[
	CE = \mu - \frac{\rho}{2} \sigma^2
	\]
\end{proposition}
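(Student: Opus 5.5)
The plan is to compute the expected utility $\mathbb{E}[u(V)]$ directly, using the moment generating function of the Gaussian, and then invert $u$. By definition, the certainty equivalent is the deterministic value $CE$ satisfying
\[
u(CE) = \mathbb{E}[u(V)],
\]
so the whole argument reduces to evaluating the right-hand side in closed form.

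First, write $\mathbb{E}[u(V)] = -\mathbb{E}[\exp(-\rho V)]$. Since $V \sim \mathcal{N}(\mu, \sigma^2)$, its moment generating function is $M_V(t) = \mathbb{E}[e^{tV}] = \exp(\mu t + \tfrac{1}{2}\sigma^2 t^2)$, valid for all real $t$. This is the one genuine computation in the proof. I would either cite it as standard or verify it by completing the square in the Gaussian integral, since $\exp(tv)\exp(-(v-\mu)^2/2\sigma^2)$ recombines into a shifted Gaussian density times $\exp(\mu t + \sigma^2 t^2/2)$. Evaluating at $t = -\rho$ gives
\[
\mathbb{E}[u(V)] = -\exp\!\left(-\rho\mu + \tfrac{1}{2}\rho^2\sigma^2\right) = -\exp\!\left(-\rho\left(\mu - \tfrac{\rho}{2}\sigma^2\right)\right).
\]

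Second, solve $-\exp(-\rho\, CE) = -\exp(-\rho(\mu - \tfrac{\rho}{2}\sigma^2))$. Because $\rho > 0$, the map $v \mapsto -\exp(-\rho v)$ is strictly increasing and hence injective on $\mathbb{R}$. The certainty equivalent is therefore unique and equals $\mu - \tfrac{\rho}{2}\sigma^2$.

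I do not expect a real obstacle here. The only subtlety is a modeling one worth noting in passing. Gift values are said to lie in $[0,1]$ (and signals are clipped), whereas the proposition treats $V$ as an untruncated Gaussian. I would state explicitly that, as with Proposition~\ref{prop:posterior}, the computation uses the underlying Gaussian, for example $\mu = \mu_{\text{post}}$ and $\sigma^2 = \sigma_{\text{post}}^2$. Under that reading the exponential moments are finite and the result is exact. A separate notational point is that this $\rho$ is the risk-aversion coefficient, not the correlation parameter of $\mathcal{M}_{\text{cor}}(\rho)$.
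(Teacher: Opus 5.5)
Your proposal is correct and follows the paper's proof: you evaluate $\mathbb{E}[-e^{-\rho V}]$ using the Gaussian moment-generating function at $t=-\rho$, then solve $u(CE)=\mathbb{E}[u(V)]$ for $CE$. Your appeal to the strict monotonicity of $u$ (which makes $CE$ unique) is a small step the paper leaves implicit.
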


\begin{proof}
	The expected utility is:
	\[
	\mathbb{E}[u(V)] = \mathbb{E}[-e^{-\rho V}] = -e^{-\rho \mu + \frac{\rho^2 \sigma^2}{2}}
	\]
	using the moment-generating function of the Gaussian. The certainty equivalent $CE$ satisfies $u(CE) = \mathbb{E}[u(V)]$:
	\[
	-e^{-\rho \cdot CE} = -e^{-\rho \mu + \frac{\rho^2 \sigma^2}{2}}
	\]
	Solving, we observe: $CE = \mu - \frac{\rho}{2} \sigma^2$.
\end{proof}

\begin{definition}[Perceived Value]\label{def:perceived-derived}
	Player $P_i$'s \emph{perceived value} of gift $G_j$ is:
	\[
	\hat{V}_i(G_j) = \begin{cases}
		V_i(G_j) & \text{if } \sigma(G_j) = \textsc{opened} \\[0.5em]
		\displaystyle (1 - \omega) \mu_0 + \omega \cdot a_j - \frac{\rho}{2} \sigma_{\text{post}}^2 & \text{if } \sigma(G_j) = \textsc{wrapped}
	\end{cases}
	\]
	where:
	\begin{itemize}
		\item $\omega = \sigma_0^2 / (\sigma_0^2 + \sigma_a^2)$ is the Bayesian signal weight
		\item $\sigma_{\text{post}}^2 = \sigma_0^2 \sigma_a^2 / (\sigma_0^2 + \sigma_a^2)$ is the posterior variance
		\item $\rho \geq 0$ is the risk aversion coefficient
	\end{itemize}
\end{definition}

The mean-variance framework follows from \cite{markowitz2000mean,mittal2022characteristics}; its application to information acquisition appears in \cite{chalfant1990mean}, especially relevant to partial inforamtion in \cite{xiong2007mean}.

\begin{remark}[Risk neutrality]
	Setting $\rho = 0$ recovers risk-neutral Bayesian expected value:
	$\hat{V}_i(G_j) = \mathbb{E}[q_j \mid a_j]$.
\end{remark}

The derivation above concerns objective quality $q_j$. Player $i$'s subjective value $V_i(G_j)$ relates to quality via the valuation model. Under model $\mathcal{M}_{\text{cor}}$:
\[
V_i(G_j) = \rho \cdot q_j + \sqrt{1 - \rho^2} \cdot \epsilon_{ij}
\]
where $\epsilon_{ij}$ is idiosyncratic taste. For wrapped gifts, players estimate $q_j$ (Proposition~\ref{prop:posterior}), then form beliefs over $V_i(G_j)$ accordingly. In simulation, we simplify by treating the signal as directly informative about subjective value, absorbing taste correlation into the quality estimate. Whether this is truly proxying human behavior is left aside.

\subsection{Social preferences; stealing costs}

Real gift exchanges occur among acquaintances, friends, and family with some degree of ongoing relationships. Stealing  imposes costs beyond the direct transfer of value. Laboratory experiments demonstrate that people exhibit other-regarding preferences: they sacrifice material payoffs to punish unfair behavior or reward cooperation. Stealing in a social setting triggers negative reciprocity concerns \cite{gavreliuc2022steal,gomes2022,siddique2022children}.

In trust games \cite{pei2021trust,valerio2024tricked,bohnet2004trust}, repeated betrayals of the same partner erode trust more than equivalent betrayals spread across partners. This motivates a relationship-specific cost component.

In repeated games with community observation, players who defect frequently acquire 
bad reputations that affect future interactions \cite{brandts2003exploration,li2022n}. This motivates a cumulative reputation cost.

\begin{definition}[Social cost]\label{def:socialcostgrounded}
	The \emph{social cost} incurred by $P_i$ when stealing from $P_m$ is:
	\[
	c(P_i, P_m) = \underbrace{c_0}_{\text{norm violation}} + 
	\underbrace{c_0 \cdot \alpha \cdot H_i(P_m)}_{\text{relationship damage}} + 
	\underbrace{\beta \cdot n_i^{\text{steal}}}_{\text{reputation cost}}
	\]
	where:
	\begin{itemize}
		\item $c_0 > 0$: Base cost of norm violation (any steal incurs social friction)
		\item $\alpha \geq 0$: Relationship damage multiplier (repeated targeting)
		\item $H_i(P_m)$: Count of prior steals from $P_m$ by $P_i$
		\item $\beta \geq 0$: Marginal reputation cost per steal
		\item $n_i^{\text{steal}}$: Total steals committed by $P_i$
	\end{itemize}
\end{definition}

Definition~\ref{def:socialcostgrounded} deserves exposition.

\begin{itemize}
	\item The base cost $c_0$ captures the immediate discomfort of norm violation; even a single steal in a friendly context incurs awkwardness. Experimental evidence from dictator games shows people sacrifice payoffs to avoid appearing unfair \cite{dana2007exploiting}.
	
	\item Relationship damage $c_0 \cdot \alpha \cdot H_i(P_m)$ is linear in prior targeting of $P_m$. Trust game experiments show trust recovery is slower after repeated betrayals of the same partner \cite{schweitzer2006promises}. The multiplicative form $c_0 \cdot \alpha$ ensures this term vanishes when $c_0 = 0$ (no social preferences).
	
	\item Reputation cost $\beta \cdot n_i^{\text{steal}}$ is linear in total steals, regardless of victim. We model in this way community-level reputation: even if a player spread steals across victims, observers form impressions of that player as aggressive \cite{board2013reputation}.
\end{itemize}

\begin{remark}[Linearity as approximation]
	The linear functional form is simply a tractable approximation. We might admit alternative specifications, but likewise admit that empirical calibration would require data from actual gift exchanges, which is beyond our scope. We adopt linearity for parsimony. 
	
\end{remark}

\cite{fehr2001theories} propose a utility function incorporating inequity aversion:
\[
U_i(x) = x_i - \alpha_i \cdot \max\{x_j - x_i, 0\} - \beta_i \cdot \max\{x_i - x_j, 0\}
\]
where $\alpha_i$ captures disadvantageous inequity aversion and $\beta_i$ advantageous inequity aversion. Stealing increases the stealer's payoff at another's expense, triggering the $\beta_i$ term.

This \emph{could} well replace our reduced-form $c(\cdot)$, but requires tracking full allocation vectors. Our formulation captures the \emph{act} of stealing rather than outcome inequality, which seems appropriate for the social dynamics of gift exchanges in that the violation is the act itself, not necessarily the resulting inequality\footnote{Anecdotally, this actually happens to not be the case sometimes. Indeed, a game occurred in which the resulting inequality was exactly the violation, in that a player did not put any effort into a game with a certain expected monetary value associated with each gift, and instead found a random object in the house and gifted that. The result was the player received a gift of the max value, but supplied a gift little to no value. Once this was discovered, much frustration by other players followed. This, however, may be a higher order social measurement of the effort of the player than a direct comparison of objects.}.

\subsection{Strategic choice}

Recall expected utilit: the value of the resulting state minus the current state value minus action costs. We state it here as:

\begin{definition}[Net utility]\label{def:net-utility-grounded}
	The \emph{net utility} for $P_i$ stealing from $P_m$ is:
	\[
	U_i^{\text{steal}}(P_m) = \hat{V}_i(O(P_m)) - \hat{V}_i(O(P_i)) - c(P_i, P_m)
	\]
\end{definition}

We now derive adaptive strategy from a random utility model.

\begin{assumption}[Random utility]
	Player $P_i$ choosing between actions $\{$\textsc{Steal}, \textsc{Open}$\}$ has action-specific utilities:
	\begin{align*}
		\tilde{U}_i(\text{Steal}) &= V_i^{\text{steal}} + \epsilon_{\text{steal}} \\
		\tilde{U}_i(\text{Open}) &= V_i^{\text{open}} + \epsilon_{\text{open}}
	\end{align*}
	where $V_i^a$ is the systematic (deterministic) component and $\epsilon_a$ captures 
	unobserved factors affecting choice. We assume $\epsilon_a \overset{\text{iid}}{\sim} 
	\text{Gumbel}(0, \mu)$, yielding the logit model.
\end{assumption}

\begin{proposition}[Logit choice probability]\label{prop:logit}
	Under the Gumbel error assumption, the probability of choosing \textsc{Steal} is:
	\[
	p_i^{\text{steal}} = \frac{\exp(V_i^{\text{steal}} / \mu)}{\exp(V_i^{\text{steal}} / \mu) + \exp(V_i^{\text{open}} / \mu)}
	= \frac{1}{1 + \exp(-(V_i^{\text{steal}} - V_i^{\text{open}}) / \mu)}
	\]
	This is the logistic (sigmoid) function of the utility difference, with scale parameter $\mu$.
\end{proposition}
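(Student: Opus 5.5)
The plan is to derive the binary logit directly from the Gumbel assumption by conditioning on one error term and integrating out the other. The key fact is that if $\epsilon_{\text{steal}}, \epsilon_{\text{open}} \overset{\text{iid}}{\sim} \text{Gumbel}(0,\mu)$, then their difference is logistic with scale $\mu$. Once that is in hand, the choice probability is a one-line consequence.

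First, write the event of stealing as
\[
\tilde{U}_i(\text{Steal}) > \tilde{U}_i(\text{Open}) \iff \epsilon_{\text{open}} - \epsilon_{\text{steal}} < V_i^{\text{steal}} - V_i^{\text{open}} =: \Delta.
\]
Ties have probability zero because the errors are continuous, so $p_i^{\text{steal}} = \Pr(\epsilon_{\text{open}} < \epsilon_{\text{steal}} + \Delta)$. Use the Gumbel CDF $F(x) = \exp(-e^{-x/\mu})$ and density $f(x) = \mu^{-1} e^{-x/\mu}\exp(-e^{-x/\mu})$. Conditioning on $\epsilon_{\text{steal}} = s$ gives
\[
p_i^{\text{steal}} = \int_{-\infty}^{\infty} \mu^{-1} e^{-s/\mu} \exp\!\big(-e^{-s/\mu}\big)\exp\!\big(-e^{-(s+\Delta)/\mu}\big)\,ds .
\]

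Second, substitute $t = e^{-s/\mu}$, so that $dt = -\mu^{-1} t\,ds$. The integral collapses to
\[
\int_0^\infty \exp\!\big(-t(1+e^{-\Delta/\mu})\big)\,dt = \frac{1}{1+e^{-\Delta/\mu}}.
\]
This is the second form in the statement. Multiplying numerator and denominator by $\exp(V_i^{\text{steal}}/\mu)$ gives the softmax form.

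The only step needing care is the change of variables. One must check that the limits flip correctly: $s\to-\infty$ gives $t\to\infty$, and $s\to\infty$ gives $t\to 0$. One must also check that the location parameter $0$ is harmless, since any common location shift cancels in the difference. Everything else is routine, and one could alternatively cite the standard McFadden result \cite{train2009}.
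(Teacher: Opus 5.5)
Your derivation is correct. Conditioning on $\epsilon_{\text{steal}}=s$ and using independence gives $\Pr(\epsilon_{\text{open}}<s+\Delta)=F(s+\Delta)$. The substitution $t=e^{-s/\mu}$, with $\mu^{-1}e^{-s/\mu}\,ds=-dt$ and the limits flipping as you note, then yields $\int_0^\infty e^{-t(1+e^{-\Delta/\mu})}\,dt=(1+e^{-\Delta/\mu})^{-1}$. Multiplying through by $e^{V_i^{\text{steal}}/\mu}$ gives the softmax form.

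The paper gives no argument of its own; it simply defers to Train, Ch.~3. Your computation is essentially the textbook derivation found there, made self-contained, and doing it explicitly makes two things visible that the citation leaves implicit:
\begin{itemize}
\item It confirms the $1/\mu$ scaling under the parameterization $F(x)=\exp(-e^{-x/\mu})$. The textbook derivation works with the unit-scale extreme-value law, and the $\mu$ in the statement must be recovered by rescaling utilities.
\item It shows exactly where the i.i.d.\ assumption is used. Both errors must share the same scale $\mu$, so that $e^{-s/\mu}$ and $e^{-(s+\Delta)/\mu}$ combine into a single multiple of $t$. A common location, as you say, cancels.
\end{itemize}

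The citation is shorter, and it also covers the multinomial case. The paper implicitly relies on that case when it calls the biased-selection softmax ``another application'' of this proposition. Your binary computation does not cover it as written, though the same substitution extends directly: $\prod_{j}F(s+\Delta_j)$ becomes $\exp\bigl(-t\sum_j e^{-\Delta_j/\mu}\bigr)$.
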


\begin{proof}
	Standard result; see \cite[Ch.~3]{train2009}.
\end{proof}

We specify the systematic utility components based on (what we take to be) psychological and situational factors with weights (to purchase us some flexibility):

\begin{definition}[Systematic utilities]\label{def:systematic-util}
	\begin{align*}
		V_i^{\text{steal}} &= \bar{U}_i^{\text{steal}} + \lambda_1 \cdot \text{phase}_k + \lambda_2 \cdot \phi_i \\
		V_i^{\text{open}} &= \bar{U}_i^{\text{open}} + \lambda_3 \cdot \hat{V}_i(O(P_i))
	\end{align*}
	where:
	\begin{itemize}
		\item $\bar{U}_i^{\text{steal}}$ is expected material utility from best available steal;
		\item $\bar{U}_i^{\text{open}}$ is xpected material utility from opening (prior mean);
		\item $\text{phase}_k = k/n$ fraction of game elapsed;
		\item $\phi_i \in [0,1]$ frustration level;
		\item $\hat{V}_i(O(P_i))$ current gift value (satisfaction);
		\item $\lambda_1, \lambda_2, \lambda_3 \geq 0$ coefficients.
	\end{itemize}
\end{definition}

Let us take a moment to scrutinize the coefficients. The phase effect ($\lambda_1$) checks late-game increases stealing utility. As the game nears its end, opportunities to improve one's gift diminish, creating pressure to act. See deadline effects in bargaining \cite{roth1988deadline,gneezy2003bargaining} and auction sniping behavior \cite{ockenfels2006late}.

We track frustration as $\lambda_2$, in which we operate under the assumption that being stolen from increases subsequent aggression, and so we operationalize the frustration-aggression hypothesis, which has received mixed empirical support, but captures a widely observed behavioral pattern. More recent work frames this as negative reciprocity \cite{fehr2005economics,brandts2001reference}.

Finally, we have a prefactor for satisfaction in $\lambda_3$. Players happy with their current gift have reduced motivation to steal. Once an aspiration level is met, search intensity  decreases \cite{stirling1999satisficing}, consistent with status quo bias \cite{samuelson1988status}.

\begin{definition}[Adaptive strategy]\label{def:adaptive-derived}
	Combining Definition~\ref{def:systematic-util} with Proposition~\ref{prop:logit}, 
	the probability of stealing is:
	\[
	p_i^{\text{steal}} = \sigma\left( \frac{1}{\mu} \left[ 
	(\bar{U}_i^{\text{steal}} - \bar{U}_i^{\text{open}}) + 
	\lambda_1 \cdot \text{phase}_k + 
	\lambda_2 \cdot \phi_i - 
	\lambda_3 \cdot \hat{V}_i(O(P_i))
	\right] \right)
	\]
	where $\sigma(x) = (1 + e^{-x})^{-1}$ is the logistic function.
\end{definition}

\begin{remark}[Linearity]
	For small arguments, the logistic function is approximately linear: 
	$\sigma(x) \approx 0.5 + 0.25x$ for $|x| \lesssim 2$. This motivates the 
	\emph{simplified adaptive strategy}:
	\[
	p_i^{\text{steal}} \approx p_0 + \tilde{\lambda}_1 \cdot \text{phase}_k + 
	\tilde{\lambda}_2 \cdot \phi_i - \tilde{\lambda}_3 \cdot \hat{V}_i(O(P_i))
	\]
	with clipping to $[\epsilon, 1-\epsilon]$ to ensure valid probabilities. This is the 
	we use in simulation for computational efficiency.
\end{remark}

\begin{remark}[Heuristics]
	While the logit structure is well-founded, we admit several aspects remain heuristic. We assume phase, frustration, and satisfaction contribute additively. Interaction effects (e.g., frustration mattering more late-game) are plausible but omitted for parsimony. The $\lambda$ coefficients enter linearly, while nonlinear effects (threshold frustration, diminishing returns to satisfaction) are certainly plausible. The $\lambda$ values are free parameters calibrated to produce reasonable simulation behavior; empirical estimation requires choice data from actual gift exchanges.
\end{remark}

Consider now the dynamics in frustration. Frustration evolves as:
\begin{align*}
	\phi_i &\leftarrow \min(1, \phi_i + \gamma) \quad \text{when } P_i \text{ is stolen from} \\
	\phi_i &\leftarrow \max(0, \phi_i - \gamma') \quad \text{at each round's end}
\end{align*}
where $\gamma > 0$ is the frustration increment and $\gamma' > 0$ is the decay rate.

The accumulation-decay is intended to model the following:

\begin{enumerate}
	\item negative events trigger emotional responses that summate \cite{baumeister2001bad} in a way that``bad is stronger than good'';
	
	\item emotional states dissipate over time, returning toward baseline \cite{kuppens2010emotional};
	
	\item the $\min(1, \cdot)$ bound accounts for ceiling effects in that frustration cannot increase indefinitely.
\end{enumerate}

The specific functional form (linear increment/decrement with bounds) is simply a tractable approximation to more complex affective dynamics, of which we set aside for now.

\subsection{Selection mechanics}

In the base game, gift selection is uniform over wrapped gifts:
\[
\Pr(G_j \mid W) = \frac{1}{|W|}, \quad \forall G_j \in W
\]
This serves as the null model against which biased selection is compared.

Indeed, players may form preferences over wrapped gifts based on appearance. We model this using a softmax (logit) choice rule.

\begin{definition}[Biased Selection]\label{def:biased-derived}
	Under biased selection, player $P_i$ selects gift $G_j$ from wrapped set $W$ with 
	probability:
	\[
	\Pr(G_j \mid W) = \frac{\exp(\tau \cdot a_j)}{\sum_{G_\ell \in W} \exp(\tau \cdot a_\ell)}
	\]
	where $\tau \geq 0$ is the (selection) temperature (inverse noise).
\end{definition}

\begin{remark}[Derivation from random utility]
	This is another application of the logit model (Proposition~\ref{prop:logit}). 
	Each wrapped gift $G_j$ has utility:
	\[
	U_j = \tau \cdot a_j + \epsilon_j, \quad \epsilon_j \overset{\text{iid}}{\sim} \text{Gumbel}
	\]
	where $a_j$ is the appearance signal: the player chooses the highest-utility gift, yielding the softmax probabilities.
\end{remark}

Let us consider the limiting behaviors:

\begin{itemize}
	\item $\tau \to 0$, selection becomes uniform (appearance ignored);
	\item $\tau \to \infty$, selection becomes deterministic (highest-appearance gift).
\end{itemize}

Any intermediate $\tau$ captures ``noisy optimization'', a measure of players prefering better-looking gifts, but make errors in evaluation of such.

\subsection{Game structure}

Let us now define the complete decorated game, in which we incorporate all extensions.

\begin{definition}[Feature set]
	The \emph{feature set} $\Features$ is a subset of:
	\[
	\{\PI, \SC, \AD, \BS\}
	\]
	where:
	\begin{itemize}
		\item $\PI$ is partial information, in which wrapped gifts have uncertain value; players observe only appearance signals;
		\item $\SC$ is social costs sensitive to stealing, which incurs relationship and reputation costs;
		\item $\AD$ is adaptive strategy, whereby players dynamically adjust behavior based on game state;
		\item $\BS$ is biased selection, in which gift selection is weighted by appearance.
	\end{itemize}
\end{definition}

\begin{definition}[Parameter vector]
	The \emph{parameter vector} $\Params$ collects all feature-specific constants:
	\[
	\Params = (\sigma_a, \mu_0, \delta, c_0, \alpha, \beta, \gamma, \gamma', \lambda_1, \lambda_2, \lambda_3, \tau)
	\]
\end{definition}

\begin{definition}[Decorated game]\label{def:decorated}
	The \emph{decorated game} is the tuple:
	\[
	\mathcal{G}_d = \langle P, G, \valmat, \appear, \Features, \Params, \mathcal{M} \rangle
	\]
	where:
	\begin{itemize}
		\item $P$ is the player set;
		\item $G$ is the gift set;
		\item $\valmat \in [0,1]^{n \times n}$ is the valuation matrix;
		\item $\appear \in [0,1]^n$ is the appearance;
		\item $\Features \subseteq \{\PI, \SC, \AD, \BS\}$ is the enabled feature set;
		\item $\Params$ is the parameter vector;
		\item $\mathcal{M} \in \{\mathcal{M}_{\text{ind}}, \mathcal{M}_{\text{cor}}, \mathcal{M}_{\text{neg}}\}$ is the valuation model.
	\end{itemize}
\end{definition}

Observe that the base game is contained within the decorated game; or, in other words, the \emph{base game} $\mathcal{G}_b$ is the decorated game with $\Features = \emptyset$:
\[
\mathcal{G}_b = \mathcal{G}_d|_{\Features = \emptyset}
\]
In $\mathcal{G}_b$, players have complete information, no social costs, fixed strategies, and uniform selection.

\begin{proposition}[Feature independence]
	Each feature can be enabled or disabled independently, yielding $2^{|\{\PI, \SC, \AD, \BS\}|} = 16$ distinct game configurations.
\end{proposition}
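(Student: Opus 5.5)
The plan is to show that the map from feature subsets $\Features \subseteq \{\PI,\SC,\AD,\BS\}$ to decorated games $\mathcal{G}_d$ is well defined for every subset and injective. Then the count follows from $|\mathcal{P}(\{\PI,\SC,\AD,\BS\})| = 2^4 = 16$. The argument is essentially structural. For each feature I will identify exactly which component of the decision procedure it modifies, and check that each such modification is well defined whether or not the other three features are active.

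First I would fix the base game $\mathcal{G}_b = \mathcal{G}_d|_{\Features=\emptyset}$ and record four ``slots'' in a player's decision at a state $S_k$. Each slot has a default value and is overwritten by exactly one feature:
\begin{enumerate}
\item the valuation used for wrapped gifts, either $V_i$ or $\hat V_i$ from Definition~\ref{def:perceived-derived} ($\PI$);
\item the cost term in $U_i^{\text{steal}}$, either $0$ or $c(P_i,P_m)$ from Definition~\ref{def:socialcostgrounded} ($\SC$);
\item the steal probability, either fixed or given by Definition~\ref{def:adaptive-derived} together with the frustration dynamics ($\AD$);
\item the opening distribution, either uniform or the softmax of Definition~\ref{def:biased-derived} ($\BS$).
\end{enumerate}

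Next I would verify pairwise compatibility, that is, that no slot's definition presupposes another feature being on or off. Definition~\ref{def:net-utility-grounded} already takes $\hat V_i$ as input, and $\hat V_i$ reduces to $V_i$ when $\PI$ is off. The adaptive probability consumes $\bar U_i^{\text{steal}}$, which includes $c$ when $\SC$ is on and omits it otherwise. Biased selection depends only on $\appear$, which exists in $\mathcal{G}_d$ regardless of the other features. None of these changes touches the mechanics of Section~\ref{sec:basegame}: action sets, locking, and chains are unchanged. So Lemma~\ref{lem:termination} and Theorem~\ref{thm:bijection} hold in every configuration, and each configuration is a legitimate game.

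Finally, for distinctness, I would exhibit for each feature a parameter setting under which toggling it changes the induced distribution over trajectories. For $\PI$, take $\omega<1$ or $\rho>0$, so that $\hat V_i \ne V_i$. For $\SC$, take $c_0>0$. For $\AD$, take $\lambda_2>0$ together with a steal event. For $\BS$, take $\tau>0$ with non-constant $\appear$. This gives 16 genuinely different configurations.

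The main obstacle is the interaction between $\AD$ and $\SC$/$\PI$. The paper's simulation uses a linearized probability in which $\bar U$ terms may disappear, so one could worry that social cost has no effect under $\AD$. I would handle this by working with the exact logit form of Definition~\ref{def:adaptive-derived}, where $c$ enters through $\bar U_i^{\text{steal}}$, and then noting that ``distinct'' requires only that the configurations be distinct as specifications.
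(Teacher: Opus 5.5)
The paper gives no proof here. The proposition is read off from the definition of $\Features$ as an arbitrary subset of $\{\PI,\SC,\AD,\BS\}$. Because $\Features$ is itself a coordinate of the tuple $\mathcal{G}_d$, distinct subsets are already distinct configurations, and the count is $|\mathcal{P}(\{\PI,\SC,\AD,\BS\})|=2^4=16$.

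Your slot decomposition and compatibility check go further than the paper. They justify the word ``independently'' by showing that every subset yields a coherent game whose mechanics, and hence Lemma~\ref{lem:termination} and Theorem~\ref{thm:bijection}, are exactly those of Section~\ref{sec:basegame}. That check plus the subset count is a complete proof.

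Your third step, by contrast, does not establish what it claims, and you should drop it rather than try to repair it.

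\begin{itemize}
\item \textbf{No strategies in the tuple.} $\mathcal{G}_d$ contains no strategy profile, so a configuration does not determine a distribution over trajectories until strategies are fixed.
\item \textbf{Single toggles do not give pairwise distinctness.} Suppose, for each feature, you exhibit one setting in which toggling it changes the distribution. That does not make all $16$ distributions pairwise different. An effect visible when toggling from $\emptyset$ can vanish in another context. You would need a witness for every pair $\Features\neq\Features'$ under a common parameter vector.
\item \textbf{The $\PI$ witness is insufficient.} $\hat V_i\neq V_i$ only on wrapped gifts. Those values affect play only through a rule that reads them: an \texttt{expected\_value}-type comparison, or a softmax over $\hat V_i$ as in the simulation's version of $\BS$. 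Take $\AD$ off, selection uniform or weighted by appearance (Definition~\ref{def:biased-derived} uses $a_j$, not $\hat V_i$), and strategies such as \texttt{always\_steal}, \texttt{threshold} or \texttt{mean\_based}, which compare only opened gifts. Then no decision ever reads a wrapped-gift value, and toggling $\PI$ changes nothing.
\end{itemize}

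The paper's appendix tables show exactly this. Every row with $\PI$ enabled coincides with the corresponding row without it, so the reported simulation distinguishes only eight of the sixteen configurations.

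Your closing remark, that ``distinct'' need only mean distinct as specifications, is the right reading. Under it the behavioral step is unnecessary.
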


\section{Experimental (simulation) framework}\label{sec:experimentframework}

We conducted a full factorial simulation study to isolate the effects of behavioral features on game dynamics and strategy performance. The experiment crossed four binary feature flags (yielding $2^4=16$ configurations) with three valuation models ( 48 total conditions), running 5,000 games per condition with 29 players each.

\subsection{Player strategies}

Each player is assigned one of six decision strategies at game start, with uniform probability. Strategies govern the steal-versus-open decision when a player's turn arrives. Player strategies are given in Table~\ref{tab:playerstrat}.

% Please add the following required packages to your document preamble:
% \usepackage{booktabs}
\begin{table}[H]
	\centering
	\caption{Player strategies used throughout}
	\label{tab:playerstrat}
	\begin{tabular}{@{}lp{0.75\textwidth}@{}}
		\toprule
		\bfseries Strategy	&  \bfseries Decision rule \\ \midrule
		\texttt{always\_open}	& open wrapped gift if any remain; steal only when forced \\
		\texttt{always\_steal}	&  steal highest net-value target; open only when no target exists\\
		\texttt{coin\_flip}	&  steal (best target) with probability 0.5; else open\\
		\texttt{mean\_based}	& steal if best target exceeds mean value of opened gifts; else open  \\
		\texttt{threshold}	&  steal if best target's net value exceeds fixed threshold $\tau=0.6$; else open\\
		\texttt{expected\_value}	&  steal if best target's net value exceeds expected value of opening random wrapped gift; else open\\ \bottomrule
	\end{tabular}
\end{table}

Note that the \texttt{expected\_value} strategy is the only one that directly compares the stealing and opening options using the player's belief state. Under partial information, this comparison uses Bayesian posterior means with CARA risk adjustment; under full information, it uses true evaluations.

Net value of stealing is computed as:

$$
U_i^{\text {steal }}\left(P_m\right)=\hat{V}_i\left(G_m\right)-\hat{V}_i\left(G_i\right)-c\left(P_i, P_m\right)
$$

where $\hat{V}_i$ is perceived value (true or Bayesian posterior), $G_m$ is the target's gift, $G_i$ is the stealer's current gift (or 0 if none), and $c(\cdot)$ is the social cost function (zero unless \texttt{social\_costs} is enabled).

\subsection{Valuation models}

Player valuations $v_{i j}$ (player $i$ 's subjective value for gift $j$ ) are drawn according to one of three models:

\begin{description}[labelwidth =\widthof{\bfseries Negatively correlated}, leftmargin = !]
	\item[Independent] Each $v_{i j} \sim\operatorname{Uniform}(0,1)$ independently; no consensus exists about gift quality; one player's trash may be another's treasure.
	
	\item[Correlated] Gifts have latent objective quality $q_j \sim \operatorname{Uniform}(0,1)$; player valuations combine this common component with idiosyncratic noise:
	
	$$
	v_{i j}=\rho \cdot q_j+\sqrt{1-\rho^2} \cdot \varepsilon_{i j}, \quad \varepsilon_{i j} \sim \operatorname{Uniform}(0,1)
	$$
	
	with $\rho=0.7$. This generates consensus: gifts with high $q_j$ are desirable to most players, modeling scenarios where ``everyone wants the book, nobody wants the gloves''.
	
	\item[Negatively correlated] Players divide into two camps with opposing preferences: camp A's valuations equal the base quality $q_j$; camp B's valuations equal $1-q_j$, plus Gaussian noise ( $\sigma=0.2$ ). This models polarized tastes (e.g., wine enthusiasts vs. teetotalers).
\end{description}

\subsection{Behavioral features}

Four toggleable features introduce psychological and social realism beyond the base game mechanics.

\begin{description}[labelwidth =\widthof{\bfseries Partial Information (PI)}, leftmargin = !]
	\item[Partial Information (PI)] Players observe true values only for opened gifts; for wrapped gifts, they observe a noisy appearance signal $a_j=q_j+\eta_j$ where $\eta_j \sim \mathcal{N}\left(0, \sigma_a^2\right)$ with $\sigma_a=0.3$. Players form Bayesian posteriors with prior $\mu_0=0.5, \sigma_0^2=0.25$. A CARA risk adjustment penalizes uncertainty with risk aversion $\rho=0.5$. When \texttt{biased\_selection} is also enabled, gift opening uses these posteriors rather than raw appearance signals.
	
	\item[Social Costs (SC)] Stealing incurs utility penalties reflecting norm violation and relationship damage:
	
	$$
	c\left(P_i, P_m\right)=c_0+c_0 \cdot \alpha \cdot H_i\left(P_m\right)+\beta \cdot n_i^{\text {steal }}
	$$
	
	where $c_0=0.05$ is base awkwardness, $\alpha=2.0$ is the repeat-offense multiplier, $H_i\left(P_m\right)$ counts prior steals from $P_m$, and $\beta=0.1$ is reputation decay per lifetime steal. This captures the intuition that stealing from the same person repeatedly-or developing a reputation as aggressive-carries increasing social friction.
	
	\item[Active Dynamics (AD)] Players adjust steal probability based on game state and emotional state. The probability of attempting to steal follows a logit-linear model:
	
	$$
	p_{\text {steal }}=\operatorname{clip}\left(p_0+\lambda_1 \cdot \phi+\lambda_2 \cdot f_i-\lambda_3 \cdot s_i, 0.05,0.95\right)
	$$
	
	where $\phi=k / n$ is game phase (fraction of rounds elapsed), $f_i \in[0,1]$ is frustration (incremented by $\gamma=0.15$ when stolen from, decaying by $\gamma^{\prime}=0.05$ per round), and $s_i$ is current satisfaction (perceived value of held gift). Parameters: $\lambda_1=0.2, \lambda_2=0.5$, $\lambda_3=0.3$. This creates feedback loops: victims become aggressive, satisfied players become passive, and late-game play intensifies.
	
	\item[Biased Selection (BS)] When opening a gift, players do not choose uniformly at random; instead, they select according to softmax weights over perceived values:
	
	$$
	P\left(\text { open } G_j\right)=\frac{\exp \left(\tau \cdot \hat{V}_i\left(G_j\right)\right)}{\sum_k \exp \left(\tau \cdot \hat{V}_i\left(G_k\right)\right)}
	$$
	
	with temperature $\tau=2$. Under partial information, $\hat{V}_i$ is the Bayesian certainty equivalent; otherwise, it equals true value (which players can see in the full-information baseline). This models the tendency to ``judge a book by its cover'', selecting promising-looking wrapped gifts.
\end{description}

\subsection{Experiment (simulation) design}

We enumerate all $2^4=16$ feature subsets: BASE (no features), each singleton \{PI\}, \{SC\}, \{AD\}, \{BS\}, all pairs, all triples, and FULL (all four). This design permits estimation of main effects and all interaction terms. The 16-configuration ablation is run independently for each of the three valuation models, yielding 48 conditions total. This allows us to assess whether feature effects are robust across preference structures or whether they depend on the degree of value consensus.

We set with the following parameters, given in Table~\ref{tab:params}:

% Please add the following required packages to your document preamble:
% \usepackage{booktabs}
\begin{table}[H]
	\centering
	\caption{Parameters for experiment (simulation)}
	\label{tab:params}
	\begin{tabular}{@{}lll@{}}
		\toprule
		\bfseries Parameter	& \bfseries Value & \bfseries Notes \\ \midrule
		Player number	& 29 & odd number avoids ties \\
		Games per condition	& 5000 & sufficient for stable means \\
		Random seed	& 42 & conventional \\
		Max steals per round	& 1 & standard game rule \\
		max steals total	& 0 & unlimited lifetime \\ \bottomrule
	\end{tabular}
\end{table}

We track measures across multiple layers in the experiment (simulation). At the game-level, we check for steals per game and chain length; at the seat-position level, mean final value by seat to check for positional advantage, and seat position comparisons across all seats; at the strategy level, we check the mean final value by each strategy as well as the gap in performace between aggressive and passive play; finally, we measure the main effect of each feature relative to the BASE game, any interaction effects that are deviation from additive model, and strategy ranking shifts across configurations.

We predict:

\begin{enumerate}
	\item PI reduces stealing under correlated valuations, in which uncertainty about wrapped gifts makes teh open option relatively more attractive when beliefs are diffuse.
	\item SC reduces stealing universally since social costs impose a direct tax on theft, dampening aggression regardless of preference structure.
	\item AD creates negative feedback; satisfied players steal less frequently while frustrated players steal more, and the net effect depends on which feedback dominates.
	\item BS increases stealing under correlated valuations, in which biased selection causes players to open consensus-desitedable gifts, which then become high-value steal targets.
	\item For PI and BS interaction, when both are enabled, Bayesian posteriors (as opposed to raw signals) drive gift selection, amplifying (or dampening) the BS effect depending on prior precision.
	\item Seat 1 dominates in all cases; Seat 2 is worst in all cases.
	\item \texttt{always\_steal} dominates in BASE game, since without social costs or penalties, aggressive play captures high-value gifts.
	\item AD erodes \texttt{always\_steal} advantage, since satisfaction dampening and reputation costs punish persistent aggression.
\end{enumerate}

\section{Experimental (simulation) results}\label{sec:experimenrresults}

We conducted a full factorial simulation study crossing four binary behavioral features ($2^4 = 16$ configurations) with three valuation models, yielding 48 experimental conditions. Each condition ran 5,000 games with $n = 29$ players, for a total of 240,000 simulated games. Table~\ref{tab:maineffects} summarizes the main effects of each behavioral feature on stealing frequency, measured as deviation from the BASE configuration within each valuation model.

\begin{table}[H]
	\centering
	\caption{Main effects on steals per game}
	\label{tab:maineffects}
	\begin{tabular}{@{}lccc@{}}
		\toprule
		\bfseries Feature & \bfseries Independent & \bfseries Correlated & \bfseries Negative Correlated \\
		\midrule
		BASE (reference) & 70.8 & 104.2 & 74.2 \\
		PI  & $+0.7$ ($+1.0\%$) & $+1.5$ ($+1.4\%$) & $+0.9$ ($+1.2\%$) \\
		SC  & $-19.2$ ($-27\%$) & $-50.2$ ($-48\%$) & $-20.2$ ($-27\%$) \\
		AD  & $-16.6$ ($-23\%$) & $-39.3$ ($-38\%$) & $-17.9$ ($-24\%$) \\
		BS  & $+1.8$ ($+2.6\%$) & $+11.6$ ($+11\%$) & $+1.5$ ($+2.0\%$) \\
		FULL & $-17.5$ ($-25\%$) & $-43.0$ ($-41\%$) & $-19.5$ ($-26\%$) \\
		\bottomrule
	\end{tabular}
\end{table}

\subsection{Effect of valuation model}

Correlated valuations produce substantially more stealing than independent or negatively correlated preferences. Under BASE, correlated games average 104.2 steals versus 70.8 for independent: 47\% increase. This confirms that value consensus intensifies competition: when players agree about which gifts are desirable, those gifts become contested resources, from which follows repeated theft cascades.

Chain lengths follow the same pattern. Correlated BASE produces mean chain length 5.08 versus 3.47 for independent. The longest chains occur under correlated valuations with biased selection enabled: mean length 5.61, indicating that players systematically open high-consensus gifts which then become focal points for extended stealing sequences.

Negatively correlated valuations produce intermediate results (74.2 steals, chain length 3.64), so partial disagreement about gift quality that neither eliminates nor maximizes competition.

\subsection{Partial Information (PI)}

Contrary to our hypothesis, PI produces a small \emph{increase} (!) in stealing ($+1.0\%$ to $+1.4\%$) rather than a decrease. The effect is consistent across valuation models, but modest in magnitude. Under partial information, players face uncertainty about wrapped gifts. The expected value of opening becomes
\[
\mathbb{E}[\text{open}] = \frac{1}{|W|} \sum_{g \in W} \mathrm{CE}_i(g),
\]
where $\mathrm{CE}_i(g)$ is the certainty equivalent incorporating Bayesian posteriors and risk aversion; this value is (typically) lower than the expected value under full information because posteriors regress toward the prior mean $\mu_0$, and risk aversion penalizes residual variance.

Meanwhile, stealing targets are opened gifts with known values. The asymmetry of uncertain opening versus certain stealing slightly favors theft. The effect is small: the \texttt{expected\_value} strategy, which explicitly compares these options, constitutes only one-sixth of the population.

PI shows no interaction with AD, confirming that the adaptive dynamics operates on perceived values, which PI modifies consistently across decision contexts.

\subsection{Social Costs (SC)}

SC produces the largest main effect: $-27\%$ steals under independent valuations, $-48\%$ under correlated, which confirms that social friction substantially dampens aggression.

Recall the cost function
\begin{align*}
	c(P_i, P_m) = c_0 + c_0 \cdot \alpha \cdot H_i(P_m) + \beta \cdot n_i^{\mathrm{steal}}
\end{align*}

imposes three penalties: for base cost ($c_0 = 0.05$), any steal incurs social awkwardness; for repeat targeting ($\alpha = 2.0$), stealing from the same victim compounds relational damage; for reputation decay ($\beta = 0.1$), serial stealing accumulates stigma.

Under correlated valuations, the effect is amplified because the same high-quality gifts attract repeated theft attempts; without SC, a desirable gift might be stolen some 4--5 times, while with SC, accumulating costs render later steals unprofitable.

SC differentially affects strategies, seen in Table~\ref{tab:scimpact}. Paradoxically, \texttt{always\_steal} \emph{improves} slightly under SC, while selective strategies degrade; this occurs because SC raises the bar for profitable steals. Selective strategies reject marginal opportunities, while \texttt{always\_steal} continues capturing whatever remains. We would expect this: in a population where others become passive, the remaining aggressive player faces reduced competition.

\begin{table}[H]
	\centering
	\caption{Strategy performance under social costs (independent valuations)}
	\label{tab:scimpact}
	\begin{tabular}{@{}lccc@{}}
		\toprule
		\bfseries Strategy & \bfseries BASE & \bfseries $+$SC & $\Delta$ \\
		\midrule
		\texttt{always\_steal}   & 0.914 & 0.917 & $+0.003$ \\
		\texttt{threshold}       & 0.915 & 0.857 & $-0.058$ \\
		\texttt{mean\_based}     & 0.915 & 0.882 & $-0.033$ \\
		\texttt{expected\_value} & 0.915 & 0.887 & $-0.028$ \\
		\bottomrule
	\end{tabular}
\end{table}

\subsection{Adaptive Dynamics (AD)}

AD reduces stealing by 23--38\%, with larger effects under correlated valuations. The mechanism operates through satisfaction dampening: players holding valuable gifts become less likely to steal, creating negative feedback that stabilizes allocations.

Table~\ref{tab:adimpact} shows that \texttt{always\_open} gains 9.5 percentage under AD while \texttt{always\_steal} loses only 1.6 points; the asymmetry arises because AD penalizes theft through frustration feedback, i.e., victims become aggressive, creating retaliation cascades, while passive play avoids triggering these dynamics entirely.

\begin{table}[H]
	\centering
	\caption{Strategy performance under adaptive dynamics (correlated valuations)}
	\label{tab:adimpact}
	\begin{tabular}{@{}lccc@{}}
		\toprule
		\bfseries Strategy & \bfseries BASE & \bfseries $+$AD & $\Delta$ \\
		\midrule
		\texttt{always\_open}    & 0.564 & 0.659 & $+0.095$ \\
		\texttt{coin\_flip}      & 0.685 & 0.789 & $+0.104$ \\
		\texttt{always\_steal}   & 0.910 & 0.894 & $-0.016$ \\
		\texttt{expected\_value} & 0.923 & 0.790 & $-0.133$ \\
		\bottomrule
	\end{tabular}
\end{table}

The \texttt{expected\_value} strategy suffers most ($-13.3$ points) because it optimizes for immediate expected utility without accounting for downstream retaliation; sophisticated single-shot optimization underperforms simple heuristics in environments with social feedback, a finding consistent with ecological rationality perspectives  \cite{todd2012ecological,gigerenzer2021axiomatic}.

\subsection{Biased Selection (BS)}

BS increases stealing, with effects concentrated under correlated valuations ($+11.6\%$) versus independent ($+2.6\%$); this confirms our hypothesis: when players preferentially open promising-looking gifts, and appearance correlates with consensus quality,and the opened gift pool becomes skewed toward universally desirable items.

BS produces the longest observed chains: mean 5.61 under correlated valuations versus 5.08 for BASE. Players open ``winners'', which become high-value targets triggering extended cascades.

Observe Table~\ref{tab:seat2bs}: BS is the \emph{only} feature that improves Seat 2 outcomes. Under BS, early players preferentially open high-signal gifts; Seat 2, acting second, can immediately steal from Seat 1 before the gift pool quality degrades. Without BS, Seat 1 opens randomly, possibly selecting a low-value gift, giving Seat 2 nothing worth stealing.

\begin{table}[H]
	\centering
	\caption{Seat 2 mean final value by configuration (correlated valuations)}
	\label{tab:seat2bs}
	\begin{tabular}{@{}lc@{}}
		\toprule
		\bfseries Configuration & \bfseries Seat 2 value \\
		\midrule
		BASE & 0.690 \\
		$+$SC & 0.583 \\
		$+$AD & 0.647 \\
		$+$BS & \textbf{0.743} \\
		\bottomrule
	\end{tabular}
\end{table}

\subsection{Interaction effects}

Consider first the interaction between SC and AD. The combination is slightly subadditive for stealing reduction, and we observe:

\begin{itemize}
	\item for SC alone: $-50.2$ (correlated);
	\item for AD alone: $-39.3$;
	\item for the interaction SC $+$ AD: $-43.4$ (not $-89.5$)/
\end{itemize}

Both features dampen aggression through overlapping channels: SC makes steals costly; AD makes satisfied players passive. A player who abstains due to SC also fails to trigger AD's frustration feedback, so the mechanisms partially substitute rather than compound.

For SC and BS, we have near-additive results under correlated valuations:

\begin{itemize}
	\item for SC alone: $-50.2$;
	\item for BS alone: $+11.6$;
	\item for SC $+$ BS: $-46.7$ (approximately $-50.2 + 11.6 = -38.6$).
\end{itemize}

The features operate on different margins: BS affects which gifts enter play; SC affects whether those gifts get stolen. They compose approximately independently.

For PI and BS, miminimal interaction follows. With our implementation ensuring BS uses Bayesian posteriors when PI is enabled, the combination behaves as expected: PI$+$BS $\approx$ BS in stealing frequency (116.4 versus 115.7 under correlated valuations), indicating that risk-adjusted posteriors produce selection behavior similar to raw appearance signals.

For a three-way interaction of SC, AD, and BS, under correlated valuations, produces 61.3 steals, nearly identical to FULL (61.2). In this way, PI contributes negligibly to the full model's dynamics, consistent with its small main effect.

\subsection{Strategy rankings}

Table~\ref{tab:strategyrankings} shows strategy performance across configurations under correlated valuations, where behavioral effects are most apparent; text boldface indicates highest-performing strategy within each configuration.

\begin{table}[H]
	\centering
	\caption{Strategy performance across configurations (correlated valuations)}
	\label{tab:strategyrankings}
	\begin{tabular}{@{}lccccc@{}}
		\toprule
		\bfseries Strategy & \bfseries BASE & \bfseries SC & \bfseries AD & \bfseries BS & \bfseries FULL \\
		\midrule
		\texttt{always\_steal}   & 0.910 & \textbf{0.924} & \textbf{0.894} & 0.922 & \textbf{0.901} \\
		\texttt{expected\_value} & \textbf{0.923} & 0.823 & 0.790 & \textbf{0.935} & 0.779 \\
		\texttt{threshold}       & 0.919 & 0.839 & 0.786 & 0.932 & 0.778 \\
		\texttt{mean\_based}     & 0.917 & 0.806 & 0.830 & 0.934 & 0.828 \\
		\texttt{coin\_flip}      & 0.685 & 0.711 & 0.789 & 0.686 & 0.778 \\
		\texttt{always\_open}    & 0.564 & 0.593 & 0.659 & 0.579 & 0.641 \\
		\bottomrule
	\end{tabular}
\end{table}

We observe the following:

\begin{itemize}
	
	\item \texttt{expected\_value} wins in the BASE game. The decision-theoretic strategy achieves highest performance (0.923) by explicitly comparing steal-versus-open expected utilities. 
	\item \texttt{always\_steal} is robust; it ranks first or second in four of five configurations, degrading only under AD where satisfaction dampening and frustration feedback penalize sustained aggression (hence the score of 0.894 in that case).
	\item Under BS alone, \texttt{expected\_value} reaches 0.935, the highest single-configuration performance observed! Biased selection creates exploitable structure: players who recognize high-value targets gain disproportionately.
	\item  FULL compresses rankings. The gap between best (0.901) and worst (0.641) strategies shrinks from 0.36 in BASE to 0.26 in FULL; behavioral features introduce noise and feedback loops that reduce returns to sophisticated optimization.
	\item \texttt{always\_open} improves relatively: from 0.564 (BASE) to 0.641 (FULL), a gain of 7.7 percentage points. It seems to be the case that passive play becomes more viable as social costs and adaptive dynamics punish aggressive strategies.
\end{itemize}

\subsection{Effects of seat position}

Observe: Table~\ref{tab:seatadvantage} reports the mean final values by seat position across configurations.

\begin{table}[H]
	\centering
	\caption{Seat advantage by configuration (correlated valuations)}
	\label{tab:seatadvantage}
	\begin{tabular}{@{}lcccc@{}}
		\toprule
		\bfseries Configuration & \bfseries Seat 1 & \bfseries Seat 2 & \bfseries Seat 29 & \bfseries gap ($1--2$) \\
		\midrule
		BASE & 1.000 & 0.690 & 0.919 & 0.310 \\
		SC   & 0.965 & 0.583 & 0.924 & 0.382 \\
		AD   & 1.000 & 0.647 & 0.897 & 0.353 \\
		BS   & 1.000 & 0.743 & 0.880 & 0.257 \\
		FULL & 0.938 & 0.669 & 0.858 & 0.269 \\
		\bottomrule
	\end{tabular}
\end{table}

Results in Table~\ref{tab:seatadvantage} are unsurprising. Across all 48 configurations, Seat 1 achieves the highest mean value, reaching a perfect 1.000 (maximum possible) under BASE, AD, and BS with correlated valuations. The final-swap rule provides an insurmountable advantage: Seat 1 observes the entire game, then claims the best available gift without bearing any social cost.

Seat 2 combines early vulnerability (27 subsequent players can steal from Seat 2) with exposure to Seat 1's final swap. It follows that, under SC, Seat 2's disadvantage worsens (0.583) as reduced aggregate stealing fails to protect early movers, while Seat 1's final swap remains exempt from social costs, preserving the asymmetry.

The final primary turn allows substantial information advantage, but Seat 29 remains vulnerable to Seat 1's swap, but only vulnerable to that swap; mean values range from 0.852 to 0.924, consistently below Seat 1, but substantially above Seat 2.

\subsection{Results}

Let us consider the results against our initial hypoetheses. 

\begin{enumerate}
	\item PI reduces stealing under correlated valuations, in which uncertainty about wrapped gifts makes teh open option relatively more attractive when beliefs are diffuse.
	\begin{description}[labelwidth =\widthof{\bfseries Rejected}, leftmargin = !]
		\item[Rejected] PI slightly \emph{increases} stealing ($+1.4\%$) due to asymmetric uncertainty favoring theft over opening.
	\end{description}
	\item SC reduces stealing universally since social costs impose a direct tax on theft, dampening aggression regardless of preference structure.
	\begin{description}[labelwidth =\widthof{\bfseries Confirmed}, leftmargin = !]
		\item[Confirmed] SC produces the largest main effect ($-27\%$ to $-48\%$) across all valuation models.
	\end{description}
	\item AD creates negative feedback; satisfied players steal less frequently while frustrated players steal more, and the net effect depends on which feedback dominates.
	\begin{description}[labelwidth =\widthof{\bfseries Confirmed}, leftmargin = !]
		\item[Confirmed] AD reduces stealing by 23--38\% through satisfaction dampening.
	\end{description}
	\item BS increases stealing under correlated valuations, in which biased selection causes players to open consensus-desitedable gifts, which then become high-value steal targets.
	\begin{description}[labelwidth =\widthof{\bfseries Confirmed}, leftmargin = !]
		\item[Confirmed] BS increases stealing by 11\% under correlated valuations; only 2--3\% under independent.
	\end{description}
	\item For PI and BS interaction, when both are enabled, Bayesian posteriors (as opposed to raw signals) drive gift selection, amplifying (or dampening) the BS effect depending on prior precision.
	\begin{description}[labelwidth =\widthof{\bfseries Weak}, leftmargin = !]
		\item[Weak] Bayesian posteriors and raw appearance signals produce similar selection behavior; minimal interaction detected.
	\end{description}
	\item Seat 1 dominates in all cases; Seat 2 is worst in all cases.
	\begin{description}[labelwidth =\widthof{\bfseries Weak}, leftmargin = !]
		\item[Weak] Pattern holds across all 48 configurations without exception.
	\end{description}
	\item \texttt{always\_steal} dominates in BASE game, since without social costs or penalties, aggressive play captures high-value gifts.
	\begin{description}[labelwidth =\widthof{\bfseries Partially confirmed}, leftmargin = !]
		\item[Partially confirmed] \texttt{expected\_value} slightly outperforms (0.923 vs.\ 0.910) via explicit steal-vs-open comparison.
	\end{description}
	\item AD erodes \texttt{always\_steal} advantage, since satisfaction dampening and reputation costs punish persistent aggression.
	\begin{description}[labelwidth =\widthof{\bfseries Partially confirmed}, leftmargin = !]
		\item[Partially confirmed] AD reduces \texttt{always\_steal} by 1.6 points while boosting passive strategies by 9--10 points, narrowing but not eliminating the gap.
	\end{description}
\end{enumerate}

Three dominant effects govern the gift exchange dynamics in our simulation: social costs are the primary regulator; adaptive dynamics favor passive play; value correlation amplifies all effects. We elaborate in turn:

\begin{enumerate}
	\item SC reduces stealing by 27--48\%, far more remarkable compared to the  other features. Real-world gift exchange games likely exhibit implicit social costs, relationship maintenance, norm compliance, reputation concerns, that substantially dampen the aggressive play predicted by purely strategic models.
	
	\item AD improves \texttt{always\_open} by approximately 10 percentage points while barely affecting \texttt{always\_steal}. In environments with social feedback, simple passive heuristics outperform their baseline expectations, of which we say is an instance of the broader principle that ecological rationality can dominate optimization in complex social environments.
	
	\item Every behavioral feature produces larger effects under correlated valuations. When consensus exists about gift quality, competition intensifies, and the scope for strategic differentiation expands. Independent valuations, by contrast, create a lower-stakes environment where ``one player's trash is another's treasure'' naturally reduces conflict.
\end{enumerate}

Interestingly, partial information, despite its theoretical appeal, contributes minimally to observed dynamics. The Bayesian belief machinery affects decisions only through the \texttt{expected\_value} strategy and biased gift selection; even there, risk-adjusted posteriors produce behavior similar to naive signal-following. Uncertainty about gift quality, while realistic, is not the primary driver of strategic complexity in gift exchange games.

\section{Counting game trajectories}\label{sec:trajectory}

Let us now derive the number of distinct ways the gift exchange game can unfold as a function of the number of players $n$ and the stealing limit parameters.

\begin{definition}[Game trajectory]
	A \emph{game trajectory} is a complete sequence of actions from the initial state (all gifts wrapped, no ownership) to the final state (all gifts owned, game concluded); two trajectories are distinct if they differ in any action taken or any gift selected.
\end{definition}

We count trajectories by decomposing the game into rounds and analyzing each round's contribution.

Recall, in round $k$ (for $k = 1, \ldots, n$), there are $k - 1$ opened gifts held by previous players,  $n - k + 1$ wrapped gifts remaining, and player $P_k$ takes their primary turn. The primary player either:
\begin{enumerate}
	\item \textbf{opens} a wrapped gift (choosing from $n - k + 1$ options);
	\item \textbf{steals} from one of $k - 1$ players, triggering a chain.
\end{enumerate}

A stealing chain of length $\ell$ consists of $\ell$ consecutive steals followed by one open. Under our standard rules ($\ell_{\text{round}} = 1$), each steal locks one gift for the remainder of the chain.

\begin{lemma}[Chain patterns]\label{lem:chain-patterns}
	In round $k$, with $k - 1$ potential victims, the number of distinct chain patterns of length $\ell$ (i.e., $\ell$ steals followed by an open) is:
	\[
	\frac{(k-1)!}{(k-1-\ell)!}
	\]
	for $\ell \in \{1, 2, \ldots, k-1\}$.
\end{lemma}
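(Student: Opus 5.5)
The plan is to identify a chain pattern of length $\ell$ in round $k$ with the ordered sequence of victims $(P_{m_1}, \ldots, P_{m_\ell})$. Here $P_{m_1}$ is the player robbed by the primary player $P_k$, and $P_{m_{j+1}}$ is the player robbed by the displaced $P_{m_j}$. The chain ends when $P_{m_\ell}$ opens. The choice of which wrapped gift is opened is not part of the pattern; it is counted separately in the round decomposition. The claim then reduces to showing two things: the admissible victim sequences are exactly the injective sequences of length $\ell$ drawn from the $k-1$ gift holders present at the start of round $k$, and at step $j+1$ there are exactly $k-1-j$ choices. The product $(k-1)(k-2)\cdots(k-\ell) = (k-1)!/(k-1-\ell)!$ is then the count.

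The key step is an invariant, proved by induction on the number $j$ of steals already made in the chain. After $j$ steals, the following holds.
\begin{itemize}
\item The chain-locked set $C_k$ consists exactly of the $j$ gifts originally held by $P_{m_1}, \ldots, P_{m_j}$.
\item Each of these gifts is now held by the thief who took it: $P_k$ or some earlier victim.
\item The currently displaced player $P_{m_j}$ holds $\bot$.
\item The $k-1-j$ original holders not yet victimized still hold their original, unlocked gifts.
\end{itemize}
The base case $j=0$ is the start of round $k$. By Theorem~\ref{thm:bijection}'s inductive argument, exactly $k-1$ players hold opened gifts, and $P_k$ holds nothing. The inductive step is read off Definition~\ref{def:steal-action}: a steal moves one unlocked gift to the actor, empties the victim, and adds that gift to $C_k$. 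Given the invariant, Definition~\ref{def:valid-targets} applies directly. Players holding locked gifts are excluded by the chain constraint, and the displaced actor holds $\bot$. Under $(\ell_{\text{round}},\ell_{\text{total}})=(1,0)$, the round constraint $R_k(G)<1$ is equivalent to $G\notin C_k$, and the total constraint is vacuous. So the valid targets are precisely the $k-1-j$ untouched original holders.

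From the invariant, the count follows immediately. Every valid victim sequence is injective into the $k-1$ original holders. Conversely, every injective sequence of length $\ell\le k-1$ is realizable, since at each step the next intended victim is still a valid target. Opening is always permitted, because by the argument of Lemma~\ref{lem:termination} a wrapped gift remains available during round $k$, so the chain can stop after exactly $\ell$ steals. Distinct victim sequences give distinct action sequences, so there is no overcounting. As a consistency check, $\ell=0$ gives $1$, and $\ell=k-1$ gives $(k-1)!$, matching the termination bound.

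The main obstacle is the invariant, specifically ruling out the thief $P_k$ and earlier victims as later targets. This requires tracking that what they hold is precisely a locked gift, and that the round limit adds no restriction beyond chain locking. I would state this carefully for the standard parameters only, and remark that for $\ell_{\text{round}}\neq 1$ the count changes only through the total-limit exclusions.
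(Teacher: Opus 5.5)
Your proposal is correct and takes the same approach as the paper: you identify a length-$\ell$ pattern with an ordered sequence of distinct victims and multiply the available choices $(k-1)(k-2)\cdots(k-\ell)$. Your invariant on $C_k$ and on current holdings makes rigorous the step the paper only asserts, namely that exactly one fewer target is available at each step because the thief and earlier victims hold only chain-locked gifts and the displaced actor holds $\bot$.
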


\begin{proof}
	The first steal chooses from $k - 1$ victims. The second steal chooses from $k - 2$ (excluding the player whose gift is now locked). Continuing, the $\ell$-th steal chooses from $k - \ell$ options. The product is:
	\[
	(k-1)(k-2) \cdots (k-\ell) = \frac{(k-1)!}{(k-1-\ell)!}
	\]
\end{proof}

Let $A(k)$ denote the round action count, the number of distinct action patterns in round $k$, ignoring which specific wrapped gift is opened. Then:
\begin{equation}
	A(k) = 1 + \sum_{\ell=1}^{k-1} \frac{(k-1)!}{(k-1-\ell)!},
\end{equation}
where the leading $1$ accounts for the option to open immediately without stealing.

\begin{lemma}[Simplification]\label{lem:simplify}
	The round action count satisfies:
	\[
	A(k) = \sum_{j=0}^{k-1} \frac{(k-1)!}{j!}
	\]
	This is sequence \textup{A000522} in the OEIS, evaluated at $k - 1$.
\end{lemma}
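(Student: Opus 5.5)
The plan is a direct reindexing of the defining sum for $A(k)$, followed by an identification of the result with the OEIS sequence. Write $m := k-1$ for brevity, so that
\[
A(k) = 1 + \sum_{\ell=1}^{m} \frac{m!}{(m-\ell)!}.
\]
First, I will absorb the leading $1$ into the sum as the $\ell = 0$ term, since $m!/(m-0)! = 1$. This gives $A(k) = \sum_{\ell=0}^{m} m!/(m-\ell)!$. Combinatorially, this term is the immediate-open option, a chain with zero steals; it is consistent with Lemma~\ref{lem:chain-patterns} read at $\ell=0$ (an empty product).

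Next, I will substitute $j := m - \ell$. As $\ell$ runs over $\{0,1,\ldots,m\}$, $j$ runs over the same set in reverse order. Since the summation is finite, reordering does not change its value, and we obtain
\[
A(k) = \sum_{j=0}^{m} \frac{m!}{j!} = \sum_{j=0}^{k-1} \frac{(k-1)!}{j!},
\]
which is the claimed identity.

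Finally, for the OEIS identification, I will recall that A000522 is the sequence $a(m) = \sum_{j=0}^{m} m!/j!$, which counts arrangements (ordered selections of any size) of an $m$-set. Our sum is exactly $a(k-1)$. This matches the combinatorial meaning of the lemma: a round's action pattern is an ordered sequence of distinct victims drawn from the $k-1$ holders.

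There is no real obstacle here. The only point needing care is the boundary term, namely checking that the standalone $1$ corresponds precisely to $\ell=0$ (equivalently $j=m$), so that nothing is double-counted or dropped. As a sanity check, I will verify small cases: $k=1$ gives $A(1)=1$, $k=2$ gives $2$, and $k=3$ gives $5$, matching A000522's initial terms $1,2,5,16,\ldots$.
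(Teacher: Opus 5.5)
Your proposal is correct and takes essentially the same approach as the paper. The only difference is order: the paper substitutes $j = k-1-\ell$ first to get $\sum_{j=0}^{k-2}$ and then absorbs the leading $1$ as the $j = k-1$ term, while you absorb the $1$ as the $\ell = 0$ term before reindexing.
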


\begin{proof}
	Simply substitute $j = k - 1 - \ell$ in the sum:
	\begin{align*}
		A(k) &= 1 + \sum_{\ell=1}^{k-1} \frac{(k-1)!}{(k-1-\ell)!} \\
		&= 1 + \sum_{j=0}^{k-2} \frac{(k-1)!}{j!} \\
		&= \frac{(k-1)!}{(k-1)!} + \sum_{j=0}^{k-2} \frac{(k-1)!}{j!} \\
		&= \sum_{j=0}^{k-1} \frac{(k-1)!}{j!}
	\end{align*}
\end{proof}

The first several values are:
\begin{center}
	\begin{tabular}{c|cccccccc}
		$k$ & 1 & 2 & 3 & 4 & 5 & 6 & 7 & 8 \\
		\hline
		$A(k)$ & 1 & 2 & 5 & 16 & 65 & 326 & 1957 & 13700
	\end{tabular}
\end{center}

\begin{remark}[Asymptotic growth]
	For large $m$, $\sum_{j=0}^{m} m!/j! \approx m! \cdot e$, where $e \approx 2.718$. 
	Thus $A(k) \sim (k-1)! \cdot e$ as $k \to \infty$.
\end{remark}

Now, let us get the count for a game of standard stealing rules.

\begin{theorem}[Trajectory count]\label{thm:trajectory-count}
	For a gift exchange game with $n$ players under standard stealing rules ($\ell_{\textup{round}} = 1$, $\ell_{\textup{total}} = \infty$), the number of distinct game trajectories (excluding the final swap) is:
	\begin{equation}
		T(n) = n! \times \prod_{k=1}^{n} A(k) = n! \times \prod_{k=0}^{n-1} \left( \sum_{j=0}^{k} \frac{k!}{j!} \right)
	\end{equation}
\end{theorem}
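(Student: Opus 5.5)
The count factors by a multiplication principle over rounds. A trajectory (excluding the final swap) is the concatenation of the $n$ round-trajectories $R_1,\dots,R_n$. Each round is determined by two independent pieces of data: its \emph{action pattern} (the sequence of steal victims, ending in an open) and the identity of the wrapped gift opened at the end of the round. I will set up a bijection between trajectories and tuples $\bigl((\pi_1,g_1),\dots,(\pi_n,g_n)\bigr)$. Here $\pi_k$ is one of the $A(k)$ action patterns counted in Lemma~\ref{lem:chain-patterns} and Lemma~\ref{lem:simplify}. The sequence $(g_1,\dots,g_n)$ is an ordering of $G$.

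The key observation is that the number of choices available at each stage does not depend on earlier choices. By the invariant in the proof of Theorem~\ref{thm:bijection}, at the start of round $k$ exactly $k-1$ gifts are opened and owned, one per player among $P_1,\dots,P_{k-1}$. Exactly $n-k+1$ gifts are wrapped, and every round ends with exactly one open. Under $\ell_{\text{round}}=1$ and $\ell_{\text{total}}$ unlimited, $C_k$ and $R_k$ are both empty at the start of each round, and $T$ imposes no restriction. So the set of steal victims at each chain step is exactly the set of owners of gifts not yet stolen in this round. This makes the count in Lemma~\ref{lem:chain-patterns} valid regardless of history.

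One point in that lemma needs care. After the primary player $P_k$ steals, the displaced player holds $\bot$, and the remaining targets are the $k-1$ original holders minus the one just robbed, with $P_k$ now a holder of a locked gift. So the $j$-th steal has $k-j$ options and the chain must stop by length $k-1$. This agrees with Lemma~\ref{lem:termination}, so the number of round patterns is $A(k)$ independently of history.

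Next I handle the opened gifts. The gift opened in round $k$ is chosen from the $n-k+1$ wrapped gifts, again independent of the action pattern. So the opened-gift choices over all rounds contribute $\prod_{k=1}^n (n-k+1)=n!$, and the action patterns contribute $\prod_{k=1}^n A(k)$. Multiplying gives $T(n)$. The second form of the formula follows by reindexing $k\mapsto k+1$ in Lemma~\ref{lem:simplify}.

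The main obstacle is justifying that distinct tuples give distinct trajectories and that every tuple is realizable. Realizability is straightforward: every legal option is available regardless of strategy, and a forced open when no targets remain is already counted as the terminal pattern. Distinctness holds because trajectories are compared action by action, and a pattern together with its opened gift determines the action sequence of the round. I would also remark that the statement's $\ell_{\text{total}}=\infty$ corresponds to the paper's convention $\ell_{\text{total}}=0$ (unlimited).
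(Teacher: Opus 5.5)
Your proposal is correct and takes the same route as the paper. In each round $k$ you multiply the $A(k)$ chain patterns by the $n-k+1$ choices of wrapped gift to open, giving $\prod_{k=1}^{n} A(k) \times n!$. Your additions (the history-independence of the target sets, the explicit bijection with tuples, and the reindexing for the second form) make rigorous what the paper's short proof takes for granted.
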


\begin{proof}
	In round $k$: there are $A(k)$ distinct action patterns (chain structures) and $n - k + 1$ choices for which wrapped gift to open. Since rounds proceed sequentially, and each round's gift choice is independent of the action pattern, we have that:
	\[
	T(n) = \prod_{k=1}^{n} \big[ A(k) \cdot (n - k + 1) \big]
	= \prod_{k=1}^{n} A(k) \times \prod_{k=1}^{n} (n - k + 1)
	= \prod_{k=1}^{n} A(k) \times n!
	\]
\end{proof}

Some explicit values are given in Table~\ref{tab:combin}.

\begin{table}[H]
	\centering
	\caption{Values for combinatorics}
	\label{tab:combin}
	\begin{tabular}{c|r|r}
		\toprule
		$n$ & $\prod_{k=1}^{n} A(k)$ & $T(n) = n! \times \prod A(k)$ \\ \midrule
		1 & 1 & 1 \\
		2 & 2 & 4 \\
		3 & 10 & 60 \\
		4 & 160 & 3,840 \\
		5 & 10,400 & 1,248,000 \\
		6 & 3,390,400 & 2,440,488,000 \\
		7 & 6,637,052,800 & $3.31 \times 10^{13}$ \\
		8 & $9.09 \times 10^{13}$ & $3.67 \times 10^{18}$ \\ \bottomrule
	\end{tabular}
\end{table}

Recall the tree structure description in Figure~\ref{fig:gametree}. Each path from root to an ``Open'' leaf represents one action pattern; the number of such paths depends on how many victims are available (which determines branching factor) and how chain-locking progressively reduces options (which bounds depth). The formula $A(k) = \sum_{j=0}^{k-1} \frac{(k-1)!}{j!}$ emerges from summing over all possible chain lengths, where each length $\ell$ contributes $(k-1)!/(k-1-\ell)!$ patterns corresponding to the ordered selection of $\ell$ distinct victims.

Let us take for example the case for $A(3)=5$, and list every action pattern in round 3\footnote{There is nothing especially remarkable about this; round 3 is chosen simply as it is the simplest case where nontrivial claims can form.}: we have exactly two potential victims  ($P_1$ and $P_2$), and the primary player $P_3$ can create chains of length 0, 1, or 2.

\begin{figure}[H]
	\centering
	\begin{tikzpicture}[
		% Node styles
		player/.style={
			circle,
			draw=black,
			line width=0.8pt,
			fill=blue!20,
			minimum size=0.8cm,
			font=\small
		},
		open/.style={
			rectangle,
			draw=black,
			line width=0.8pt,
			fill=green!25,
			rounded corners=3pt,
			minimum width=1.1cm,
			minimum height=0.6cm,
			font=\small\bfseries
		},
		steal/.style={
			rectangle,
			draw=black,
			line width=0.8pt,
			fill=red!25,
			rounded corners=3pt,
			minimum width=1.1cm,
			minimum height=0.6cm,
			font=\small\bfseries
		},
		arr/.style={->, >=Stealth, line width=0.8pt},
		patternlabel/.style={font=\normalsize\bfseries},
		chainlen/.style={font=\small, text=black!60}
		]
		
		\def\rowsep{-2.0}

		% ppen immediately (l = 0)
		
		\begin{scope}[shift={(0, 0)}]
			\node[patternlabel] at (-1.5, 0) {1.};
			
			\node[player] (p1_pk) at (0, 0) {$P_3$};
			\node[open] (p1_open) at (2.5, 0) {Open};
			
			\draw[arr] (p1_pk) -- (p1_open);
			
			\node[chainlen] at (5, 0) {$\ell = 0$};
			
			% Description
			\node[font=\small, text=black!70, anchor=west] at (6, 0) {$P_3$ opens a wrapped gift};
		\end{scope}

		% steal from P1, P1 opens (l = 1)
		
		\begin{scope}[shift={(0, \rowsep)}]
			\node[patternlabel] at (-1.5, 0) {2.};
			
			\node[player] (p2_pk) at (0, 0) {$P_3$};
			\node[steal] (p2_steal) at (2, 0) {Steal};
			\node[player] (p2_p1) at (4, 0) {$P_1$};
			\node[open] (p2_open) at (6, 0) {Open};
			
			\draw[arr] (p2_pk) -- (p2_steal);
			\draw[arr] (p2_steal) -- node[above, font=\scriptsize\itshape] {from} (p2_p1);
			\draw[arr] (p2_p1) -- (p2_open);
			
			\node[chainlen] at (8, 0) {$\ell = 1$};
		\end{scope}

		% steal from P2, P2 opens (l = 1)
		
		\begin{scope}[shift={(0, 2*\rowsep)}]
			\node[patternlabel] at (-1.5, 0) {3.};
			
			\node[player] (p3_pk) at (0, 0) {$P_3$};
			\node[steal] (p3_steal) at (2, 0) {Steal};
			\node[player] (p3_p2) at (4, 0) {$P_2$};
			\node[open] (p3_open) at (6, 0) {Open};
			
			\draw[arr] (p3_pk) -- (p3_steal);
			\draw[arr] (p3_steal) -- node[above, font=\scriptsize\itshape] {from} (p3_p2);
			\draw[arr] (p3_p2) -- (p3_open);
			
			\node[chainlen] at (8, 0) {$\ell = 1$};
		\end{scope}

		% P3 \toP1 to P2 \to Open (l = 2)
		
		\begin{scope}[shift={(0, 3*\rowsep)}]
			\node[patternlabel] at (-1.5, 0) {4.};
			
			\node[player] (p4_pk) at (0, 0) {$P_3$};
			\node[steal] (p4_s1) at (1.8, 0) {Steal};
			\node[player] (p4_p1) at (3.5, 0) {$P_1$};
			\node[steal] (p4_s2) at (5.3, 0) {Steal};
			\node[player] (p4_p2) at (7.1, 0) {$P_2$};
			\node[open] (p4_open) at (9, 0) {Open};
			
			\draw[arr] (p4_pk) -- (p4_s1);
			\draw[arr] (p4_s1) -- node[above, font=\scriptsize\itshape] {from} (p4_p1);
			\draw[arr] (p4_p1) -- (p4_s2);
			\draw[arr] (p4_s2) -- node[above, font=\scriptsize\itshape] {from} (p4_p2);
			\draw[arr] (p4_p2) -- (p4_open);
			
			\node[chainlen] at (11, 0) {$\ell = 2$};
		\end{scope}

		% P3 \to P2 \to P1 \to Open (l = 2)
		
		\begin{scope}[shift={(0, 4*\rowsep)}]
			\node[patternlabel] at (-1.5, 0) {5.};
			
			\node[player] (p5_pk) at (0, 0) {$P_3$};
			\node[steal] (p5_s1) at (1.8, 0) {Steal};
			\node[player] (p5_p2) at (3.5, 0) {$P_2$};
			\node[steal] (p5_s2) at (5.3, 0) {Steal};
			\node[player] (p5_p1) at (7.1, 0) {$P_1$};
			\node[open] (p5_open) at (9, 0) {Open};
			
			\draw[arr] (p5_pk) -- (p5_s1);
			\draw[arr] (p5_s1) -- node[above, font=\scriptsize\itshape] {from} (p5_p2);
			\draw[arr] (p5_p2) -- (p5_s2);
			\draw[arr] (p5_s2) -- node[above, font=\scriptsize\itshape] {from} (p5_p1);
			\draw[arr] (p5_p1) -- (p5_open);
			
			\node[chainlen] at (11, 0) {$\ell = 2$};
		\end{scope}
		
		\draw[decorate, decoration={brace, amplitude=8pt, mirror}, line width=1pt]
		(-2.2, 0.5) -- (-2.2, 4*\rowsep - 0.5)
		node[midway, left=10pt, align=center, font=\normalsize] {$A(3) = 5$\\patterns};
		
		\draw[gray, dashed, rounded corners=5pt] 
		(-1.8, \rowsep + 0.6) rectangle (8.5, 2*\rowsep - 0.6);
		\node[font=\scriptsize, text=gray, anchor=east] at (8.4, 1.5*\rowsep) {2 patterns};

		\draw[gray, dashed, rounded corners=5pt] 
		(-1.8, 3*\rowsep + 0.6) rectangle (11.5, 4*\rowsep - 0.6);
		\node[font=\scriptsize, text=gray, anchor=east] at (11.4, 3.5*\rowsep) {2 patterns};
		
		\begin{scope}[shift={(0, 4*\rowsep - 1.8)}]
			\node[font=\small\bfseries, anchor=west] at (0, 0) {Legend:};
			\node[player, minimum size=0.6cm] at (2, 0) {};
			\node[font=\small, anchor=west] at (2.5, 0) {Player};
			\node[steal, minimum width=0.8cm, minimum height=0.45cm] at (4.5, 0) {};
			\node[font=\small, anchor=west] at (5.1, 0) {Steal};
			\node[open, minimum width=0.8cm, minimum height=0.45cm] at (6.8, 0) {};
			\node[font=\small, anchor=west] at (7.4, 0) {Open};
		\end{scope}
		
	\end{tikzpicture}
	\caption{Chain-locking prevents cycles; once $P_3$ steals a gift, that gift cannot be stolen again within this chain, so $P_1$ cannot steal back from $P_3$ in patterns 4--5.}
	\label{fig:chainlock}
\end{figure}
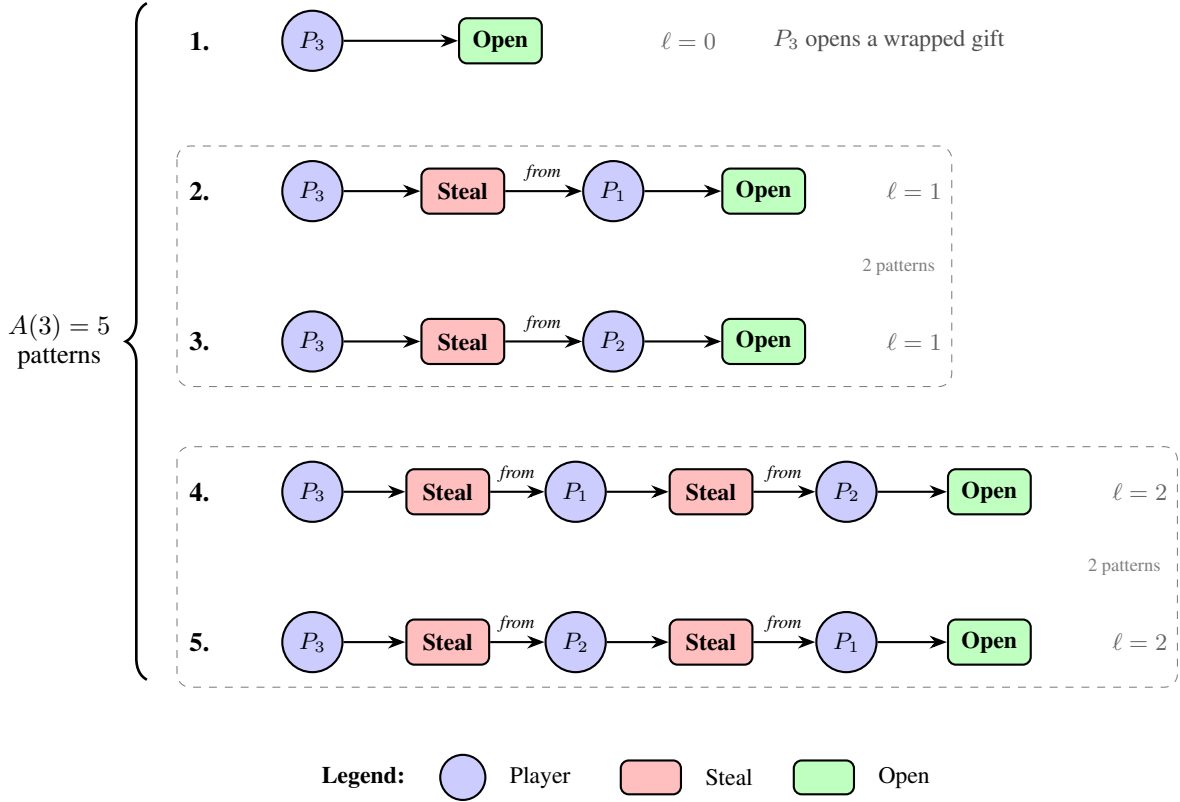

Figure~\ref{fig:chainlock} displays all five patterns, read left to right. Blue circles represent players making decisions; red ``steal'' boxes indicate a theft in progress; green ``open'' boxes mark chain termination. The chain length $\ell$ appears at the right of each row. Possible patterns are enumerated.

\begin{enumerate}
	\item with $\ell = 0$, $P_3$ opens immediately; no chain forms. This is the only way to have zero steals.
	
	\item[2--3.] with $\ell = 1$, $P_3$ steals from one victim, who then opens; there are exactly two such patterns because $P_3$ can steal from either $P_1$ or $P_2$.
	
	\item[4--5.] with $\ell = 2$, $P_3$ steals, the first victim  steals from the remaining player, and that player opens. Again, exactly two patterns exist: $P_3 \to P_1 \to P_2 \to \text{Open}$ and $P_3 \to P_2 \to P_1 \to \text{Open}$.
\end{enumerate}

The total count of $1 + 2 + 2 = 5$ matches the $A(3) = \sum_{j=0}^{2} 
\frac{2!}{j!} = 1 + 2 + 2 = 5$. Notably absent from the figure is any pattern where a victim steals \emph{back} from the player who just stole from them. Chain-locking prevents this: once $P_3$ takes a gift, that gift becomes temporarily unavailable within the current chain; this constraint is what makes the chain terminate in finite time, and is why the count is only $A(k)$ rather than something larger.

\begin{figure}[H]
	\centering
	\begin{tikzpicture}[
		scale=1.0,
		transform shape,
		player/.style={circle, draw=black, line width=0.6pt, fill=blue!20, 
			minimum size=0.65cm, font=\footnotesize},
		open/.style={rectangle, draw=black, line width=0.6pt, fill=green!25, 
			rounded corners=2pt, minimum width=1.1cm, minimum height=0.5cm, 
			font=\footnotesize\bfseries},
		steal/.style={rectangle, draw=black, line width=0.6pt, fill=red!25, 
			rounded corners=2pt, minimum width=0.85cm, minimum height=0.5cm, 
			font=\footnotesize\bfseries},
		outcome/.style={rectangle, draw=gray, line width=0.6pt, fill=gray!15, 
			rounded corners=2pt, minimum height=0.55cm, font=\footnotesize},
		trajlabel/.style={rectangle, draw=black, line width=0.8pt, fill=white, 
			minimum width=0.65cm, minimum height=0.5cm, font=\small\bfseries},
		arr/.style={->, >=Stealth, line width=0.6pt},
		outcomearr/.style={arr, color=gray}
		]
		
		\def\rowsep{-1.8}
		
		\node[font=\footnotesize\bfseries, text=gray] at (2.2, 0.8) {Round 1};
		\node[font=\footnotesize\bfseries, text=gray] at (5.5, 0.8) {Round 2};
		\node[font=\footnotesize\bfseries, text=gray] at (9.5, 0.8) {Outcome};
		
		% T1
		\begin{scope}[shift={(0, 0)}]
			\node[trajlabel] at (-0.4, 0) {T1};
			\node[player] (t1_p1) at (1, 0) {$P_1$};
			\node[open] (t1_o1) at (2.8, 0) {Open $G_1$};
			\node[player] (t1_p2) at (4.8, 0) {$P_2$};
			\node[open] (t1_o2) at (6.6, 0) {Open $G_2$};
			\node[outcome] (t1_out) at (9.5, 0) {$P_1\!:\!G_1,\,P_2\!:\!G_2$};
			\draw[arr] (t1_p1) -- (t1_o1);
			\draw[arr] (t1_o1) -- (t1_p2);
			\draw[arr] (t1_p2) -- (t1_o2);
			\draw[outcomearr] (t1_o2) -- (t1_out);
		\end{scope}
		
		% T2
		\begin{scope}[shift={(0, \rowsep)}]
			\node[trajlabel] at (-0.4, 0) {T2};
			\node[player] (t2_p1) at (1, 0) {$P_1$};
			\node[open] (t2_o1) at (2.8, 0) {Open $G_2$};
			\node[player] (t2_p2) at (4.8, 0) {$P_2$};
			\node[open] (t2_o2) at (6.6, 0) {Open $G_1$};
			\node[outcome] (t2_out) at (9.5, 0) {$P_1\!:\!G_2,\,P_2\!:\!G_1$};
			\draw[arr] (t2_p1) -- (t2_o1);
			\draw[arr] (t2_o1) -- (t2_p2);
			\draw[arr] (t2_p2) -- (t2_o2);
			\draw[outcomearr] (t2_o2) -- (t2_out);
		\end{scope}
		
		% T3
		\begin{scope}[shift={(0, 2*\rowsep)}]
			\node[trajlabel] at (-0.4, 0) {T3};
			\node[player] (t3_p1a) at (1, 0) {$P_1$};
			\node[open] (t3_o1) at (2.8, 0) {Open $G_1$};
			\node[player] (t3_p2) at (4.5, 0) {$P_2$};
			\node[steal] (t3_s) at (5.9, 0) {Steal};
			\node[player] (t3_p1b) at (7.2, 0) {$P_1$};
			\node[open] (t3_o2) at (8.8, 0) {Open $G_2$};
			\node[outcome] (t3_out) at (11.5, 0) {$P_1\!:\!G_2,\,P_2\!:\!G_1$};
			\draw[arr] (t3_p1a) -- (t3_o1);
			\draw[arr] (t3_o1) -- (t3_p2);
			\draw[arr] (t3_p2) -- (t3_s);
			\draw[arr] (t3_s) -- (t3_p1b);
			\draw[arr] (t3_p1b) -- (t3_o2);
			\draw[outcomearr] (t3_o2) -- (t3_out);
		\end{scope}
		
		% T4
		\begin{scope}[shift={(0, 3*\rowsep)}]
			\node[trajlabel] at (-0.4, 0) {T4};
			\node[player] (t4_p1a) at (1, 0) {$P_1$};
			\node[open] (t4_o1) at (2.8, 0) {Open $G_2$};
			\node[player] (t4_p2) at (4.5, 0) {$P_2$};
			\node[steal] (t4_s) at (5.9, 0) {Steal};
			\node[player] (t4_p1b) at (7.2, 0) {$P_1$};
			\node[open] (t4_o2) at (8.8, 0) {Open $G_1$};
			\node[outcome] (t4_out) at (11.5, 0) {$P_1\!:\!G_1,\,P_2\!:\!G_2$};
			\draw[arr] (t4_p1a) -- (t4_o1);
			\draw[arr] (t4_o1) -- (t4_p2);
			\draw[arr] (t4_p2) -- (t4_s);
			\draw[arr] (t4_s) -- (t4_p1b);
			\draw[arr] (t4_p1b) -- (t4_o2);
			\draw[outcomearr] (t4_o2) -- (t4_out);
		\end{scope}

		\begin{scope}[shift={(0, 4*\rowsep)}]
			\node[font=\small\bfseries, anchor=west] at (0, 0) {Legend:};
			\node[player, minimum size=0.6cm] at (2, 0) {};
			\node[font=\small, anchor=west] at (2.5, 0) {Player};
			\node[steal, minimum width=0.8cm, minimum height=0.45cm] at (4.5, 0) {};
			\node[font=\small, anchor=west] at (5.1, 0) {Steal};
			\node[open, minimum width=0.8cm, minimum height=0.45cm] at (6.8, 0) {};
			\node[font=\small, anchor=west] at (7.4, 0) {Open};
			\node[outcome, minimum width=0.8cm, minimum height=0.45cm] at (9.1, 0) {};
			\node[font=\small, anchor=west] at (9.7, 0) {Outcome};
		\end{scope}
		
	\end{tikzpicture}
	
	\caption{All four trajectories shown for a two-player game. Trajectories T1 and T4 reach the same final allocation ($P_1:G_1$, $P_2:G_2$) via different paths: T1 involves no stealing; T4 involves $P_2$ stealing then $P_1$ being forced to open. Similarly, T2 and T3 reach the same outcome via different paths, hence why we count \emph{trajectories} (paths) rather than \emph{outcomes} (allocations).}
	\label{fig:completegame}
\end{figure}
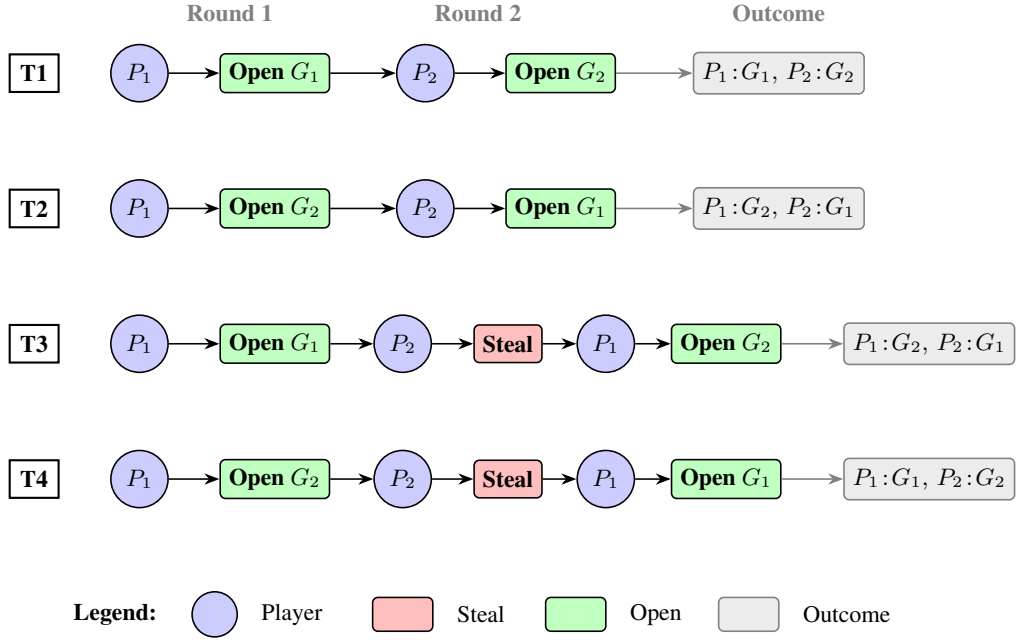

Figure~\ref{fig:completegame} enumerates every trajectory for the minimal interesting case: two players and two gifts. With only $T(2) = 4$ trajectories, we can display each one explicitly, and we observe that \emph{different trajectories (can) produce identical outcomes}, and hence the need to enumerate on trajectories instead of just outcomes.

Each row represents one trajectory, enumerated T1 through T4. Reading left to right, we see the sequence of actions across rounds. Gray boxes at the right show the final allocation of which player ends up with which gift.

We observe: trajectories T1 and T4 both end with the same allocation ($P_1$ gets $G_1$, $P_2$ gets $G_2$); similarly, T2 and T3 share an outcome. Yet T1 and T4 are genuinely different trajectories, and, hence, different games (or, gameplay experiences):
\begin{itemize}
	\item in \textbf{T1}, neither player steals; $P_1$ opens $G_1$, then $P_2$ opens $G_2$, and the game is calm and cooperative;
	
	\item in \textbf{T4}, $P_1$ opens $G_2$, but then $P_2$ steals $G_2$ from $P_1$, forcing $P_1$ to open the remaining gift $G_1$. The final allocation matches T1, but the social dynamics differ entirely!
\end{itemize}

It follows, then, that counting paths as opposed to outcomes is why the trajectory count $T(n)$ grows so much faster than the number of possible allocations $n!$. For two players, there are only $2! = 2$ distinct allocations, but there are $T(2) = 4$ ways to reach them. The excess factor exactly captures the drama between players: stealing chains; displacement; forced choices.

For larger games, this explodes. With $n = 8$ players, there are $8! \approx 40{,}000$ allocations but over $10^{18}$ trajectories. Almost all of the combinatorial complexity lies not in \emph{how} they get there.

Recall for the end game: the final swap gives Player 1 the option to exchange gifts with any other player, or to keep their current gift. This adds a factor of $n$ choices, and we recover:

\begin{equation}
	T_{\text{swap}}(n) = n \times T(n)
\end{equation}

We may consider now the limiting behaviors of locking. Under standard rules, chain-locking already constrains each gift to be stolen at most once per chain, and there is exactly one chain per round. Thus $\ell_{\text{round}} \geq 1$ has no effect on the count, since it is already enforced by chain-locking.

Setting $\ell_{\text{round}} = 0$ (unlimited) also has no effect, since chain-locking remains mandatory.

\begin{proposition}
	The trajectory count $T(n)$ is invariant to $\ell_{\textup{round}} \in \{0, 1, 2, \ldots\}$.
\end{proposition}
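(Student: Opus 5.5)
The plan is to show that, for every round, the round-steal-count constraint is never binding. In other words, $\textsc{Stealable}$ evaluates identically whether $\ell_{\text{round}}=0$ or $\ell_{\text{round}}=r$ for any $r\geq 1$. If that holds, the action sets $\Actions(P_i,S_k)$ coincide at every reachable state, so the game trees are identical. The count of Theorem~\ref{thm:trajectory-count} then applies verbatim.

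The key step is an invariant. At every point within round $R_k$ and every gift $G_j$, $R_k(G_j)=1$ implies $G_j\in C_k$, and $R_k(G_j)\leq 1$ always. I would prove this by induction on the actions within the round.

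\begin{itemize}
\item At the start of $R_k$, $R_k\equiv 0$ and $C_k=\emptyset$.
\item A $\textsc{Steal}$ of $G_j$ (Definition~\ref{def:steal-action}) requires $G_j\notin C_k$. By the invariant this forces $R_k(G_j)=0$. The steal then sets $R_k(G_j)=1$ and inserts $G_j$ into $C_k$ simultaneously, so the invariant is preserved.
\item An $\textsc{Open}$ either continues the round or ends it. The only event that clears $C_k$ is chain termination. I would argue, as the paper notes just before Lemma~\ref{lem:chain-patterns}, that each round contains exactly one chain: the primary turn either opens immediately or starts a chain that ends with an open, and that open concludes the round. Hence $C_k$ is cleared only at the same moment $R_k$ is reset, and the invariant survives.
\end{itemize}

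Given the invariant, the round constraint ``$\ell_{\text{round}}=0$ or $R_k(G_j)<\ell_{\text{round}}$'' is implied by the mandatory chain constraint $G_j\notin C_k$ whenever $\ell_{\text{round}}\geq1$. The reason is that $G_j\notin C_k$ gives $R_k(G_j)=0<\ell_{\text{round}}$. The constraint is vacuous when $\ell_{\text{round}}=0$. So $\textsc{Stealable}$, and hence $\mathcal{T}(P_i,S_k)$, is the same for all values of $\ell_{\text{round}}$. Lemma~\ref{lem:chain-patterns}, Lemma~\ref{lem:simplify}, and the product in Theorem~\ref{thm:trajectory-count} are therefore unchanged. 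The final-swap factor $n$ is also unaffected, since the swap is exempt from locking.

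The main obstacle is justifying ``one chain per round'', namely that $C_k$ is never cleared mid-round while $R_k$ persists. Only that situation could make $\ell_{\text{round}}\geq2$ matter. I would handle it directly from the Round definition: a round concludes when the final displaced player opens, and that open is also the chain-terminating event. There is no mechanism for a second chain to begin inside $R_k$. I would also note that $\ell_{\text{total}}$ is held fixed throughout, so the total constraint contributes identically on both sides of the comparison.
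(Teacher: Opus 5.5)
Your proof is correct and follows the paper's own reasoning. The paper argues that chain-locking already limits each gift to one steal per chain and that each round contains exactly one chain, so the $\ell_{\text{round}}$ constraint is never binding for $\ell_{\text{round}}\geq 1$ and is vacuous for $\ell_{\text{round}}=0$; your inductive invariant ($R_k(G_j)\leq 1$, with $R_k(G_j)=1$ only if $G_j\in C_k$) makes that argument rigorous.
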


The lifetime limit $\ell_{\text{total}}$ \emph{does} affect the count. When $\ell_{\text{total}} < \infty$, gifts that reach the steal limit become permanently unstealable, reducing the number of valid targets in later rounds.

\begin{proposition}[Monotonicity]
	For fixed $n$, the trajectory count is weakly increasing in $\ell_{\textup{total}}$:
	\[
	\ell_{\textup{total}}' < \ell_{\textup{total}} \implies T(n; \ell_{\textup{total}}') \leq T(n; \ell_{\textup{total}})
	\]
	Equality holds when no gift would exceed the lower limit under any trajectory.
\end{proposition}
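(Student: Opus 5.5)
The plan is an injection (inclusion) argument on trajectory sets. Fix $n$ and write $\mathcal{T}(n;\ell)$ for the set of legal trajectories under lifetime limit $\ell$, with $\ell_{\text{round}}$ held fixed; by the invariance proposition above, its value does not matter. I will treat the value $0$ (``unlimited'') as $+\infty$ when ordering limits. That is the reading already used in Theorem~\ref{thm:trajectory-count}, where $\ell_{\text{total}}=\infty$ denotes the unlimited game. The claim $T(n;\ell') \le T(n;\ell)$ for $\ell'<\ell$ then follows from showing $\mathcal{T}(n;\ell') \subseteq \mathcal{T}(n;\ell)$, since $T(n;\cdot) = |\mathcal{T}(n;\cdot)|$. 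The final swap contributes the same factor $n$ in both cases, so it can be ignored or included at will.

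To prove the inclusion, take any $\pi \in \mathcal{T}(n;\ell')$ and show by induction along its action sequence that every action of $\pi$ is legal under limit $\ell$. The state components $O$, $\sigma$, $C_k$, $R_k$ and $T$ evolve by the same deterministic update rules (Definitions~\ref{def:open} and \ref{def:steal-action}) regardless of the limit parameter, because the limit only enters the \emph{legality} test and not the update. Hence along $\pi$ the two games pass through identical states. Legality of an open depends only on $W$, which is the same in both games. Legality of a steal is membership of the victim in $\mathcal{T}(P_i,S_k)$ of Definition~\ref{def:valid-targets}. This reduces to $\textsc{Stealable}$, whose only $\ell_{\text{total}}$-dependent clause is $T(G_j) < \ell_{\text{total}}$, and $T(G_j)<\ell'$ implies $T(G_j)<\ell$. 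So the valid-target set under $\ell'$ is contained in the one under $\ell$ at every state, and $\pi$ is a legal trajectory under $\ell$.

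One subtlety has to be checked: whether a chain might be \emph{forced} to end under $\ell'$ where it would not be under $\ell$. This does not break the inclusion. The round definition ends a chain only when someone opens (an open is always available while $W\neq\emptyset$, as in Lemma~\ref{lem:termination}), and opening is equally legal under $\ell$. Trajectories are sequences of actions, not records of which actions were forced, so a forced open under $\ell'$ is simply a voluntary open under $\ell$.

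For the equality clause, I would argue the contrapositive. If some $\pi \in \mathcal{T}(n;\ell)$ contains a steal of a gift $G_j$ with $T(G_j)\ge \ell'$ at that moment, meaning some gift ``would exceed the lower limit,'' then $\pi\notin\mathcal{T}(n;\ell')$ and the inclusion is strict. Otherwise, every step of every $\pi\in\mathcal{T}(n;\ell)$ passes the $\ell'$ test, the sets coincide, and equality holds. The main obstacle is not mathematical but definitional. I need to pin down the $0$-means-unlimited convention and to confirm that $T$ is updated identically in both games along a shared prefix; both are immediate from the stated update rules once made explicit.
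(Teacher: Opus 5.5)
Your proof is correct, and it takes essentially the same route as the paper. The paper states this proposition without a formal proof and justifies it only by noting that a lower lifetime limit makes gifts permanently unstealable, which prunes the valid targets. Your argument (monotonicity of the \textsc{Stealable} predicate giving the inclusion $\mathcal{T}(n;\ell') \subseteq \mathcal{T}(n;\ell)$, together with the if-and-only-if treatment of the equality clause and the reading of $\ell_{\text{total}}=0$ as $+\infty$) is a rigorous version of that pruning idea.
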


Computing $T(n; \ell_{\text{total}})$ exactly requires tracking the distribution of steal counts across gifts, which introduces path-dependence between rounds. We need not track \emph{which} gift has each steal count, only the \emph{multiset} of counts, since gifts with identical steal histories are interchangeable for counting purposes.

\subsection{Algorithm for finite total limit}

It is worth checking how we proceed with counting. We offload to coding this directly, and present an algorithm that computes the exact trajectory count for any $\ell_{\text{total}} \in \mathbb{Z}_{\geq 1} \cup \{\infty\}$.

\begin{algorithm}[H]
	\caption{Count distinct trajectories}
	\label{alg:trajectorycount}
	\begin{algorithmic}[1]
		\Require $n \in \mathbb{Z}_{\geq 1}$ (number of players)
		\Require $\ell_{\text{total}} \in \mathbb{Z}_{\geq 1} \cup \{\infty\}$ (lifetime steal limit)
		\Ensure $T$: total number of distinct game trajectories
		\Function{CountTrajectories}{$n, \ell_{\text{total}}$}
		\If{$\ell_{\text{total}} = \infty$}
		\State \Comment{closed-form derivation (see Theorem~\ref{thm:trajectory-count})}
		\State $P \gets 1$
		\For{$k \gets 0$ \textbf{to} $n-1$}
		\State $A_k \gets \sum_{j=0}^{k} \frac{k!}{j!}$ \Comment{A000522}
		\State $P \gets P \times A_k$
		\EndFor
		\State \Return $n! \times P$
		\Else
		\State \Comment{dynamic over steal-count multisets}
		\State let $S$ be a map: $\text{multiset} \to \mathbb{Z}_{\geq 0}$
		\State $S[\{0\}] \gets 1$ \Comment{after round 1: one gift opened; zero steals}
		
		\For{$r \gets 2$ \textbf{to} $n$} \Comment{rounds 2 through $n$}
		\State $S_{\text{next}} \gets \text{empty map}$
		\For{$(\mathbf{c}, w) \in S$} \Comment{$\mathbf{c}$: multiset of steal counts; $w$: ways}
		\State \Comment{action: open immediately (no stealing)}
		\State $\mathbf{c}_{\text{open}} \gets \mathbf{c} \cup \{0\}$
		\State $S_{\text{next}}[\mathbf{c}_{\text{open}}] \mathrel{+}= w$
		
		\State \Comment{action: initiate stealing chain}
		\State $G_{\text{valid}} \gets \{g \in \mathbf{c} : g < \ell_{\text{total}}\}$
		\If{$G_{\text{valid}} \neq \emptyset$}
		\State $W_{\text{chains}} \gets \Call{CountChains}{\mathbf{c}, G_{\text{valid}}}$
		\For{$(\mathbf{c}_{\text{final}}, w_{\text{chain}}) \in W_{\text{chains}}$}
		\State $\mathbf{c}_{\text{new}} \gets \mathbf{c}_{\text{final}} \cup \{0\}$ \Comment{chain ends with open}
		\State $S_{\text{next}}[\mathbf{c}_{\text{new}}] \mathrel{+}= w \times w_{\text{chain}}$
		\EndFor
		\EndIf
		\EndFor
		\State $S \gets S_{\text{next}}$
		\EndFor
		\State \Return $n! \times \sum_{\mathbf{c}} S[\mathbf{c}]$
		\EndIf
		\EndFunction
		
		\Statex
		
		\Function{CountChains}{$\mathbf{c}_{\text{start}}, \text{targets}$}
		\State \Comment{recursively enumerate all valid stealing chains}
		\State \Comment{returns final multiset $\to$ number of chains}
		\State $R \gets \text{empty map}$
		\For{$g \in \text{targets}$}
		\State $\mathbf{c}' \gets \mathbf{c}_{\text{start}}$ with one instance of $g$ incremented to $g+1$
		\State $R[\mathbf{c}'] \mathrel{+}= 1$ \Comment{chain ends here (victim opens)}
		
		\State $\text{remaining} \gets \text{targets} \setminus \{g\}$ \Comment{chain-lock: $g$ unavailable}
		\State $R_{\text{sub}} \gets \Call{CountChains}{\mathbf{c}', \text{remaining}}$
		\For{$(\mathbf{c}_{\text{end}}, w) \in R_{\text{sub}}$}
		\State $R[\mathbf{c}_{\text{end}}] \mathrel{+}= w$ \Comment{accumulate counts}
		\EndFor
		\EndFor
		\State \Return $R$
		\EndFunction
	\end{algorithmic}
\end{algorithm}

Figure~\ref{fig:dynamic} illustrates how the trajectory-counting algorithm operates when a lifetime steal limit $\ell_{\text{total}}$ is in effect. We need only track how many gifts have each steal count; gifts with identical histories are interchangeable for counting purposes. Figure~\ref{fig:dynamic} displays state transitions across four rounds, read top to bottom. Round 1 begins with a single state $\{0\}$: one gift has been opened and never stolen. Two types of transitions connect states across rounds:

\begin{itemize}
	\item for green arrows (Open), the primary player opens a new gift without stealing; this adds a 0 to the multiset, representing a fresh gift entering play., and the transition $\{0\} \to \{0, 0\}$ shows this as one unstolen gift becomes two.
	
	\item for red arrows (Steal), the primary player initiates a stealing chain, which increments the steal count of one or more gifts before terminating with an open; the transition $\{0\} \to \{0, 1\}$ shows a single steal: the existing gift's count increments from 0 to 1, and a new gift (count 0) enters when the chain terminates.
\end{itemize}

The dashed red arrow from $\{0, 1\}$ to $\{0, 0, 2\}$ represents a length-2 chain where the same gift is stolen twice in succession; this produces a gift with steal count 2, shown in the orange node. If the lifetime limit were $\ell_{\text{total}} = 2$, for example, this gift would technically be ``frozen'', permanently removed from the pool of stealable targets.

\begin{figure}[H]
	\centering
	\begin{tikzpicture}[
		% Node styles
		state/.style={
			rectangle,
			draw=black,
			line width=1pt,
			fill=blue!15,
			rounded corners=5pt,
			minimum width=1.8cm,minimum height=0.8cm,font=\normalsize\ttfamily
		},
		frozen/.style={
			state,
			fill=orange!30
		},
		roundlabel/.style={
			font=\normalsize\bfseries,
			text=black
		},
		openarr/.style={
			->,
			>=Stealth,
			line width=1.5pt,
			color=green!60!black
		},
		stealarr/.style={
			->,
			>=Stealth,
			line width=1.5pt,
			color=red!70!black
		},
		stealdasharr/.style={
			stealarr,
			dashed
		},
		grayarr/.style={
			->,
			>=Stealth,
			line width=1pt,
			color=gray!50
		},
		arrlabel/.style={
			font=\small\itshape
		}
		]
		
		\node[roundlabel] at (-0.8, 0) {R1};
		\node[roundlabel] at (-0.8, -3) {R2};
		\node[roundlabel] at (-0.8, -6.5) {R3};
		\node[roundlabel] at (-0.8, -10) {R4};
		
		%round 1
		\node[state] (r1) at (4, 0) {\{0\}};
		\node[font=\small, text=gray, above=0.2cm of r1] {1 way};
		
		%round 2
		\node[state] (r2a) at (1.5, -3) {\{0,0\}};
		\node[state] (r2b) at (6.5, -3) {\{0,1\}};
		
		% arrows from round 1
		\draw[openarr] (r1) -- (r2a) 
		node[midway, above left, arrlabel, color=green!60!black] {open};
		\draw[stealarr] (r1) -- (r2b) 
		node[midway, above right, arrlabel, color=red!70!black] {steal};
		
		%round 3
		\node[state] (r3a) at (0.8, -6.5) {\{0,0,0\}};
		\node[state] (r3b) at (4, -6.5) {\{0,0,1\}};
		\node[state] (r3c) at (7.2, -6.5) {\{0,1,1\}};
		\node[frozen] (r3d) at (10.4, -6.5) {\{0,0,2\}};
		
		\draw[openarr] (r2a) -- (r3a);
		\draw[stealarr] (r2a) -- (r3b);
		\draw[openarr] (r2b) -- (r3b);
		\draw[stealarr] (r2b) -- (r3c);
		\draw[stealdasharr] (r2b) to[bend right=15] (r3d);
		
		%round 4
		\node[font=\large\itshape, text=gray] (r4) at (5.5, -10) {\ldots continue \ldots};
		
		%  arrows to Round 4
		\draw[grayarr] (r3a) -- (r4);
		\draw[grayarr] (r3b) -- (r4);
		\draw[grayarr] (r3c) -- (r4);
		\draw[grayarr] (r3d) -- (r4);
		
		\begin{scope}[shift={(10, -0.5)}]
			\node[font=\normalsize\bfseries, anchor=west] at (0, 0) {Legend};
			
			\draw[openarr] (0, -0.8) -- (1.2, -0.8);
			\node[font=\small, anchor=west] at (1.5, -0.8) {Open (add 0)};
			\draw[stealarr] (0, -1.5) -- (1.2, -1.5);
			\node[font=\small, anchor=west] at (1.5, -1.5) {Steal (increment)};
			\node[frozen, minimum width=1.2cm, minimum height=0.5cm] at (0.6, -2.3) {};
			\node[font=\small, anchor=west] at (1.5, -2.3) {Gift stolen twice};
		\end{scope}

	\end{tikzpicture}
	\caption{Each node in the diagram represents a \emph{state} after the corresponding round completes. States are written as multisets of steal counts using set notation with possible repetition. $\{0, 0, 1\}$ indicates that three gifts have been opened: two have never been stolen (count 0), and one has been stolen exactly once (count 1). The multiset representation abstracts away gift identity, capturing only the distribution of steal counts. With $\ell_{\text{total}} = 2$, the \texttt{\{0,0,2\}} state has one ``frozen'' gift that cannot be stolen again.}
	\label{fig:dynamic}
\end{figure}
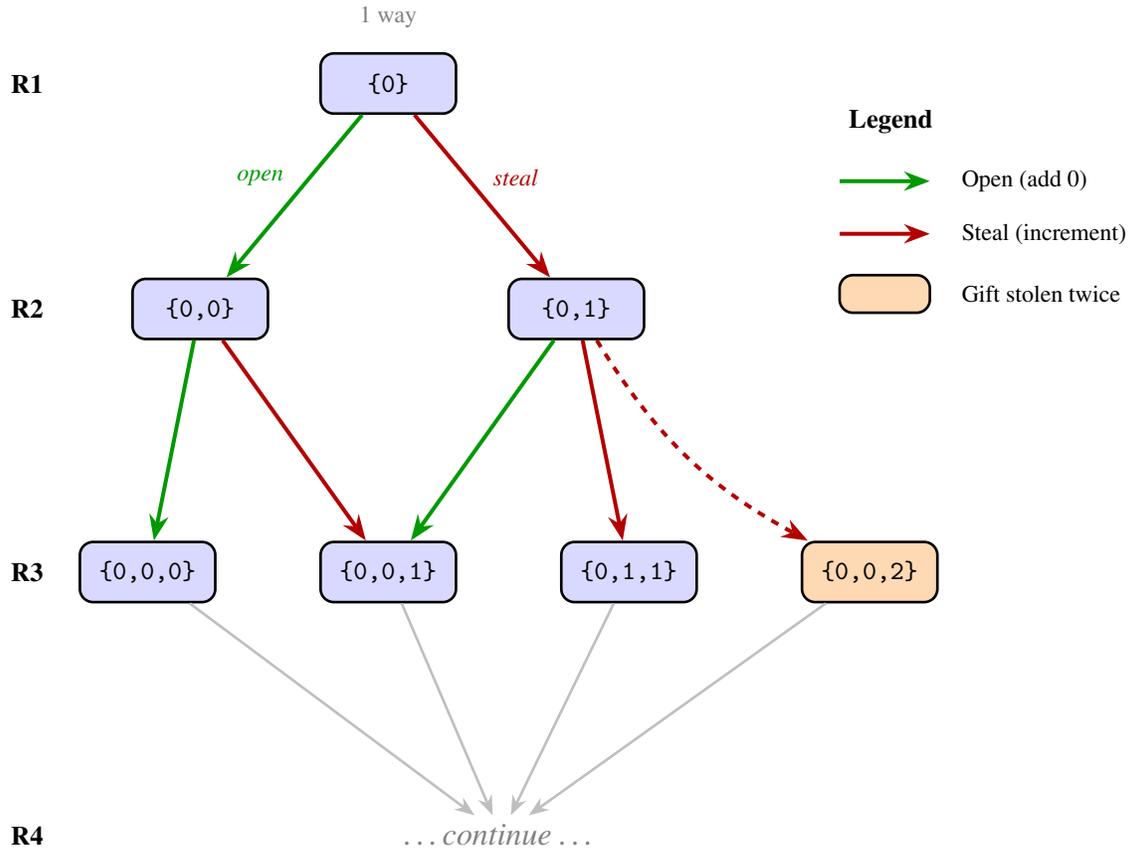

We process round by round, maintaining a count of how many trajectory prefixes lead to each state. At each transition, we enumerate all valid chains (respecting $\ell_{\text{total}}$), update the destination states, and accumulate counts. After round $n$, sums across all final states and multiplies by $n!$ to recover gift identity and our total trajectory count of $T(n; \ell_{\text{total}})$.

Note that it is exactly the multiset representation that makes this tractable; else we need to track the full history of each individual gift! By exploiting exchangeability, the number of distinct states grows polynomially in $n$. We also observe why steal limits reduce trajectory counts. Compare the branching from $\{0, 1\}$: without limits, both the once-stolen gift and the unstolen gift are valid targets, producing much branching; with$\ell_{\text{total}} = 2$, the twice-stolen gift in $\{0, 0, 2\}$ becomes frozen, reducing future options. This is essentially a pruning oeration that compounds across rounds, explaining why $T(n; 3)$ is orders of magnitude smaller than $T(n; \infty)$.

\begin{remark}[State space]
	The complexity depends on the number of distinct multisets of steal counts. For $\ell_{\text{total}} = 1$, each gift has count in $\{0, 1\}$, yielding $O(n)$ distinct multisets; for larger $\ell_{\text{total}}$, the state space grows, certainly, but remains polynomial in $n$ for fixed $\ell_{\text{total}}$.
\end{remark}

\begin{remark}[Multiplication by $n!$]
	The final multiplication by $n!$ accounts for which specific wrapped gift is opened in each round. The algorithm counts action patterns (who steals from whom); gift identity needs must be tracked separately.
\end{remark}

The rapid growth of $T(n)$ is exactly the combinatorial explosion of possible game outcomes. Even for modest $n$ we observe:
\begin{itemize}
	\item for $n = 6$, over 2 billion distinct trajectories;
	\item for $n = 8$, over $10^{18}$ trajectories;
	\item for $n = 10$, exceeds $10^{26}$ trajectories.
\end{itemize}

This explains why repeated plays of the gift exchange game feel different: the probability of replaying an identical trajectory is remarkably small!

\section{Future work}\label{sec:future}

We outline directions of future work.

\subsection{Coalitions, collusion}

In practice, players may coordinate informally\footnote{Anecdotally, it is rare to \emph{not} form some kind of coalition!}. We sketch a formalization of such behavior.

\begin{definition}[Coalition]
	A \emph{coalition} is a nonempty subset $K \subseteq P$ of players who coordinate their strategies. The \emph{coalition structure} is a partition $\mathcal{K} = \{K_1, K_2, \ldots, K_m\}$ of $P$ into disjoint coalitions.
\end{definition}

\begin{definition}[Singleton structure]
	The \emph{singleton structure} $\mathcal{K}_0 = \{\{P_1\}, \{P_2\}, \ldots, \{P_n\}\}$ represents fully non-cooperative play. 
\end{definition}

Our base analysis assumes $\mathcal{K}_0$, so the coalition formalization is technically more general than what we have already worked out here.

\begin{definition}[Grand coalition]
	The \emph{grand coalition} $\mathcal{K}_* = \{P\}$ represents full cooperation. Under $\mathcal{K}_*$, players jointly maximize total welfare, which is trivial in the gift exchange game: assign each gift to the player who values it most.
\end{definition}

A coalition $K$ may pursue several objectives:

\begin{definition}[Coalition welfare]
	The \emph{welfare} of coalition $K$ at terminal state $S_n$ is:
	\[
	W_K = \sum_{P_i \in K} V_i(O(P_i))
	\]
\end{definition}

Accordingly, coalitions may maximize $W_K$ rather than individual utilities.

\begin{definition}[Egalitarian objective]
	Alternatively, $K$ may maximize the minimum member outcome:
	\[
	W_K^{\min} = \min_{P_i \in K} V_i(O(P_i))
	\]
\end{definition}

In the egalitarian objective, we model family coalitions where ``everyone should get something decent''. Now, what counts as ``decent'' is certainly idiosyncratic, but we may apply the strategies we have covered so far in that regard.

Coalitions may implement several coordination mechanisms. We suggest a few for starters.

\begin{definition}[Non-aggression pact]
	Coalition $K$ follows a \emph{non-aggression pact} if:
	\[
	\forall P_i, P_j \in K: P_i \text{ never executes } \textsc{Steal}(P_j)
	\]
	
\end{definition}

The non-aggression pact reduces the effective steal targets for coalition members to $P \setminus K$.

\begin{definition}[Coordinated targeting]
	Coalition $K$ follows \emph{coordinated targeting} if members preferentially steal from non-members:
	\[
	\forall P_i \in K: \mathcal{T}_K(P_i, S_k) = \{P_m \in \mathcal{T}(P_i, S_k) : P_m \notin K\}
	\]
	when this set is nonempty.
\end{definition}

\begin{definition}[Gift laundering]
	A \emph{laundering chain} occurs when $P_i \in K$ steals a gift $G$ from $P_m \notin K$, intending for coalition partner $P_j \in K$ (who is later in turn order) to steal $G$ from $P_i$. This transfers $G$ to $P_j$ while ``spending'' $P_i$'s steal opportunity.
\end{definition}

\begin{definition}[Information sharing]
	Under partial information, coalition members may share private signals:
	\[
	\mathcal{I}_K(P_i) = \bigcup_{P_j \in K} \mathcal{I}(P_j)
	\]
\end{definition}

Information sharing is valuable: it expands the information set for coalition members, improving gift selection at all.

We might now wonder at the stability of allegiences. 

\begin{definition}[Individual rationality]
	Coalition $K$ is \emph{individually rational} for player $P_i$ if:
	\[
	\mathbb{E}[V_i(O(P_i)) \mid P_i \in K] \geq \mathbb{E}[V_i(O(P_i)) \mid \mathcal{K}_0]
	\]
	
\end{definition}

That is, membership weakly improves $P_i$'s expected outcome versus singleton play.

\begin{definition}[Core stability]
	A coalition structure $\mathcal{K}$ is \emph{core stable} if no subset $S \subseteq P$ can deviate to form a new coalition and make all members of $S$ strictly better off.
\end{definition}

\begin{conjecture}[Non-empty core]
	Under standard gift exchange rules with correlated valuations, the coalition formation game has a non-empty core. In particular, family-sized coalitions ($|K| \in \{2, 3, 4\}$) among players with correlated preferences are stable.
\end{conjecture}

In empirical settings, coalitions may not be announced, or even found out! Observable signatures include:

\begin{itemize}
	\item asymmetric stealing rates may be observed as $H_{ij} \ll H_{ik}$ for $j \in K_i$, $k \notin K_i$;
	\item laundering patterns follow as gift paths traverse coalition members;
	\item information leakage is most damning, as coalition members select better wrapped gifts.
\end{itemize}

This lends itself nicely to a graph structure. We only define here, and leave for future work to explore this formalization.

\begin{definition}[Coalition graph]
	The \emph{empirical coalition graph} $G_K = (P, E_K)$ has edge $(P_i, P_j) \in E_K$ if the observed steal rate $H_{ij}/n_i^{\text{steal}}$ is significantly below the null expectation under singleton play.
\end{definition}

\subsection{The Own-gift prohibition}

A common, but frequently unenforced, rule prohibits players from choosing and/or ending up with the gift they brought. In many groups, players are expected to avoid this; the prohibition may take any of several forms:

\begin{enumerate}
	\item a player \emph{cannot} open or steal their own gift; if displaced and only their own gift remains stealable, they must open from the wrapped pile;
	
	\item a player \emph{should not} take their own gift, but the rules do not prevent it. Violation incurs social disapproval, but no real mechanical consequence;
	
	\item the rule exists in principle but is rarely invoked, especially when gifts are anonymous or players forget who brought what in the first place.
\end{enumerate}

In practice, players occasionally end up with their own gift through complex stealing chains or late-game necessity, prompting mild embarrassment or entertainment, but not disqualification.

Let us formalize. Let $B: P \to G$ denote the \emph{brought-by} function, where $B(P_i) = G_j$ indicates that player $P_i$ contributed gift $G_j$.

\begin{definition}[Own-gift constraint]
	The \emph{hard own-gift constraint} modifies valid actions as follows:
	\begin{itemize}
		\item $\textsc{Open}(G_j)$ is invalid for $P_i$ if $B(P_i) = G_j$;
		\item $\textsc{Steal}(P_m)$ is invalid for $P_i$ if $B(P_i) = O(P_m)$.
	\end{itemize}
\end{definition}

\begin{definition}[Own-gift penalty]
	The \emph{soft own-gift constraint} is a penalty that adds a social cost term:
	\[
	c_{\text{own}}(P_i) = 
	\begin{cases}
		\kappa & \text{if } O(P_i) = B(P_i) \text{ at game end} \\
		0 & \text{otherwise}
	\end{cases}
	\]
	where $\kappa > 0$ represents a measure of embarrassment or public shame of ``getting your own gift back''.
\end{definition}

The own-gift constraint interacts with other game mechanics. Under the hard constraint, each player has one fewer valid steal target and (potentially) one fewer valid open choice; this slightly increases the probability of being forced to open rather than steal. If players know who brought which gift, the own-gift constraint reveals information for free: observing that $P_i$ \emph{could} steal $G_j$ but does not may signal that $B(P_i) = G_j$, or, simply, that $P_i$ does not want $G_j$. With the hard constraint and small $n$, edge cases arise, in that if a displaced player's only valid steal target is their own gift and no wrapped gifts remain, the game reaches an impasse, and rules should allow in this case an own-gift acquisition in such cases. Finally, if players anticipate the constraint, they may strategically bring gifts they do not want, knowing they cannot end up with them, a humorous (or, perhaps, machiavellian) inversion to ``bring something you would want yourself''.

We omitted the own-gift constraint from our primary analysis for three reasons:

\begin{enumerate}
	\item the rule ranges from hard constraint to ignored entirely across groups, making it difficult to model universally;
	
	\item the constraint requires tracking $B(P_i)$ for all players, adding state complexity; in anonymous exchanges, this information may not exist;
	
	\item in simulations with $n \geq 8$, the probability of ending up with one's own gift by chance is low enough ($\approx 1/n$ under random assignment). 
\end{enumerate}

A complete treatment should parameterize enforcement ($\kappa = 0$ for unenforced, $\kappa = \infty$ for hard constraint) and analyze how the constraint affects equilibrium play; we leave this for future work.

\subsection{Mechanism design}

The current analysis takes the gift exchange game mechanism as given. A natural extension asks: what mechanisms optimize given objectives?

\begin{question}[Efficient mechanism]
	Does there exist a stealing-based gift exchange mechanism that achieves Pareto efficiency \cite{arunachaleswaran2024pareto} (each gift goes to the player who values it most) in expectation, without requiring truthful revelation of preferences?
\end{question}

\begin{question}[Envy minimization]
	Can we design a mechanism that minimizes ex-post envy \cite{feldman2024breaking}, defined as:
	\[
	\text{Envy} = \sum_{i} \max_{j \neq i} \big( V_i(O(P_j)) - V_i(O(P_i)) \big)^+
	\]
	The final swap partially addresses this for Player 1; can it be generalized?
\end{question}

\begin{question}[Optimal Stealing Limits]
	Given $n$ players with valuation model $\mathcal{M}$, what values of $(\ell_{\text{round}}, \ell_{\text{total}})$ maximize social welfare? Minimize game duration? Balance efficiency and entertainment?
\end{question}

\subsection{Empirical validation}

The simulation results rely on assumed behavioral parameters. Validation requires: controrlled gift exchanges with induced valuations, measuring actual steal rates, chain lengths, and strategy choices; systamatic observational data from real games, with post-game surveys eliciting valuations and satisfaction; structural estimation of social cost parameters $(c_0, \alpha, \beta)$ and adaptive strategy coefficients $(\lambda_1, \lambda_2, \lambda_3)$ from choice data.

\begin{question}[Behavioral validity]
	Do the logit choice model and linear social cost function adequately describe human behavior, or are threshold effects, reference dependence, or other behavioral phenomena empirically significant?
\end{question}

\subsection{Learning and repeated play}

Some groups play the gift exchange game more frequently than others; this introduces dynamics:

\begin{definition}[Reputation across games]
	Let $H_i^{(t)}$ denote player $P_i$'s stealing history through game $t$. A \emph{reputation function} $\rho_i^{(t)} = f(H_i^{(1)}, \ldots, H_i^{(t)})$ summarizes how others perceive $P_i$ entering game $t + 1$.
\end{definition}

\begin{question}[Reputation equilibrium]
	In a repeated game with stable player set, does play converge to an equilibrium? Do aggressive players develop bad reputations that constrain future behavior?
\end{question}

\begin{question}[Strategy learning]
	If players update strategies based on outcomes, what learning dynamics emerge? Do groups converge toward cooperative norms, aggressive escalation, or cycling?
\end{question}

\subsection{Incomplete information about player types}

We assumed players know each other's utility functions (or at least their strategic tendencies). Relaxing this induces a space.

\begin{definition}[Type space]
	Each player $P_i$ has a \emph{type} $\theta_i \in \Theta$ that decorates valuation, drawn from a common prior $F(\theta)$. Types determine valuations and strategic behavior:
	\[
	V_i = V(\cdot; \theta_i), \quad \sigma_i = \sigma(\cdot; \theta_i)
	\]
\end{definition}

\begin{question}[Bayesian equilibrium]
	Under type uncertainty, what constitutes a Bayesian equilibrium in the gift exchange game? How do players update beliefs about opponents' types from observed actions?
\end{question}

\section{Conclusion}\label{sec:conclusion}

We have provisions a formalization of the gift exchange game, proving basic properties (chain termination, end-state bijection) and deriving trajectory counts that grow superexponentially in player number. The decorated game framework is parameterized by partial information, social costs, adaptive dynamics, and biased selection, and permits controlled analysis of behavioral factors typically ignored in strategic models.

Social costs are the primary regulator of stealing behavior, reducing theft rates by 27--48\% depending on valuation structure; implicit social friction in real exchanges substantially moderates the aggressive play that purely strategic analysis would predict. Adaptive dynamics favor passive strategies: satisfaction dampening and frustration feedback compress the performance gap between aggressive and passive play, consistent with ecological rationality perspectives where simple heuristics outperform optimization in socially embedded contexts. Value correlation amplifies all behavioral effects; when consensus exists about gift quality, competition intensifies and the scope for strategic differentiation expands.

Partial information, despite its theoretical grounding in Bayesian belief updating and CARA risk adjustment, contributes minimally to observed dynamics. The asymmetry between uncertain opening and certain stealing slightly favors theft, opposite to our initial hypothesis. This null result is itself informative: uncertainty about wrapped gifts is not the primary driver of strategic complexity in these games.

The positional analysis confirms intuition: first-player advantage is insurmountable (the final swap guarantees Seat 1 can claim the best gift), while Seat 2 occupies the worst position across all configurations; no behavioral feature reverses these orderings.

Several limitations bound interpretation. The linear social cost function and logit choice model are tractable approximations; threshold effects, reference dependence, and richer affective dynamics remain unexplored. Strategy assignment was uniform random; real populations exhibit correlated types and learning. The simulation parameters were calibrated for plausible behavior, and are not estimated from choice data. Empirical validation, of which includes (but certainly not limited to) controlled exchanges with induced valuations, or structural estimation from observational data, remains necessary.

We hope that the formalization explored here opens directions we have only sketched: coalition formation among family subgroups; mechanism design for efficiency or envy minimization; learning dynamics in repeated play; Bayesian equilibrium under type uncertainty. The gift exchange game, for all its holiday frivolity, provides a compact laboratory for studying competition, social preferences, and strategic behavior under uncertainty.

Finally, let fun be not the least of things. The gift exchange game is a social game, where the primary objective is to have fun. Strategic elements should never supplant the game's fundamental purpose of creating an entertaining gift-giving experience. Formalities here should always come a distant second to having fun with friends and family. Payers should feel free to make suboptimal choices that enhance the social experience. It is not that these are all mutually exclusive; understanding the game's structure may encourage an appreciation and spark interesting discussions, but if competition compromises fun, then the former should be relaxed for the sake of latter. Have fun; please do not be \emph{that} guy.

\newpage
%\printbibliography
\bibliography{references}

\appendix

\section{Notation reference}

Here we tabulate the notation and parameters used throughout our work here.

\begin{table}[H]
	\centering
	\caption{Summary of notation used throughout}
	\begin{tabular}{cl}
		\toprule
		\bfseries Symbol & \bfseries Meaning \\
		\midrule
		$P = \{P_1, \ldots, P_n\}$ & player set \\
		$G = \{G_1, \ldots, G_n\}$ & gift set \\
		$\owners: P \rightarrow G \cup \{\bot\}$ & ownership function \\
		$\status: G \rightarrow \{\textsc{wrapped}, \textsc{opened}\}$ & gift status function \\
		$\locked_k \subseteq G$ & locked gifts in round $k$ \\
		$V_i: G \rightarrow [0,1]$ & player $i$'s valuation function \\
		$\valmat \in [0,1]^{n \times n}$ & valuation matrix \\
		$q_j \in [0,1]$ & objective quality of gift $j$ \\
		$a_j \in [0,1]$ & appearance signal of gift $j$ \\
		$\hat{V}_i(G_j)$ & perceived value under partial information \\
		$c(P_i, P_m)$ & social cost of $P_i$ stealing from $P_m$ \\
		$\phi_i \in [0,1]$ & frustration level of player $i$ \\
		$U_i^{\text{steal}}(P_m)$ & net utility of $P_i$ stealing from $P_m$ \\
		$\Features \subseteq \{\PI, \SC, \AD, \BS\}$ & enabled feature set \\
		$\mathcal{G}_b$ & base game ($\Features = \emptyset$) \\
		$\mathcal{G}_d$ & decorated game \\
		$\Delta_f(Y)$ & marginal effect of feature $f$ on outcome $Y$ \\
		$I_{f_1, f_2}(Y)$ & pairwise interaction effect \\
		\bottomrule
	\end{tabular}
	
\end{table}

\begin{table}[h]
	\centering
	\caption{Model parameters by feature}
	\begin{tabular}{cll}
		\toprule
		\bfseries Parameter & \bfseries Feature & \bfseries Description \\
		\midrule
		$\rho$ & $\mathcal{M}_{\text{cor}}$ & correlation strength \\
		$\sigma$ & $\mathcal{M}_{\text{neg}}$ & noise standard deviation \\
		$\sigma_a$ & $\PI$ & appearance signal noise \\
		$\mu_0$ & $\PI$ & prior mean for wrapped gifts \\
		$\delta$ & $\PI$ & uncertainty penalty \\
		$c_0$ & $\SC$ & base stealing cost \\
		$\alpha$ & $\SC$ & repeat-offense multiplier \\
		$\beta$ & $\SC$ & reputation decay rate \\
		$\gamma$ & $\SC$ & frustration gain per theft \\
		$\gamma'$ & $\SC$ & frustration decay rate \\
		$\lambda_1$ & $\AD$ & phase aggression coefficient \\
		$\lambda_2$ & $\AD$ & frustration aggression coefficient \\
		$\lambda_3$ & $\AD$ & satisfaction dampening coefficient \\
		$\tau$ & $\BS$ & selection temperature \\
		\bottomrule
	\end{tabular}
\end{table}

\section{Complexity}

We report the complexities.

\subsection{Time complexity}

The time complexity of a single player decision is $O(n)$:

\begin{itemize}
	\item computing valid steal targets is $O(n)$;
	\item evaluating net utility for each target is $O(n) \times O(1) = O(n)$;
	\item finding maximum is $O(n)$.
\end{itemize}

The worst-case time complexity of a stealing chain is $O(n^2)$:

\begin{itemize}
	\item maximum chain length is $O(n)$ (follows from Lemma~\ref{lem:termination});
	\item decisions per chain $O(n)$;
	\item cost per decision $O(n)$
\end{itemize}

The time complexity of a single game is $O(n^3)$:
\begin{itemize}
	\item $n$ rounds;
	\item each round contributes $O(n^2)$ (one chain);
	\item total follows: $O(n \cdot n^2) = O(n^3)$.
\end{itemize}

For $g$ games with $n$ players, total simulation complexity is $O(gn^3)$.

\subsection{Space complexity}

Per-game space complexity is $O(n^2)$:
\begin{itemize}
	\item valuation matrix $\valmat$ $O(n^2)$;
	\item social state (all players) is $O(n \times n) = O(n^2)$ (essentially history dictionaries);
	\item game state is $O(n)$
\end{itemize}

\subsection{Base and decorated game}

$\mathcal{G}_b$ and $\mathcal{G}_d$ have identical asymptotic complexity; decorations add constant-factor overhead:
\begin{itemize}
	\item $\PI$ gives $O(1)$ per valuation lookup;
	\item $\SC$ gives $O(1)$ per steal (dictionary operations);
	\item $\AD$ gives $O(n)$ per decision (phase calculation);
	\item $\BS$ gives $O(n)$ per open (softmax computation).
\end{itemize}

Total overhead: approximately $2$--$3\times$ in practice.

\section{Tabulated results}

For completeness, we report the tabulated results of our simulation experiment.

\begin{table}[H]
	\centering
	\caption{Feature effect comparison (independent valuation)}
	\begin{tabular}{lccccc}
		\toprule
		\bfseries Configuration & \bfseries Steals/Game & \bfseries Chain length & \bfseries Seat 1 & \bfseries Seat 2 & \bfseries Seat 29 \\
		\midrule
		BASE & 57.017 & 2.993 & 0.966 & 0.757 & 0.814 \\
		PI & 57.017 & 2.993 & 0.966 & 0.757 & 0.814 \\
		SC & 46.359 & 2.443 & 0.934 & 0.730 & 0.819 \\
		AD & 54.524 & 3.189 & 0.967 & 0.774 & 0.815 \\
		BS & 58.311 & 3.050 & 0.967 & 0.763 & 0.813 \\
		PI+SC & 46.359 & 2.443 & 0.934 & 0.730 & 0.819 \\
		PI+AD & 54.524 & 3.189 & 0.967 & 0.774 & 0.815 \\
		PI+BS & 58.311 & 3.050 & 0.967 & 0.763 & 0.813 \\
		SC+AD & 53.103 & 3.115 & 0.929 & 0.777 & 0.816 \\
		SC+BS & 46.395 & 2.444 & 0.935 & 0.713 & 0.812 \\
		AD+BS & 53.762 & 3.171 & 0.967 & 0.770 & 0.804 \\
		PI+SC+AD & 53.103 & 3.115 & 0.929 & 0.777 & 0.816 \\
		PI+SC+BS & 46.395 & 2.444 & 0.935 & 0.713 & 0.812 \\
		PI+AD+BS & 53.762 & 3.171 & 0.967 & 0.770 & 0.804 \\
		SC+AD+BS & 52.921 & 3.104 & 0.928 & 0.771 & 0.821 \\
		FULL & 52.921 & 3.104 & 0.928 & 0.771 & 0.821 \\
		\bottomrule
	\end{tabular}
\end{table}

\begin{table}[H]
	\centering
	\caption{Strategy performance by configuration (independent valuation)}
	\begin{tabular}{lccccc}
		\toprule
		\bfseries Config & \texttt{always\_open} & \texttt{always\_steal} & \texttt{coin\_flip} & \texttt{mean\_based} & \texttt{threshold} \\
		\midrule
		BASE & 0.5160 & 0.9129 & 0.7174 & 0.9147 & 0.9141 \\
		PI & 0.5160 & 0.9129 & 0.7174 & 0.9147 & 0.9141 \\
		SC & 0.5136 & 0.9165 & 0.7244 & 0.8866 & 0.8657 \\
		AD & 0.6464 & 0.8940 & 0.8020 & 0.8434 & 0.8011 \\
		BS & 0.5146 & 0.9134 & 0.7122 & 0.9130 & 0.9150 \\
		PI+SC & 0.5136 & 0.9165 & 0.7244 & 0.8866 & 0.8657 \\
		PI+AD & 0.6464 & 0.8940 & 0.8020 & 0.8434 & 0.8011 \\
		PI+BS & 0.5146 & 0.9134 & 0.7122 & 0.9130 & 0.9150 \\
		SC+AD & 0.6501 & 0.8974 & 0.8052 & 0.8455 & 0.8050 \\
		SC+BS & 0.5097 & 0.9180 & 0.7152 & 0.8874 & 0.8667 \\
		AD+BS & 0.6363 & 0.8939 & 0.7973 & 0.8399 & 0.7976 \\
		PI+SC+AD & 0.6501 & 0.8974 & 0.8052 & 0.8455 & 0.8050 \\
		PI+SC+BS & 0.5097 & 0.9180 & 0.7152 & 0.8874 & 0.8667 \\
		PI+AD+BS & 0.6363 & 0.8939 & 0.7973 & 0.8399 & 0.7976 \\
		SC+AD+BS & 0.6404 & 0.8979 & 0.8074 & 0.8423 & 0.8025 \\
		FULL & 0.6404 & 0.8979 & 0.8074 & 0.8423 & 0.8025 \\
		\bottomrule
	\end{tabular}
\end{table}

\begin{table}[H]
	\centering
	\caption{Feature effect comparison (correlated valuation)}
	\begin{tabular}{lccccc}
		\toprule
		\bfseries Configuration & \bfseries Steals/Game & \bfseries Chain length & \bfseries Seat 1 & \bfseries Seat 2 & \bfseries Seat 29 \\
		\midrule
		BASE & 87.572 & 4.542 & 1.000 & 0.673 & 0.907 \\
		PI & 87.572 & 4.542 & 1.000 & 0.673 & 0.907 \\
		SC & 52.110 & 2.716 & 0.969 & 0.595 & 0.907 \\
		AD & 65.141 & 3.819 & 1.000 & 0.648 & 0.899 \\
		BS & 97.418 & 5.052 & 1.000 & 0.699 & 0.863 \\
		PI+SC & 52.110 & 2.716 & 0.969 & 0.595 & 0.907 \\
		PI+AD & 65.141 & 3.819 & 1.000 & 0.648 & 0.899 \\
		PI+BS & 97.418 & 5.052 & 1.000 & 0.699 & 0.863 \\
		SC+AD & 61.208 & 3.598 & 0.951 & 0.651 & 0.897 \\
		SC+BS & 53.436 & 2.778 & 0.966 & 0.606 & 0.863 \\
		AD+BS & 67.018 & 3.933 & 1.000 & 0.670 & 0.853 \\
		PI+SC+AD & 61.208 & 3.598 & 0.951 & 0.651 & 0.897 \\
		PI+SC+BS & 53.436 & 2.778 & 0.966 & 0.606 & 0.863 \\
		PI+AD+BS & 67.018 & 3.933 & 1.000 & 0.670 & 0.853 \\
		SC+AD+BS & 62.242 & 3.649 & 0.931 & 0.664 & 0.849 \\
		FULL & 62.242 & 3.649 & 0.931 & 0.664 & 0.849 \\
		\bottomrule
	\end{tabular}
\end{table}

\begin{table}[H]
	\centering
	\caption{Strategy performance by configuration (correlated valuation)}
	\begin{tabular}{lccccc}
		\toprule
		\bfseries Configuration & \texttt{always\_open} & \texttt{always\_steal} & \texttt{coin\_flip} & \texttt{mean\_based} & \texttt{threshold} \\
		\midrule
		BASE & 0.5768 & 0.9162 & 0.6935 & 0.9261 & 0.9260 \\
		PI & 0.5768 & 0.9162 & 0.6935 & 0.9261 & 0.9260 \\
		SC & 0.5978 & 0.9255 & 0.7177 & 0.8129 & 0.8406 \\
		AD & 0.6561 & 0.8935 & 0.7878 & 0.8303 & 0.7884 \\
		BS & 0.5425 & 0.9327 & 0.6602 & 0.9425 & 0.9412 \\
		PI+SC & 0.5978 & 0.9255 & 0.7177 & 0.8129 & 0.8406 \\
		PI+AD & 0.6561 & 0.8935 & 0.7878 & 0.8303 & 0.7884 \\
		PI+BS & 0.5425 & 0.9327 & 0.6602 & 0.9425 & 0.9412 \\
		SC+AD & 0.6571 & 0.8942 & 0.7858 & 0.8296 & 0.7876 \\
		SC+BS & 0.5782 & 0.9444 & 0.6992 & 0.8076 & 0.8438 \\
		AD+BS & 0.6297 & 0.9043 & 0.7776 & 0.8293 & 0.7792 \\
		PI+SC+AD & 0.6571 & 0.8942 & 0.7858 & 0.8296 & 0.7876 \\
		PI+SC+BS & 0.5782 & 0.9444 & 0.6992 & 0.8076 & 0.8438 \\
		PI+AD+BS & 0.6297 & 0.9043 & 0.7776 & 0.8293 & 0.7792 \\
		SC+AD+BS & 0.6348 & 0.9022 & 0.7766 & 0.8288 & 0.7753 \\
		FULL & 0.6348 & 0.9022 & 0.7766 & 0.8288 & 0.7753 \\
		\bottomrule
	\end{tabular}
\end{table}

\begin{table}[H]
	\centering
	\caption{Feature effect comparison (negative correlated valuation)}
	\begin{tabular}{lccccc}
		\toprule
		\bfseries Configuration & \bfseries Steals/Game & \bfseries Chain length & \bfseries Seat 1 & \bfseries Seat 2 & \bfseries Seat 29 \\
		\midrule
		BASE & 61.371 & 3.217 & 0.995 & 0.742 & 0.840 \\
		PI & 61.371 & 3.217 & 0.995 & 0.742 & 0.840 \\
		SC & 48.708 & 2.570 & 0.957 & 0.694 & 0.847 \\
		AD & 55.731 & 3.280 & 0.995 & 0.745 & 0.831 \\
		BS & 61.292 & 3.219 & 0.995 & 0.737 & 0.836 \\
		PI+SC & 48.708 & 2.570 & 0.957 & 0.694 & 0.847 \\
		PI+AD & 55.731 & 3.280 & 0.995 & 0.745 & 0.831 \\
		PI+BS & 61.292 & 3.219 & 0.995 & 0.737 & 0.836 \\
		SC+AD & 54.734 & 3.211 & 0.952 & 0.743 & 0.837 \\
		SC+BS & 48.446 & 2.559 & 0.961 & 0.673 & 0.839 \\
		AD+BS & 56.089 & 3.299 & 0.996 & 0.736 & 0.831 \\
		PI+SC+AD & 54.734 & 3.211 & 0.952 & 0.743 & 0.837 \\
		PI+SC+BS & 48.446 & 2.559 & 0.961 & 0.673 & 0.839 \\
		PI+AD+BS & 56.089 & 3.299 & 0.996 & 0.736 & 0.831 \\
		SC+AD+BS & 54.598 & 3.205 & 0.948 & 0.750 & 0.834 \\
		FULL & 54.598 & 3.205 & 0.948 & 0.750 & 0.834 \\
		\bottomrule
	\end{tabular}
\end{table}

\begin{table}[H]
	\centering
	\caption{Strategy performance by configuration (negative correlated valuation)}
	\begin{tabular}{lccccc}
		\toprule
		\bfseries Configuration & \texttt{always\_open} & \texttt{always\_steal} & \texttt{coin\_flip} & \texttt{mean\_based} & \texttt{threshold} \\
		\midrule
		BASE & 0.5250 & 0.9135 & 0.7035 & 0.9154 & 0.9154 \\
		PI & 0.5250 & 0.9135 & 0.7035 & 0.9154 & 0.9154 \\
		SC & 0.5223 & 0.9158 & 0.7051 & 0.8866 & 0.8594 \\
		AD & 0.6301 & 0.8904 & 0.7900 & 0.8310 & 0.7894 \\
		BS & 0.5197 & 0.9122 & 0.6965 & 0.9101 & 0.9147 \\
		PI+SC & 0.5223 & 0.9158 & 0.7051 & 0.8866 & 0.8594 \\
		PI+AD & 0.6301 & 0.8904 & 0.7900 & 0.8310 & 0.7894 \\
		PI+BS & 0.5197 & 0.9122 & 0.6965 & 0.9101 & 0.9147 \\
		SC+AD & 0.6319 & 0.8941 & 0.7917 & 0.8374 & 0.7905 \\
		SC+BS & 0.5217 & 0.9130 & 0.7011 & 0.8815 & 0.8525 \\
		AD+BS & 0.6269 & 0.8890 & 0.7820 & 0.8290 & 0.7866 \\
		PI+SC+AD & 0.6319 & 0.8941 & 0.7917 & 0.8374 & 0.7905 \\
		PI+SC+BS & 0.5217 & 0.9130 & 0.7011 & 0.8815 & 0.8525 \\
		PI+AD+BS & 0.6269 & 0.8890 & 0.7820 & 0.8290 & 0.7866 \\
		SC+AD+BS & 0.6288 & 0.8890 & 0.7857 & 0.8321 & 0.7848 \\
		FULL & 0.6288 & 0.8890 & 0.7857 & 0.8321 & 0.7848 \\
		\bottomrule
	\end{tabular}
\end{table}

\end{document}